\begin{document}

\newtheorem{rhp}{Riemann-Hilbert Problem}
\newtheorem{theorem}{Theorem}
\newtheorem{lemma}{Lemma}
\newtheorem{proposition}{Proposition}
\newtheorem{corollary}{Corollary}
\newtheorem{conjecture}{Conjecture}
\newtheorem{remark}{Remark}

\newcommand{\F}[1]{{\color{blue}#1}}
\newcommand{\Li}[1]{{\color{red} #1}}
\newcommand{\B}[1]{{\color{cyan} #1}}

\newcommand{\ii}{\mathrm{i}}
\newcommand{\ee}{\mathrm{e}}
\newcommand{\dd}{\mathrm{d}}
\newcommand{\T}{\mathrm{T}}

\newcommand{\lm}[1]{{\color{red} #1}}
\title{Darboux transformation and solitonic solution to the coupled complex short pulse equation}
\author{Baofeng Feng$^{1}$}\email{baofeng.feng@utrgv.edu}
\author{Liming Ling$^{2}$}\email{linglm@scut.edu.cn; lingliming@qq.com}

\address{$^1$ School of Mathematical and Statistical Sciences, The University of Texas Rio Grande Valley, Edinburg TX, 78541-2999, USA}
\address{$^2$ School of Mathematics, South China University of Technology, Guangzhou 510640, China}

\date{\today}

\begin{abstract}
The Darboux transformation (DT) for the coupled complex short pulse (CCSP) equation is constructed through the loop group method. The DT is then utilized to construct various exact solutions including bright soliton, dark-soliton, breather and rogue wave solutions to the CCSP equation.  In case of vanishing boundary condition (VBC), we perform the inverse scattering analysis to understand the soliton solution better. Breather and rogue wave solutions are constructed in case of non-vanishing boundary condition (NVBC). Moreover, we conduct a modulational instability (MI) analysis based on the method of squared eigenfunctions, whose result confirms the condition for the existence of rogue wave solution.

{\bf{Keywords:}}  solitons, rogue waves, coupled complex short pulse equation, generalized Darboux transformations, modulational instability;

{\bf{Mathematics Subject Classification:}} 35Q53, 37K10, 37K20
\end{abstract}

\maketitle

\section{Introduction}
The nonlinear Schr\"odinger (NLS) equation plays a crucial role in nonlinear waves since it can be used to describe the evolution of slowly varying packets of quasi-monochromatic waves in weakly nonlinear dispersive media. {Since} the NLS equation is an integrable equation, {various exact solutions can be derived through some well-known methods such as the inverse scattering transform, Hirota bilinear method and Darboux transformation method. In case of} the vanishing boundary condition (VBC), the NLS equation admits multi-bright \cite{AblowitzC91}, breather, higher order soliton \cite{Yang10} and infinite order type solitonic solutions \cite{BilmanB-19,BilmanBW-21, BilmanM17,BilmanLM18,LingZ-21}. In the framework of the inverse scattering transform, the long time asymptotic behavior of solutions can be analyzed by the nonlinear steepest descent method, or the so-called Deift-Zhou method, \cite{Deift93,Zhou89} or D-bar method \cite{McLaughlinM06}. In other words, for arbitrary initial data in suitable functional space without the appearance of spectral singularity \cite{Klaus03}, the solutions will evolve into multi-solitons and the additional dispersive waves (the breather and higher order soliton solutions are asymptotic unstable), i.e. the soliton resolution conjecture \cite{BorgheseJM18}. For non-vanishing boundary condition (NVBC), the NLS equation admits breather \cite{Akhmediev86,Kuznetsov77,Ma79,XuGCZK19}, rogue wave solutions \cite{Peregrine83} in the focusing case and dark soliton solution in the defocusing case. The rogue wave and Akhmediev breather in the focusing case are related to the modulational instability (MI) \cite{AkhmedievAS09,BenjaminF67}. {For the defocusing case with NVBC}, the plane wave is modulational stable, so the defocusing NLS equation does not admit rogue wave or Akhmediev breather solutions.

The coupled NLS (CNLS) equation also known as Manakov system \cite{PrinariAB16}, {can be used as a model} in nonlinear birefringent optics and for two modes of Bose-Einstein condensate \cite{Yang10}. The localized wave solutions in the coupled NLS equation are more {complicated and richer than the} scalar one \cite{DegasperisLS18,LingZ19, PrinariL18,PrinariO18}. There are some differences {in compared with} the scalar one. In the focusing case of VBC, in addition to the bright solitons, there exists double-hump soliton solution \cite{StalinRSL19,QinZL-19}. If one component has a VBC and the other component possesses NVBC, there exist bright-dark soliton, rogue wave, breather solutions and the combinations thereof. When both components have NVBCs, there are two distinct cases. If they share the same wave-number, the plane wave background can be removed by a Galilean shift and in this case the breather, rogue wave and their superposition, all degenerate into solutions of the scalar NLS equation. The other case is that the two plane waves have distinct wave-numbers, in which there exist genuinely coupled breather and rogue wave solutions. {In regard to} the modulational instability, there exist two different branches which correspond to two different types of Akhemediev breather or rogue wave solutions. In addition to the dark-dark and bright-dark solitons, the defocusing coupled NLS also admits Akhmediev breather and rogue wave solutions {for special choices of the wave-number and frequency}, and in this case the background is modulationally unstable because of the effect of cross-phase modulation \cite{BaronioCDLOW14}. However the MI has only one branch.

In the regime of ultra-short pulses where the width of the optical pulse is of the order of femtoseconds ($10^{-15}$s), the propagation of ultra-short wave packets can be described by the complex short pulse (CSP) equation, which was proposed  for the focusing and defocusing case \cite{Feng15,FengLZ16}.  A coupled complex short pulse (CCSP) equation 
\begin{equation}\label{eq:ccsp}
\begin{split}
 q_{1,xt}+q_1+\frac{\kappa}{2}((|q_1|^2+|q_2|^2)q_{1,x})_x=&0, \\
 q_{2,xt}+q_2+\frac{\kappa}{2}((|q_1|^2+|q_2|^2)q_{2,x})_x=&0, \\
\end{split}
\end{equation}
was also proposed to describe the propagation of ultra short pulses in the birefringent fibers, where the symbol $\kappa=\pm1$ corresponds to the focusing or defocusing case respectively.
{In the original paper \cite{Feng15}, the author only considered the focusing case. The defocusing case can be naturally derived  along the way of the latter reference} \cite{FengLZ16}. As far as we know, the study of the CCSP equation has only a few. Most recently, an inverse scattering analysis is done for the CCSP equation with only the vanishing boundary condition \cite{GPTFccsp21}. 
Therefore, there is no systematic analysis for various soliton solutions for the CCSP equation, which motivates the present study.

The CCSP equation \eqref{eq:ccsp} admits the following Lax pair
\begin{equation}\label{eq:ccsp-lp}
\begin{split}
 \Psi_{x}=&\mathbf{X}(x,t;\lambda)\Psi,\,\,\,\,\mathbf{X}(x,t;\lambda)=\frac{1}{\lambda}\begin{bmatrix}
-\ii \mathbb{I}_2 & -\kappa\mathbf{Q}^{\dag}_x \\[5pt]
\mathbf{Q}_x & \ii\mathbb{I}_2 \\
 \end{bmatrix},\,\,\,\,\mathbf{Q}=\begin{bmatrix}
q_1(x,t)& q_2(x,t)\\[5pt]
q_2^*(x,t) & -q_1^*(x,t) \\
\end{bmatrix},  \\
 \Psi_{t}=&\mathbf{T}(x,t;\lambda)\Psi,\,\,\,\,\mathbf{T}(x,t;\lambda)=\begin{bmatrix}
-\frac{\ii}{4} \lambda\mathbb{I}_2+\frac{\ii\kappa}{2\lambda}\mathbf{Q}\mathbf{Q}^{\dag} & -\frac{\kappa}{2}\mathbf{Q}^{\dag}+\frac{1}{2\lambda}\mathbf{Q}^{\dag}\mathbf{Q}\mathbf{Q}_x^{\dag} \\[5pt]
-\frac{\ii}{2}\mathbf{Q}-\frac{\kappa}{2\lambda}\mathbf{Q}\mathbf{Q}^{\dag}\mathbf{Q}_x &
\frac{\ii}{4} \lambda\mathbb{I}_2-\frac{\ii\kappa}{2\lambda}\mathbf{Q}^{\dag}\mathbf{Q} \\
 \end{bmatrix},   \\
\end{split}
\end{equation}
where $\mathbb{I}_2$ denotes the $2\times 2$ identity matrix. The CCSP equation \eqref{eq:ccsp} admits an alternative form in conservation law
\begin{equation}\label{eq:con}
\left(\rho^{-1}\right)_t+\frac{\kappa}{2}\left((|q_1|^2+|q_2|^2)\rho^{-1}\right)_x=0\,,
\end{equation}
where $\rho^{-1}=\sqrt{1+\kappa|q_{1,x}|^2+\kappa|q_{2,x}|^2}$. Thus we can define a hodograph transformation
\begin{equation}\label{eq:hodo}
\dd y=\rho^{-1} \dd x-\frac{\kappa}{2}\rho^{-1} (|q_1|^2+|q_2|^2)\dd t,\,\,\,\,\, \dd s=-\dd t,
\end{equation}
which converts the Lax pair \eqref{eq:ccsp-lp} into $\Phi(y,s)=\Psi(x,t)$:
\begin{equation}\label{eq:laxcd}
\begin{split}
 \Phi_y&=\mathbf{U}(y,s;\lambda) \Phi,\,\,\,\,\,\, \mathbf{U}(y,s;\lambda)=-\ii\frac{\rho(y,s)}{\lambda}\Sigma_3-\frac{1}{\lambda}\Sigma_3\mathbf{V}_{0,y}, \\
 \Phi_s&=\mathbf{V}(y,s;\lambda) \Phi,\,\,\,\,\,\,\mathbf{V}(y,s;\lambda)=\frac{\ii}{4}\lambda\Sigma_3+\frac{\ii}{2}\mathbf{V}_0, \\
\end{split}
\end{equation}
where
\[
\Sigma_3={\rm diag}(1,1,-1,-1),\,\,\,\,\,\,\,\, \mathbf{V}_0=\begin{bmatrix}
\mathbf{0} & \kappa\mathbf{Q}^{\dag} \\
\mathbf{Q} & \mathbf{0} \\
\end{bmatrix},\,\,\,\,\,\,\,\, \mathbf{Q}=\begin{bmatrix}
q_1(y,s)& q_2(y,s)\\
q_2^*(y,s) & -q_1^*(y,s) \\
\end{bmatrix}.
\]
The compatibility condition  gives a two-component coupled complex dispersionless (TCCD) equation:
\begin{equation}\label{eq:cd}
\begin{split}
 q_{i,ys}&=\rho q_i,  \quad i=1,2 \\
 \rho_{s}&=-\frac{\kappa}{2}(|q_1|^2+|q_2|^2)_y.
\end{split}
\end{equation}
Conversely, by the third equation of \eqref{eq:cd}, we define the following inverse hodograph transformation:
\begin{equation}\label{eq:inv-hodo}
\dd x=\rho\,\dd y-\frac{\kappa}{2}(|q_1|^2+|q_2|^2)\dd s,\,\,\,\,\, \dd t=-\dd s,
\end{equation}
which converts the Lax pair \eqref{eq:laxcd} and equations \eqref{eq:cd} into Lax pair \eqref{eq:ccsp} and equations \eqref{eq:ccsp-lp} respectively. Thus {the CCSP equation \eqref{eq:ccsp} is equivalent to the TCCD equation \eqref{eq:cd} through the hodograph transformation \eqref{eq:hodo} and \eqref{eq:inv-hodo}, provided that $\rho>0$}.

We illustrate below how to derive the hodograph transformation \eqref{eq:hodo} by the formal scattering and inverse scattering analysis. From the Lax pair \eqref{eq:ccsp-lp}, we can suppose the matrix solutions of Lax pair \eqref{eq:ccsp-lp} possesses the following form:
\begin{equation}\label{eq:ansatz}
\Psi(x,t;\lambda)=\mathbf{m}(x,t;\lambda)\ee^{-\ii(f(x,t)\lambda^{-1}+\frac{\lambda}{4} t) \Sigma_3}
\end{equation}
where $\mathbf{m}(x,t;\lambda)$ is an invertible analytic matrix solution with respect to $\lambda$ in the neighborhood of $\infty$ and $0$: i.e.
\[
\mathbf{m}(x,t;\lambda)=\mathbf{m}_{\infty}(x,t)+\mathbf{m}_{-1}(x,t)\lambda^{-1}+O(\lambda^{-2}),\,\,\,\, \lambda\to\infty,
\]
and
\[
\mathbf{m}(x,t;\lambda)=\mathbf{m}_0(x,t)+\mathbf{m}_1(x,t)\lambda+O(\lambda^2),\,\,\,\, \lambda\to 0.
\]
Inserting the ansatz \eqref{eq:ansatz} into the Lax pair \eqref{eq:ccsp-lp} and expanding them at $\lambda=0$, we obtain
\begin{equation}\label{eq:lead-exp}
\begin{split}
 -\ii f_x\Sigma_3=&[\mathbf{m}_0(x,t)]^{-1}\begin{bmatrix}
-\ii \mathbb{I}_2 & -\kappa\mathbf{Q}^{\dag}_x \\[5pt]
\mathbf{Q}_x & \ii\mathbb{I}_2 \\
 \end{bmatrix}\mathbf{m}_0(x,t), \\
 -\ii f_t\Sigma_3=&-\frac{\kappa}{2}[\mathbf{m}_0(x,t)]^{-1}\begin{bmatrix}
-\ii\mathbf{Q}\mathbf{Q}^{\dag} & -\kappa\mathbf{Q}^{\dag}\mathbf{Q}\mathbf{Q}_x^{\dag} \\[5pt]
\mathbf{Q}\mathbf{Q}^{\dag}\mathbf{Q}_x &
\ii\mathbf{Q}^{\dag}\mathbf{Q} \\
 \end{bmatrix}\mathbf{m}_0(x,t). \\
\end{split}
\end{equation}
Taking the determinant of previous two equations \eqref{eq:lead-exp}, we arrive at
\begin{equation}\label{eq:f-x-t}
f_x^2=1+\kappa(|q_{1,x}|^2+|q_{2,x}|^2),\,\,\,\,\,\, f_t^2=-\frac{\kappa}{2}(|q_1|^2+|q_2|^2)\left[1+\kappa(|q_{1,x}|^2+|q_{2,x}|^2)\right]
\end{equation}
which shows that the function $f(x,t)$ is consistent with $y$ in \eqref{eq:inv-hodo}, i.e. $f(x,t)=y$. Thus we know that the hodograph transformation \eqref{eq:hodo} is necessary in the procedure of solving the CCSP equation \eqref{eq:ccsp}. On the other hand, the solution of Eq. \eqref{eq:ccsp} can be represented as the solutions of equations \eqref{eq:cd} and the inverse hodograph transformation \eqref{eq:inv-hodo}:
\[
x=\int_{(y_0,s_0)}^{(y,s)}\rho(y',s')\dd y'-\frac{\kappa}{2}(|q_1(y',s')|^2+|q_2(y',s')|^2)\dd s',\,\,\,\, t=-s.
\]
If $|q_1(y,s)|^2+|q_2(y,s)|^2\to \frac{1}{4}(\alpha_1^2+\alpha_2^2)$ and $\rho(y,s)\to \frac{\delta}{2}$ as $y\to\pm\infty$, then the above curvilinear integration can be simplified as:
\begin{equation}\label{eq:inv-hodo-1}
x=\frac{\delta}{2}y-\frac{\kappa}{8}(\alpha_1^2+\alpha_2^2)s+\int_{-\infty}^{y}\left[\rho(y',s)-\frac{\delta}{2}\right]\dd y',\,\,\,\, t=-s.
\end{equation}

Localized wave solutions can be {constructed} by using vector solutions of the Lax pair. By choosing different vector solutions, we can obtain the different localized wave solutions. The {key} is to analyze the structure of the localized wave solutions and their mechanism of formation. Thus, for the focusing CCSP equation \eqref{eq:ccsp} on a zero background, we firstly use the scattering and inverse scattering analysis to analyze the spectral problem \eqref{eq:laxcd}. On one hand, by using the ansatz of the Darboux matrix, we are able to obtain compact formulas for the soliton solutions. On the other hand, the inverse scattering method provides a complete spectral classification of the solutions.

For NVBC, since the spectral problem is $4\times 4$ which involves genus one algebraic curve, it is difficult to perform the scattering/inverse scattering analysis. Thus, we only construct the localized wave solutions. Especially, we establish a relation between rogue wave solution and modulational instability. Moreover, we apply a method to {conduct} the linear stability analysis based on the squared eigenfunctions and vector solutions of Lax pair. This analysis {reveals the connection between the existence of the rogue waves and the modulational instability}.

This paper is organized as follows. In Section \ref{sec2}, the Darboux transformation \cite{guo12} and B\"acklund transformation for the system \eqref{eq:laxcd} are constructed through the loop group method. In Section \ref{sec3}, we firstly analyze the Lax pair \eqref{eq:laxcd} by the Riemann-Hilbert approach. Exact solutions are constructed both by means of the Riemann-Hilbert approach, and Darboux transformation, which include the single soliton, ${\rm SU}(2)$ soliton, double-hump, breather, multi-soliton and high order soliton solutions. In Section \ref{sec4}, a modulational instability analysis method based on the squared eigenfunctions is applied for equation \eqref{eq:cd}. Rogue wave solution is constructed exactly by the Darboux transformation, and their formation is elucidated by the MI analysis. Besides the rogue waves, other types of localized wave solution are also constructed by the Darboux transformation. Section \ref{sec5} is devoted to concluding remarks.

\section{Darboux transformation}\label{sec2}
Firstly, we need to understand the structure of the Lax pair \eqref{eq:laxcd}, which is the first negative flow for the matrix nonlinear Schr\"odinger hierarchy. To this end, we review the {relevant} theory in the literature \cite{TerngU00}. Let
\begin{equation*}
  \begin{split}
  {\rm sl}(4)_{\Sigma_3}&=\left\{y\in{\rm sl}(4,\mathbb{C})| \,[\Sigma_3,y]=0  \right\}, \\
  {\rm sl}(4)_{\Sigma_3}^{\perp}&=\left\{y\in{\rm sl}(4,\mathbb{C})| {\rm tr}(zy)=0\,\, \text{for all } z\in  {\rm sl}(4)_{\Sigma_3}\right\},
  \end{split}
\end{equation*}
denote the centralizer of $\Sigma_3$ and its orthogonal complement in ${\rm sl}(4,\mathbb{C})$. Furthermore, we consider the reality condition. It is easy to see that the Lax pair \eqref{eq:laxcd} satisfies the symmetry conditions ${\rm su}(4)$--reality condition ($[\mathbf{A}(\lambda^*)]^{\dag}+\mathbf{A}(\lambda)=0$ for all $\lambda\in\mathbb{C}$) or ${\rm su}(2,2)$--reality condition ($[\mathbf{A}(\lambda^*)]^{\dag}+\Sigma_3\mathbf{A}(\lambda)\Sigma_3=0$ for all $\lambda\in\mathbb{C}$) for choosing different sign, where $\mathbf{A}(\lambda)=\sum_{k\leq n_0} \mathbf{A}_k \lambda^k$. Then we consider the ${\rm su}(4)$ or ${\rm su}(2,2)$-reality condition twisted by $\sigma: [\mathbf{A}(-\lambda^*)]^*\rightarrow \Lambda [\mathbf{A}(-\lambda^*)]^*\Lambda^{-1}$, i.e. $\sigma([\mathbf{A}(-\lambda^*)]^*)=\mathbf{A}(\lambda).$ In other words, the matrices $\mathbf{U}(\lambda)$ and $\mathbf{V}(\lambda)$ in the system \eqref{eq:laxcd} satisfy the following symmetry conditions:
\begin{equation}\label{eq:sym1}
\begin{split}
[\mathbf{U}(y,s;\lambda^*)]^{\dag}+\Sigma\mathbf{U}(y,s;\lambda)\Sigma&=0,\,\,\,\,\,\, [\mathbf{V}(y,s;\lambda^*)]^{\dag}+\Sigma\mathbf{V}(y,s;\lambda)\Sigma=0,\\
\Lambda[\mathbf{U}(y,s;-\lambda^*)]^{*}\Lambda^{-1}-\mathbf{U}(y,s;\lambda)&=0,\,\,\,\,\,\, \Lambda[\mathbf{V}(y,s;-\lambda^*)]^{*}\Lambda^{-1}-\mathbf{V}(y,s;\lambda)=0,
\end{split}
\end{equation}
where
\[
\Sigma={\rm diag}(1,1,\kappa,\kappa),\,\,\,\, \Lambda={\rm diag}(\sigma_2,\sigma_2),\,\,\,\,\, \sigma_2=\begin{bmatrix}
0&1 \\
-1&0 \\
\end{bmatrix}.
\]

Suppose we have a fundamental matrix solution $\Phi(y,s;\lambda)$ which is normalized at {the point} $(y,s)=(0,0)$ by taking $\Phi(0,0;\lambda)=\mathbb{I}.$ By the symmetry conditions \eqref{eq:sym1}, we have the following symmetry conditions for the fundamental solution matrix
\begin{equation*}
  [\Phi^{\dag}(y,s;\lambda^*)]^{-1}=\Sigma\Phi(y,s;\lambda)\Sigma,\,\,\,\,\,\,\, \Lambda \Phi(y,s;\lambda)\Lambda^{-1}=\Phi^*(y,s;-\lambda^*),
\end{equation*}
respectively. Meanwhile, the symmetry for the vector solutions is helpful to construct the Darboux matrix. If $\Phi_1(y,s;\lambda_1)$ is a vector solution for the Lax pair at $\lambda=\lambda_1$, {so is $\Lambda\Phi_1^*(y,s;\lambda_1)$ at $\lambda=-\lambda_1^*$}.

Suppose there exists a Darboux matrix $\mathbf{T}(y,s;\lambda)$ which converts the wave function $\Phi(y,s;\lambda)$ into a new wave function $\Phi^{[1]}(y,s;\lambda)=\mathbf{T}(y,s;\lambda)\Phi(y,s;\lambda)\mathbf{T}^{-1}(0,0;\lambda)$ which is normalized at $(y,s)=(0,0)$ to the identity matrix. Through the symmetry relation \eqref{eq:sym1} and existence and uniqueness theorem for ordinary differential equations, we {have}
\begin{equation*}
  \left[\Phi^{[1]\dag}(y,s;\lambda^*)\right]^{-1}=\Sigma\Phi^{[1]}(y,s;\lambda)\Sigma,\,\,\,\,\,\,\, \Lambda \Phi^{[1]}(y,s;\lambda)\Lambda^{-1}=\Phi^{[1]*}(y,s;-\lambda^*),
\end{equation*}
which induce the following symmetry relations for the Darboux matrix:
\begin{equation}\label{dt-sym}
  [\mathbf{T}(y,s;\lambda^*)]^{-1}=\Sigma\mathbf{T}(y,s;\lambda)\Sigma,\,\,\,\,\,\,\, \Lambda \mathbf{T}(y,s;\lambda)\Lambda^{-1}=\mathbf{T}(y,s;-\lambda^*).
\end{equation}
The first symmetry relation in \eqref{dt-sym} follows from the $su(4)$ or $su(2,2)$-reality condition, and the second one is from the twisted symmetry $\sigma$. By the loop group construction of Darboux transformation theory \cite{TerngU00} and the first symmetry in \eqref{dt-sym}, the elementary Darboux matrix can be represented by
\begin{equation}\label{darboux-matrix}
  \mathbf{T}(y,s;\lambda)=\mathbb{I}_4-\left[\Phi_1,\Phi_2\right]\mathbf{M}^{-1}{\rm diag}(\lambda-\lambda_1^*,\lambda-\lambda_2^*)^{-1}
\begin{bmatrix}
\Phi_1^{\dag}\\[5pt]
\Phi_2^{\dag} \\
\end{bmatrix}\Sigma
\end{equation}
where $\Phi_i=\Phi(y,s;\lambda_i)\mathbf{v}_i$ $(i=1,2)$ are vector solutions for the Lax pair \eqref{eq:laxcd} at $\lambda=\lambda_i$, and
\[
\mathbf{M}=\begin{bmatrix}
\frac{\Phi_1^{\dag}\Sigma\Phi_1}{\lambda_1-\lambda_1^*} & \frac{\Phi_1^{\dag}\Sigma\Phi_2}{\lambda_2-\lambda_1^*}\\
\frac{\Phi_2^{\dag}\Sigma\Phi_1}{\lambda_1-\lambda_2^*} & \frac{\Phi_2^{\dag}\Sigma\Phi_2}{\lambda_2-\lambda_2^*} \\
\end{bmatrix}.
\]
The above Darboux matrix \eqref{darboux-matrix} is uniquely determined by the kernel conditions
\begin{equation}\label{eq:ker-dt}
\mathbf{T}(y,s;\lambda_i)\Phi(y,s;\lambda_i)\mathbf{v}_i=0
\end{equation}
and residue conditions
\begin{equation}\label{eq:res-dt}
\underset{\lambda=\lambda_i^*}{{\rm Res}}\left(\mathbf{T}(y,s;\lambda)\Phi(y,s;\lambda)\Sigma\Sigma\mathbf{w}_i\right)=0
\end{equation}
where $\mathbf{v}_i^{\dag}\Sigma \mathbf{w}_i=0$ and ${\rm rank}(\mathbf{v}_i)=1,\, {\rm rank}(\mathbf{w}_i)=3,$ $i=1,2.$ We can {observe} that the residue conditions are determined by the ${\rm su}(4)$ or ${\rm su}(2,2)$ symmetry.

To keep the second symmetry in equation \eqref{dt-sym}, we merely need to set $\Phi_2=\Lambda \Phi_1^*(y,s;\lambda_1)$ with $\lambda_2=-\lambda_1^*.$ Thus the kernel conditions for the Darboux matrix are: ${\rm Ker}(\mathbf{T}(s,y;\lambda_1))={\rm span}\{\Phi_1\}$ and ${\rm Ker}(\mathbf{T}(s,y;-\lambda_1^*))={\rm span}\{\Lambda\Phi_1^*\}$.

Recall that $L_{+,0}(\mathbf{GL}(n,\mathbb{C}))$ is the group of holomorphic maps from $\mathbb{C}/\{0\}$ to $\mathbf{GL}(n,\mathbb{C})$, $L_{-,0}(\mathbf{GL}(n,\mathbb{C}))$ is the group of holomorphic maps $h$ from $O_{\infty}\cup O_0$ to $\mathbf{GL}(n,\mathbb{C})$ such that $h(\infty)=\mathbb{I}$, $L(\mathbf{GL}(n,\mathbb{C}))$ is the group of holomorphic maps from $\mathbb{S}^{1}=\{z\in \mathbb{C}: |z|=1\}$ to $\mathbf{GL}(n,\mathbb{C})$, where $O_{\infty}$ and $O_0$ represent the neighborhood of $\infty$ and $0$ on the compact Riemann surface $\mathbb{S}^2=\mathbb{C}\cup\{\infty\}$ respectively. We know that $\Phi(y,s;\lambda)\in L_{-,0}(\mathbf{SU}(4))$ [or $\Phi(y,s;\lambda)\in L_{-,0}(\mathbf{SU}(2,2))$] and $\mathbf{T}(y,s;\lambda)\in L_{+,0}(\mathbf{SU}(4))$ [or $\mathbf{T}(y,s;\lambda)\in L_{+,0}(\mathbf{SU}(2,2))$].

For the linear system \eqref{eq:laxcd}, the above results for the Darboux matrix can be summarized in the following theorem:
\begin{theorem}\label{thm1} Let $\Phi(y,s;\lambda)$ be a holomorphic matrix function for $\lambda\in\mathbb{S}^2\setminus\{0,\infty\}$ with $\Phi(0,0;\lambda)=\mathbb{I}$.
The Darboux matrix
\begin{equation}\label{eq:dt}
\mathbf{T}(y,s;\lambda)=\mathbb{I}-\left[\Phi_1,\Lambda\Phi_1^*\right]\left(\frac{\Phi_1^{\dag}\Sigma\Phi_1}{\lambda_1-\lambda_1^*}\right)^{-1}
\begin{bmatrix}
\frac{\Phi_1^{\dag}}{\lambda-\lambda_1^*}\\[5pt]
-\frac{\Phi_1^{\T}\Lambda}{\lambda+\lambda_1} \\
\end{bmatrix}\Sigma,
\end{equation}
where $\Phi_1=\Phi_1(y,s)=\begin{bmatrix}
\Theta_1\\
\Psi_1\\
\end{bmatrix}=\Phi(y,s;\lambda_1)\mathbf{v}_1$ is a special solution for the linear system \eqref{eq:laxcd} at $\lambda=\lambda_1$, and $\mathbf{v}_1$ is a constant vector which is independent of $y$ and $s$,
converts the Lax pair \eqref{eq:laxcd} into a new one by replacing the old potential function $\mathbf{Q}$ with the new one:
\begin{equation}\label{eq:backlund}
\begin{split}
\mathbf{Q}[1]&=\mathbf{Q}-\left(\frac{\Phi_1^{\dag}\Sigma\Phi_1}{\lambda_1-\lambda_1^*}\right)^{-1}\left(\Psi_1\Theta_1^{\dag}
-\sigma_2\Psi_1^*\Theta_1^{\T}\sigma_2\right),\\
\rho[1]&=\rho-2 \ln_{y,s}\left(\frac{\Phi_1^{\dag}\Sigma\Phi_1}{\lambda_1-\lambda_1^*}\right),\\
|q_1[1]|^2+|q_2[1]|^2&=|q_1|^2+|q_2|^2+4\kappa\ln_{ss}\left(\frac{\Phi_1^{\dag}\Sigma\Phi_1}{\lambda_1-\lambda_1^*}\right).
\end{split}
\end{equation}
\end{theorem}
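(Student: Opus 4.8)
The plan is to verify that $\mathbf{T}(y,s;\lambda)$ in \eqref{eq:dt} intertwines the old and new Lax pairs, i.e. that the matrices $\mathbf{U}[1]:=(\mathbf{T}_y+\mathbf{T}\mathbf{U})\mathbf{T}^{-1}$ and $\mathbf{V}[1]:=(\mathbf{T}_s+\mathbf{T}\mathbf{V})\mathbf{T}^{-1}$ are again of the form in \eqref{eq:laxcd} with $\mathbf{Q},\rho$ replaced by the quantities in \eqref{eq:backlund}; once this is done, $\Phi^{[1]}=\mathbf{T}\Phi\,\mathbf{T}^{-1}(0,0;\lambda)$ automatically solves the new system and is normalized at the origin. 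The first step is to derive \eqref{eq:dt} from the loop-group Darboux matrix \eqref{darboux-matrix} under the twisted ansatz $\Phi_2=\Lambda\Phi_1^*$, $\lambda_2=-\lambda_1^*$. Using $\Lambda^{\dag}=-\Lambda$ and $\Lambda\Sigma\Lambda=-\Sigma$ one gets $(\Lambda\Phi_1^*)^{\dag}\Sigma(\Lambda\Phi_1^*)=\Phi_1^{\dag}\Sigma\Phi_1$ and $\lambda_2-\lambda_2^*=\lambda_1-\lambda_1^*$, so the diagonal entries of $\mathbf{M}$ coincide, while the antisymmetry of $\sigma_2$ forces $\Theta_1^{\dag}\sigma_2\Theta_1^*=\Psi_1^{\dag}\sigma_2\Psi_1^*=0$ and hence the off-diagonal entries of $\mathbf{M}$ vanish. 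Thus $\mathbf{M}^{-1}=(M_{11})^{-1}\mathbb{I}_2$ with $M_{11}=(\lambda_1-\lambda_1^*)^{-1}\Phi_1^{\dag}\Sigma\Phi_1$, and \eqref{darboux-matrix} collapses to \eqref{eq:dt}, which can be rewritten as $\mathbf{T}=\mathbb{I}-(M_{11})^{-1}\big((\lambda-\lambda_1^*)^{-1}\Phi_1\Phi_1^{\dag}-(\lambda+\lambda_1)^{-1}\Lambda\Phi_1^*\Phi_1^{\T}\Lambda\big)\Sigma$.

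Next I would record the analytic structure. The matrix $\mathbf{T}$ has simple poles at $\lambda_1^*$ and $-\lambda_1$ with rank-one residues, the kernel conditions \eqref{eq:ker-dt} give $\det\mathbf{T}(\lambda_1)=\det\mathbf{T}(-\lambda_1^*)=0$, and $\mathbf{T}(\infty)=\mathbb{I}$, so $\det\mathbf{T}$ equals the Blaschke product $\tfrac{(\lambda-\lambda_1)(\lambda+\lambda_1^*)}{(\lambda-\lambda_1^*)(\lambda+\lambda_1)}$ and $\mathbf{T}^{-1}$ is holomorphic away from $\{\lambda_1,-\lambda_1^*\}$; this uses the nondegeneracy $\Phi_1^{\dag}\Sigma\Phi_1\neq0$, automatic for $\kappa=1$ with ${\rm Im}\,\lambda_1\neq0$ but a genuine restriction when $\kappa=-1$. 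A direct check from \eqref{eq:sym1} shows $\mathbf{T}$ obeys the symmetries \eqref{dt-sym}, so $\mathbf{U}[1],\mathbf{V}[1]$ inherit the same reality/twist relations as $\mathbf{U},\mathbf{V}$. Regularity of $\mathbf{U}[1],\mathbf{V}[1]$ at $\lambda_1$ and $-\lambda_1^*$ is the standard argument: differentiating the kernel identities $\mathbf{T}(y,s;\lambda_1)\Phi_1\equiv0$ and $\mathbf{T}(y,s;-\lambda_1^*)\Lambda\Phi_1^*\equiv0$ along the flows and using $\Phi_{1,y}=\mathbf{U}(\lambda_1)\Phi_1$, $\Phi_{1,s}=\mathbf{V}(\lambda_1)\Phi_1$ shows that the image of $\underset{\lambda=\lambda_1}{{\rm Res}}\,\mathbf{T}^{-1}$ lies in the kernel of $(\mathbf{T}_y+\mathbf{T}\mathbf{U})|_{\lambda_1}$ and of $(\mathbf{T}_s+\mathbf{T}\mathbf{V})|_{\lambda_1}$, and similarly at $-\lambda_1^*$; regularity at the conjugate points $\lambda_1^*,-\lambda_1$ then follows from the inherited $\Sigma$-reality. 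Therefore $\mathbf{U}[1]=\lambda^{-1}\mathbf{U}[1]_{-1}$ with $\mathbf{U}[1](\infty)=0$, and $\mathbf{V}[1]=\tfrac{\ii}{4}\lambda\Sigma_3+\mathbf{V}[1]_0$ is entire and affine in $\lambda$ — exactly the shapes in \eqref{eq:laxcd}.

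It then remains to read off the coefficients. With $\mathbf{T}=\mathbb{I}+\mathbf{T}^{(1)}\lambda^{-1}+O(\lambda^{-2})$ as $\lambda\to\infty$ and $\mathbf{T}^{(1)}=(M_{11})^{-1}\big(-\Phi_1\Phi_1^{\dag}+\Lambda\Phi_1^*\Phi_1^{\T}\Lambda\big)\Sigma$, expanding $\mathbf{U}[1]$ and $\mathbf{V}[1]$ at infinity gives $\mathbf{U}[1]_{-1}=\mathbf{U}_{-1}+\partial_y\mathbf{T}^{(1)}$ (with $\mathbf{U}_{-1}=-\ii\rho\Sigma_3-\Sigma_3\mathbf{V}_{0,y}$) and $\mathbf{V}[1]_0=\tfrac{\ii}{2}\mathbf{V}_0+\tfrac{\ii}{4}[\mathbf{T}^{(1)},\Sigma_3]$. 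Matching the off-block-diagonal part of $\mathbf{V}[1]_0=\tfrac{\ii}{2}\mathbf{V}_0[1]$ yields $\mathbf{Q}[1]=\mathbf{Q}+(\mathbf{T}^{(1)})_{21}$, and inserting $\Phi_1=(\Theta_1,\Psi_1)^{\T}$ into the lower-left block of $\mathbf{T}^{(1)}$ reproduces the first line of \eqref{eq:backlund}. For the remaining two formulas the single key computation is
\[
{\rm tr}\!\left(\Sigma_3\mathbf{T}^{(1)}\right)=-\frac{2}{M_{11}}\left(\Theta_1^{\dag}\Theta_1-\kappa\Psi_1^{\dag}\Psi_1\right)=8\ii\,\partial_s\ln\!\left(\Phi_1^{\dag}\Sigma\Phi_1\right),
\]
the second equality following from $\partial_s(\Phi_1^{\dag}\Sigma\Phi_1)=\tfrac{\ii}{4}(\lambda_1-\lambda_1^*)(\Theta_1^{\dag}\Theta_1-\kappa\Psi_1^{\dag}\Psi_1)$, obtained from $\Phi_{1,s}=\mathbf{V}(\lambda_1)\Phi_1$ together with the reality relation $\mathbf{V}_0^{\dag}\Sigma=\Sigma\mathbf{V}_0$ (so that $\mathbf{V}(\lambda_1)^{\dag}\Sigma+\Sigma\mathbf{V}(\lambda_1)=\tfrac{\ii}{4}(\lambda_1-\lambda_1^*)\Sigma\Sigma_3$). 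Taking $\tfrac{\ii}{4}{\rm tr}(\Sigma_3\,\cdot\,)$ of $\mathbf{U}[1]_{-1}=\mathbf{U}_{-1}+\partial_y\mathbf{T}^{(1)}$ gives $\rho[1]=\rho+\tfrac{\ii}{4}\partial_y{\rm tr}(\Sigma_3\mathbf{T}^{(1)})=\rho-2\,\partial_y\partial_s\ln(\Phi_1^{\dag}\Sigma\Phi_1)$, the second line of \eqref{eq:backlund} since $\partial_y\partial_s\ln M_{11}=\partial_y\partial_s\ln(\Phi_1^{\dag}\Sigma\Phi_1)$. Finally, because $\{\Sigma_3,\mathbf{V}_0\}=0$ one has the polynomial identities ${\rm tr}(\mathbf{V}^2)=-\tfrac{\lambda^2}{4}-\kappa(|q_1|^2+|q_2|^2)$ and ${\rm tr}(\mathbf{V}[1]^2)=-\tfrac{\lambda^2}{4}-\kappa(|q_1[1]|^2+|q_2[1]|^2)$; expanding ${\rm tr}(\mathbf{V}[1]^2)$ at infinity — where the $\mathbf{T}\mathbf{V}^2\mathbf{T}^{-1}$ part contributes ${\rm tr}(\mathbf{V}^2)$, the $\mathbf{T}_s\mathbf{T}^{-1}$ cross terms contribute $\tfrac{\ii}{2}\partial_s{\rm tr}(\Sigma_3\mathbf{T}^{(1)})$, and the $(\mathbf{T}_s\mathbf{T}^{-1})^2$ term is $O(\lambda^{-2})$ — yields $|q_1[1]|^2+|q_2[1]|^2=|q_1|^2+|q_2|^2+4\kappa\,\partial_s^2\ln(\Phi_1^{\dag}\Sigma\Phi_1)$, the third line.

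The main obstacle is the structural bookkeeping: tracking the conjugations by $\Sigma,\Sigma_3,\Lambda,\sigma_2$ to confirm that $\mathbf{U}[1]_{-1}$ is genuinely of the form $-\ii\rho[1]\Sigma_3-\Sigma_3(\mathbf{V}_0[1])_y$ with $\rho[1]$ real and with $\mathbf{V}_0[1]$ inheriting the algebraic form $\mathbf{Q}[1]=\begin{bmatrix}q_1[1]&q_2[1]\\ q_2[1]^*&-q_1[1]^*\end{bmatrix}$ — precisely what the two symmetry relations \eqref{dt-sym} guarantee, which is why the twisted choice $\Phi_2=\Lambda\Phi_1^*$ is indispensable — while carrying along the standing assumption $\Phi_1^{\dag}\Sigma\Phi_1\neq0$ that underlies the whole construction.
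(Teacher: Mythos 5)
Your proposal is correct, and its overall architecture coincides with the paper's: the elementary matrix \eqref{eq:dt} is obtained, exactly as in Section~\ref{sec2}, by specializing the two-point loop-group formula \eqref{darboux-matrix} to $\Phi_2=\Lambda\Phi_1^*$, $\lambda_2=-\lambda_1^*$ (your observation that the off-diagonal entries of $\mathbf{M}$ vanish because $v^{\dag}\sigma_2 v^*\equiv 0$ is precisely why the $2\times2$ inverse collapses to a scalar), and the first two lines of \eqref{eq:backlund} are read off, as in the paper's proof of Theorem~\ref{thm2}, from the $\lambda^{-1}$ coefficient of $\mathbf{T}$ at infinity via $\mathbf{T}_y+\mathbf{T}\mathbf{U}=\mathbf{U}[1]\mathbf{T}$. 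Where you genuinely depart from the paper is the third identity: the paper introduces an auxiliary matrix solution $(\Theta,\Psi)^{\T}$ asymptotic to $(\mathbb{I}_2,0)^{\T}\ee^{\ii\lambda s/4}$, derives the Riccati-type relation $\mathbf{Q}^{\dag}\Psi\Theta^{-1}=(|q_1|^2+|q_2|^2)\mathbb{I}_2\lambda^{-1}+O(\lambda^{-2})$, and tracks how the Darboux matrix shifts $\Theta^{[1]}_{1,1}$, whereas you exploit the polynomial invariant ${\rm tr}(\mathbf{V}^2)=-\tfrac{\lambda^2}{4}-\kappa(|q_1|^2+|q_2|^2)$ and match constant terms in ${\rm tr}(\mathbf{V}[1]^2)$, avoiding the auxiliary solution entirely; this is arguably cleaner for the one-fold case, though less directly reusable for the $N$-fold determinant formula \eqref{eq:backlund1}. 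Both routes rest on the same derivative identity $\partial_s(\Phi_1^{\dag}\Sigma\Phi_1)=\tfrac{\ii}{4}(\lambda_1-\lambda_1^*)\Phi_1^{\dag}\Sigma\Sigma_3\Phi_1$, the $N=1$ instance of the relation $-4\ii\bigl(\Phi_i\Sigma\Phi_j^{\dag}/(\lambda_i-\lambda_j^*)\bigr)_s=\Phi_i\Sigma\Sigma_3\Phi_j^{\dag}$ used in the paper, and your flagging of the nondegeneracy hypothesis $\Phi_1^{\dag}\Sigma\Phi_1\neq0$ correctly anticipates the modification made in Remark~\ref{rem1}. The only point to keep explicit is that the trace argument for the third line presupposes that $\mathbf{V}[1]_0$ has already been shown to inherit the block off-diagonal form with $\mathbf{Q}[1]$ of the prescribed $2\times2$ shape, so that $\mathbf{Q}[1]\mathbf{Q}[1]^{\dag}$ is a multiple of $\mathbb{I}_2$; you acknowledge this, and it is exactly what the twisted symmetries \eqref{dt-sym} deliver, so the argument closes.
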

\begin{remark}\label{rem1}
When $\lambda_1\in \mathbb{R}$ and $\kappa=-1$, we obtain that
\begin{equation}\label{eq:p1-dark}
      \begin{split}
       \left(\Phi_1^{\dag}(y,s;\lambda_1)\Sigma_3\Phi_i(y,s;\lambda)\right)_y=& \frac{\lambda-\lambda_1}{\lambda\lambda_1}\Phi_1^{\dag}(y,s;\lambda_1)\left(\ii \rho(y,s)\mathbb{I}_4+ \mathbf{V}_{0,y}\right)\Phi_i(y,s;\lambda), \\
       \left(\Phi_1^{\dag}(y,s;\lambda_1)\Sigma_3\Phi_i(y,s;\lambda)\right)_s=& \frac{\ii}{4}(\lambda-\lambda_1)\Phi_1^{\dag}(y,s;\lambda_1)\Phi_i(y,s;\lambda), \\
      \end{split}
      \end{equation}
where $\Phi_i(y,s;\lambda)=V_i(y,s;\lambda)\ee^{\ii(\xi_i(\lambda) y+\eta_i(\lambda)s)}$, $\xi_2(\lambda)=\xi_1^*(\lambda)$, $\eta_2(\lambda)=\eta_1^*(\lambda)$, $(\lambda\in \mathbb{R}; i=1,2)$, $V_i(y,s;\lambda)$ are bounded vector functions.
Thus \[\left(\Phi_1^{\dag}(y,s;\lambda_1)\Sigma_3\Phi_i(y,s;\lambda_1)\right)_y=\left(\Phi_1^{\dag}(y,s;\lambda_1)\Sigma_3\Phi_i(y,s;\lambda_1)\right)_s=0,\] which implies that \begin{equation}\label{eq:dark-con}
\Phi_1^{\dag}(y,s;\lambda_1)\Sigma_3\Phi_1(y,s;\lambda_1)=0,\,\,\,\,\,\,\, \Phi_1^{\dag}(y,s;\lambda_1)\Sigma_3\Phi_2(y,s;\lambda_1)=\kappa(\lambda_1),
\end{equation}
where $\kappa(\lambda_1)$ is a parameter merely depending on $\lambda_1$.
The Darboux matrix given in {\bf Theorem }\ref{thm1} should be modified in the following form:
      \begin{equation}\label{eq:dt-dark}
      \mathbf{T}(y,s;\lambda)=\mathbb{I}-\left[\Phi_1,\Lambda\Phi_1^*\right]
\begin{bmatrix}
\frac{\Phi_1^{\dag}}{\lambda-\lambda_1}\\[5pt]
-\frac{\Phi_1^{\T}\Lambda}{\lambda+\lambda_1} \\
\end{bmatrix}\frac{\Sigma_3}{\Omega_1(y,s)},
      \end{equation}
      where
      \begin{equation*}
      \begin{split}
        \Omega_1(y,s)&\equiv\lim_{\lambda\to\lambda_1}\frac{\Phi_1^{\dag}(y,s;\lambda_1)\Sigma_3\left[\Phi_1(y,s;\lambda)+\tau (\lambda-\lambda_1)\Phi_2(y,s;\lambda)\right]}{(\lambda-\lambda_1)}\\
        &=\Phi_1^{\dag}(y,s;\lambda_1)\Sigma_3\Phi_1'(y,s;\lambda_1)+\tau\kappa(\lambda_1)\\
        &=\Phi_1^{\dag}(y,s;\lambda_1)\Sigma_3\Phi'(y,s;\lambda_1)\mathbf{v}_1+\Omega_1(0,0),\,\,\,\, \Omega_1(0,0)=\mathbf{v}_1^{\dag}\Sigma_3\mathbf{v}_1'+\tau\kappa(\lambda_1)\,.
      \end{split}
      \end{equation*}
      Here the prime denotes the derivative with respect to $\lambda$ at $\lambda=\lambda_1$, $\tau$ is an appropriate parameter to keep the non-singularity of $\Omega_1(y,s)$, and $\Omega_1(y,s)^*=-\Omega_1(y,s)$. Meanwhile, the matrix $\Omega_1(y,s)$ can also be derived by the equations:
      \begin{equation}\label{eq:p1-dark-1}
      \begin{split}
       &\lim_{\lambda\to\lambda_1}\frac{\Phi_1^{\dag}(y,s;\lambda_1)\Sigma_3\Phi_1(y,s;\lambda)}{\lambda-\lambda_1}\\
       =& \int_{(y_0,s_0)}^{(y,s)}\frac{\ii}{4}\Phi_1^{\dag}(y',s';\lambda_1)\Phi_1(y',s';\lambda_1) \dd s'+\lambda_1^{-2}\Phi_1^{\dag}(y',s';\lambda_1)\left(\ii \rho(y',s')\mathbb{I}_4+ \mathbf{V}_{0,y'}\right)\Phi_1(y',s';\lambda_1)\dd y'\,.
      \end{split}
      \end{equation}
Recently, Rybkin considered the binary Darboux transformation for the KdV equation to remove or add the discrete spectrum in the framework of Riemann-Hilbert representation \cite{Rybkin-21}.
\end{remark}
\begin{remark}Let $\mathbf{C}$ be a contour in the complex plane that satisfies the Schwartz symmetry (for instance \cite{BilmanM17}, $\mathbf{C}=(-\infty,-r)\cup\{z:|z|=r\}\cup(r,\infty)$) and
suppose $\Phi(y,s;\lambda)$ is an analytic matrix on $\mathbb{S}^2\setminus(\{0\}\cup\{\infty\}\cup \mathbf{C})$, which satisfies the jump condition $\Phi_+(y,s;\lambda)=\Phi_-(y,s;\lambda)\mathbf{V}(\lambda)$ across the contour $\mathbf{C}$.  The derivative of $\Phi_{\pm}(y,s;\lambda)$ with respect to $y$ and the one with respect to $s$ satisfies the same jump condition as $\Phi$, so $\frac{\partial}{\partial y}\Phi(y,s;\lambda)\Phi^{-1}(y,s;\lambda)$ and $\frac{\partial}{\partial s}\Phi(y,s;\lambda)\Phi^{-1}(y,s;\lambda)$ are holomorphic functions on $\mathbb{S}^2\setminus\{0,\infty\}$.
Thus, from the proof of {\bf Theorem} \ref{thm1}, we can see that the normalized condition for the wave function $\Phi(y,s;\lambda)$ with $\Phi(0,0;\lambda)=\mathbb{I}$ is not necessary. It is also an obvious fact from the viewpoint of Riemann-Hilbert problems.
\end{remark}

For the Darboux matrix, we have the following proposition:
\begin{proposition}\label{prop1}
\begin{itemize}
  \item The determinant of Darboux matrix is $\det(\mathbf{T}(y,s;\lambda))=\frac{(\lambda-\lambda_1)(\lambda+\lambda_1^*)}{(\lambda-\lambda_1^*)(\lambda+\lambda_1)}.$
  \item Suppose the Darboux matrix is expanded in the neighborhood of infinity as: $\mathbf{T}(y,s;\lambda)=\mathbb{I}+\mathbf{T}_1(y,s)\lambda^{-1}+O(\lambda^{-2})$; then we have $\mathrm{Tr}(\mathbf{T}_1(y,s))=2(\lambda_1^*-\lambda_1)$.
\end{itemize}
\end{proposition}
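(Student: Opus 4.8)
\textbf{Proof proposal for Proposition~\ref{prop1}.}
The plan is to establish the determinant formula first and then read off $\mathrm{Tr}(\mathbf{T}_1)$ as an immediate by-product. For the determinant I would rewrite the Darboux matrix \eqref{eq:dt} compactly as
\[
\mathbf{T}(y,s;\lambda)=\mathbb{I}_4-\tfrac{1}{c}\,A\,\mathbf{D}(\lambda)^{-1}B,\qquad c:=\frac{\Phi_1^{\dag}\Sigma\Phi_1}{\lambda_1-\lambda_1^*},
\]
where $A=\left[\Phi_1,\Lambda\Phi_1^*\right]$ is $4\times2$, $B=\begin{bmatrix}\Phi_1^{\dag}\\ -\Phi_1^{\T}\Lambda\end{bmatrix}\Sigma$ is $2\times4$, and $\mathbf{D}(\lambda)=\mathrm{diag}(\lambda-\lambda_1^*,\lambda+\lambda_1)$. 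I would then invoke the determinant identity $\det(\mathbb{I}_4-XY)=\det(\mathbb{I}_2-YX)$ (with $X=\tfrac1c A\mathbf{D}(\lambda)^{-1}$, $Y=B$), which trades the $4\times4$ determinant for the $2\times2$ determinant $\det\big(\mathbb{I}_2-\tfrac1c B A\,\mathbf{D}(\lambda)^{-1}\big)$.

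The heart of the argument is the evaluation of the $2\times2$ matrix $BA$. With $\Sigma=\mathrm{diag}(1,1,\kappa,\kappa)$ and $\Lambda=\mathrm{diag}(\sigma_2,\sigma_2)$ I would first record the three algebraic facts $\Lambda^{\T}=-\Lambda$, that $\Sigma\Lambda=\mathrm{diag}(\sigma_2,\kappa\sigma_2)$ is antisymmetric, and that $\Lambda\Sigma\Lambda=-\Sigma$ (because $\sigma_2^2=-\mathbb{I}_2$). Antisymmetry of $\Sigma\Lambda$ makes every scalar of the form $\mathbf{z}^{\T}\Sigma\Lambda\mathbf{z}$ vanish, so the off-diagonal entries $\Phi_1^{\dag}\Sigma\Lambda\Phi_1^*=(\Phi_1^*)^{\T}\Sigma\Lambda\Phi_1^*$ and $\Phi_1^{\T}\Lambda\Sigma\Phi_1$ are zero, while $\Lambda\Sigma\Lambda=-\Sigma$ together with the reality of $\Phi_1^{\dag}\Sigma\Phi_1$ (it equals its own conjugate transpose) turns the $(2,2)$ entry $-\Phi_1^{\T}\Lambda\Sigma\Lambda\Phi_1^*$ into $\Phi_1^{\dag}\Sigma\Phi_1$. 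Hence $BA=(\Phi_1^{\dag}\Sigma\Phi_1)\mathbb{I}_2$, so $\tfrac1c BA=(\lambda_1-\lambda_1^*)\mathbb{I}_2$ and
\[
\mathbb{I}_2-\tfrac1c B A\,\mathbf{D}(\lambda)^{-1}=\mathbf{D}(\lambda)^{-1}\big(\mathbf{D}(\lambda)-(\lambda_1-\lambda_1^*)\mathbb{I}_2\big)=\mathbf{D}(\lambda)^{-1}\,\mathrm{diag}(\lambda-\lambda_1,\ \lambda+\lambda_1^*),
\]
whose determinant is $\dfrac{(\lambda-\lambda_1)(\lambda+\lambda_1^*)}{(\lambda-\lambda_1^*)(\lambda+\lambda_1)}$, the first assertion. (The same identities also show the matrix $\mathbf{M}$ in \eqref{darboux-matrix} collapses to $c\,\mathbb{I}_2$, which is exactly what legitimizes the scalar inverse appearing in \eqref{eq:dt}.)

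For the trace I would use $\det\mathbf{T}=\exp\mathrm{Tr}\log\mathbf{T}$: writing $\mathbf{T}(y,s;\lambda)=\mathbb{I}+\mathbf{T}_1(y,s)\lambda^{-1}+O(\lambda^{-2})$ gives $\det\mathbf{T}(y,s;\lambda)=1+\mathrm{Tr}(\mathbf{T}_1)\lambda^{-1}+O(\lambda^{-2})$ as $\lambda\to\infty$, and expanding the rational function from the first part gives $\dfrac{(\lambda-\lambda_1)(\lambda+\lambda_1^*)}{(\lambda-\lambda_1^*)(\lambda+\lambda_1)}=1+2(\lambda_1^*-\lambda_1)\lambda^{-1}+O(\lambda^{-2})$; comparing coefficients yields $\mathrm{Tr}(\mathbf{T}_1)=2(\lambda_1^*-\lambda_1)$. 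Equivalently, a direct expansion $\mathbf{D}(\lambda)^{-1}=\lambda^{-1}\mathbb{I}_2+O(\lambda^{-2})$ gives $\mathbf{T}_1=-\tfrac1c\big(\Phi_1\Phi_1^{\dag}-\Lambda\Phi_1^*\Phi_1^{\T}\Lambda\big)\Sigma$, and cyclicity of the trace with the identities above produces $\mathrm{Tr}(\mathbf{T}_1)=-\tfrac1c\big(\Phi_1^{\dag}\Sigma\Phi_1+\Phi_1^{\T}\Sigma\Phi_1^*\big)=-\tfrac{2}{c}\Phi_1^{\dag}\Sigma\Phi_1=2(\lambda_1^*-\lambda_1)$.

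I expect the only real obstacle to be the linear-algebra bookkeeping in the $BA$ computation—keeping straight the antisymmetry of $\Sigma\Lambda$, the relation $\Lambda\Sigma\Lambda=-\Sigma$, and the reality of $\Phi_1^{\dag}\Sigma\Phi_1$—after which both statements are short. A more hands-on alternative for the determinant is also available: $\det\mathbf{T}(y,s;\lambda)$ is a scalar rational function that tends to $1$ at $\lambda=\infty$, vanishes at $\lambda=\lambda_1$ and $\lambda=-\lambda_1^*$ by the kernel conditions \eqref{eq:ker-dt} (and the twisted kernel at $-\lambda_1^*$), and has simple poles at $\lambda_1^*$ and $-\lambda_1$ from the rank-one residues of $\mathbf{T}$; matching zeros and poles reproduces the formula, though this route needs a little extra care with pole orders that the determinant-identity argument sidesteps.
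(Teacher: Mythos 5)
Your argument is correct, and it is worth noting that the paper itself states Proposition~\ref{prop1} without any proof (it is invoked later, inside the proof of Theorem~\ref{thm2}, as a known consequence of the loop-group construction), so there is no in-paper derivation to compare against; your proposal supplies one. The two ingredients you isolate are exactly the right ones: the Weinstein--Aronszajn identity $\det(\mathbb{I}_4-XY)=\det(\mathbb{I}_2-YX)$ reduces everything to the $2\times2$ matrix $BA$, and the symmetry facts ($\Sigma\Lambda$ antisymmetric, $\Lambda\Sigma\Lambda=-\Sigma$, and $\Phi_1^{\T}\Sigma\Phi_1^*=\Phi_1^{\dag}\Sigma\Phi_1$ as a $1\times1$ transpose) force $BA=(\Phi_1^{\dag}\Sigma\Phi_1)\mathbb{I}_2$, after which both the determinant formula and, via $\det\mathbf{T}=1+\mathrm{Tr}(\mathbf{T}_1)\lambda^{-1}+O(\lambda^{-2})$, the trace formula drop out. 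I checked the signs in the $(2,2)$ entry and in the final trace computation and they are right; your parenthetical observation that the same identities collapse $\mathbf{M}$ in \eqref{darboux-matrix} to $c\,\mathbb{I}_2$ is also correct and is precisely what reconciles \eqref{darboux-matrix} with the scalar-inverse form \eqref{eq:dt}. The only caution concerns your sketched alternative via zero/pole matching: as you note, one would have to verify that the zeros of $\det\mathbf{T}$ at $\lambda_1$ and $-\lambda_1^*$ and the poles at $\lambda_1^*$ and $-\lambda_1$ are all simple (the kernel and residue conditions \eqref{eq:ker-dt}--\eqref{eq:res-dt} give rank-one kernels and rank-one residues, which does pin the orders down, but this requires an extra argument), so the determinant-identity route you actually carry out is the cleaner one.
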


Next, we consider the multi-fold Darboux matrix and the higher order Darboux matrix, which are iterations of the elementary Darboux matrix in {\bf Theorem} \ref{thm1}. Suppose there are $N$ different vector solutions $\Phi_i(y,s)=\Phi(y,s;\lambda_i)\mathbf{v}_i$ and $\lambda_i$, the multi-fold Darboux matrix can be constructed as follows:
\begin{equation}\label{eq:n-fold-dt}
\mathbf{T}[N](y,s;\lambda)=\mathbf{T}_N(y,s;\lambda)\mathbf{T}_{N-1}(y,s;\lambda)\cdots \mathbf{T}_{1}(y,s;\lambda)
\end{equation}
where
\[
\mathbf{T}_{i}(y,s;\lambda)=\mathbb{I}-\left[\Phi_i[i-1],\Lambda\Phi_i[i-1]^*\right]\left(\frac{\Phi_i[i-1]^{\dag}\Sigma\Phi_i[i-1]}{\lambda_i-\lambda_i^*}\right)^{-1}
\begin{bmatrix}
\frac{\Phi_i[i-1]^{\dag}}{\lambda-\lambda_i^*}\\[5pt]
-\frac{\Phi_i[i-1]^{\T}\Lambda}{\lambda+\lambda_i} \\
\end{bmatrix}\Sigma
\]
and
\[
\Phi_i[i-1](y,s;\lambda_i)=\mathbf{T}_{i-1}(y,s;\lambda_i)\mathbf{T}_{i-2}(y,s;\lambda_i)\cdots \mathbf{T}_{1}(y,s;\lambda_i)\Phi_i(y,s;\lambda_i),\,\,\,\,\, i\geq 2,\,\,\,\,
\]
$\Phi_1[0](y,s;\lambda_1)=\Phi_1(y,s)$.  {From standard complex analysis considerations}, the Darboux matrix $\mathbf{T}[N](y,s;\lambda)$ can be decomposed in the following form:
\begin{equation}\label{eq:n-dt-frac}
\mathbf{T}[N](y,s;\lambda)=\sum_{i=1}^{N}\left(\frac{1}{\lambda-\lambda_i^*}\frac{1}{2\pi \ii}\oint_{\gamma_{\lambda_i^*}}\mathbf{T}[N](y,s;\lambda) {\rm d}\lambda+\frac{1}{\lambda+\lambda_i}\frac{1}{2\pi \ii}\oint_{\gamma_{-\lambda_i}}\mathbf{T}[N](y,s;\lambda) {\rm d}\lambda\right)
\end{equation}
where $\gamma_{\lambda_i^*}$, $\gamma_{-\lambda_i}$ denote a small circle around the point $\lambda_i^*$ and $-\lambda_i$ respectively, with anti-clockwise orientation. The residue can be given explicitly:
\begin{equation}
\begin{split}
 &\frac{1}{2\pi \ii}\oint_{\gamma_{\lambda_i^*}}\mathbf{T}[N](y,s;\lambda) {\rm d}\lambda\\
 =&(\lambda_i^*-\lambda_i)\mathbf{T}_N(\lambda_i^*)\mathbf{T}_{N-1}(\lambda_i^*)\cdots \mathbf{T}_{i+1}(\lambda_i^*)\frac{\Phi_i[i-1]\Phi_i[i-1]^{\dag}\Sigma}{\Phi_i[i-1]^{\dag}\Sigma\Phi_i[i-1]}\mathbf{T}_{i-1}(\lambda_i^*)\cdots\mathbf{T}_{1}(\lambda_i^*) \\
\end{split}
\end{equation}
and
\begin{equation}
\begin{split}
 &\frac{1}{2\pi \ii}\oint_{\gamma_{-\lambda_i}}\mathbf{T}[N](y,s;\lambda) {\rm d}\lambda\\
 =&(\lambda_i-\lambda_i^*)\mathbf{T}_N(\lambda_i^*)\mathbf{T}_{N-1}(\lambda_i^*)\cdots \mathbf{T}_{i+1}(\lambda_i^*)\frac{\Lambda\Phi_i[i-1]^*\Phi_i[i-1]^{\T}\Lambda\Sigma}{\Phi_i[i-1]^{\dag}\Sigma\Phi_i[i-1]}\mathbf{T}_{i-1}(\lambda_i^*)\cdots\mathbf{T}_{1}(\lambda_i^*), \\
\end{split}
\end{equation}
both of which have rank one. Then the multi-fold Darboux matrix $\mathbf{T}[N](y,s;\lambda)$ can be rewritten in the following form:
\begin{equation}
\mathbf{T}[N](y,s;\lambda)=\mathbb{I}_4-\sum_{i=1}^{N}\left(\frac{|x_{2i-1}\rangle\langle y_{2i-1}|}{\lambda-\lambda_i^*}+\frac{|x_{2i}\rangle\langle y_{2i}|}{\lambda+\lambda_i}\right)\Sigma.
\end{equation}
The inverse Darboux matrix can be rewritten as
\begin{equation}
\mathbf{T}[N]^{-1}(y,s;\lambda)=\Sigma\mathbf{T}[N]^{\dag}(y,s;\lambda^*)\Sigma=\mathbb{I}_4-\sum_{i=1}^{N}\left(\frac{|y_{2i-1}\rangle\langle x_{2i-1}|}{\lambda-\lambda_i}+\frac{|y_{2i}\rangle\langle x_{2i}|}{\lambda+\lambda_i^*}\right)\Sigma,
\end{equation}
where $|y_{k}\rangle^{\dag}=\langle y_{k}|$ and $|x_{k}\rangle^{\dag}=\langle x_{k}|$, $k=1,2,\cdots, 2N$.
Since $\mathbf{T}[N](y,s;\lambda)\mathbf{T}[N]^{-1}(y,s;\lambda)=\mathbb{I}_4$, we know that
\begin{equation}
\begin{split}
\frac{1}{2\pi \ii}\oint_{\gamma_{\lambda_i}}\mathbf{T}[N](y,s;\lambda)\mathbf{T}[N]^{-1}(y,s;\lambda) {\rm d}\lambda&=\mathbf{T}[N](y,s;\lambda_i)|y_{2i-1}\rangle \langle x_{2i-1}|\Sigma=0,\\
\frac{1}{2\pi \ii}\oint_{\gamma_{-\lambda_i^*}}\mathbf{T}[N](y,s;\lambda)\mathbf{T}[N]^{-1}(y,s;\lambda) {\rm d}\lambda&=\mathbf{T}[N](y,s;-\lambda_i^*)|y_{2i}\rangle \langle x_{2i}|\Sigma=0.
\end{split}
\end{equation}
{It is easy to show}
\begin{equation}\label{eq:kernel}
\mathbf{T}[N](y,s;\lambda_i)|y_{2i-1}\rangle=0,\,\,\,\,\,\, \mathbf{T}[N](y,s;-\lambda_i^*)|y_{2i}\rangle=0,\,\,\,\,\, i=1,2,\cdots, N.
\end{equation}
On the other hand, we know the kernel of the multi-Darboux matrix is such that ${\rm Ker}(\mathbf{T}(s,y;\lambda_i))={\rm span}\{\Phi_i\}$ and ${\rm Ker}(\mathbf{T}(s,y;-\lambda_i^*))={\rm span}\{\Lambda\Phi_i^*\}$. Thus the vectors $|y_{2i-1}\rangle$ and $|y_{2i}\rangle$ can be chosen as $\Phi_i$ and $\Lambda\Phi_i^*$ respectively. {By using equation \eqref{eq:kernel}, one obtains}
\begin{equation}\label{eq:kernel1}
\left[|x_1\rangle, |x_2\rangle, \cdots, |x_{2N-1}\rangle, |x_{2N}\rangle\right]=\left[\Phi_1, \Lambda\Phi_1^*, \cdots, \Phi_N, \Lambda\Phi_N^*\right]\mathbf{M}_{N}^{-1},
\end{equation}
where $\mathbf{M}_{N}$ is the block matrix
\begin{equation}\label{eq:Mn}
\mathbf{M}_{N}=\left(\begin{bmatrix}
                        \frac{\Phi_i^{\dag}\Sigma\Phi_j}{\lambda_j-\lambda_i^*} & \frac{\Phi_i^{\dag}\Lambda\Sigma\Phi_j^*}{-\lambda_j^*-\lambda_i^*} \\[8pt]
                        \frac{-\Phi_i^{\T}\Lambda\Sigma\Phi_j}{\lambda_j+\lambda_i} & \frac{\Phi_i^{\T}\Sigma\Phi_j^*}{-\lambda_j^*+\lambda_i} \\
                     \end{bmatrix}
\right)_{1\leq i,j\leq N}.
\end{equation}

Finally, the $N$-fold Darboux matrix $\mathbf{T}[N](y,s;\lambda)$ is given by the following theorem:
\begin{theorem}\label{thm2}
The $N$-fold Darboux matrix
\begin{equation}\label{eq:n-fold-dt-1}
\mathbf{T}[N](y,s;\lambda)=\mathbb{I}_4-\mathbf{Y}
\mathbf{M}_{N}^{-1}(\lambda\mathbb{I}_{2N}-\mathbf{D}_{2N})^{-1}\mathbf{Y}^{\dag}\Sigma
\end{equation}
converts the system \eqref{eq:laxcd} into the system with new potential functions $\rho[N](y,s)$ and $\mathbf{Q}[N]$, where
\[
\mathbf{Y}=\left[\Phi_1,\Lambda\Phi_1^*,\cdots,\Phi_N,\Lambda\Phi_N^*\right]=\begin{bmatrix}
                                                                               \mathbf{Y}_1 \\
                                                                               \mathbf{Y}_2 \\
                                                                             \end{bmatrix}
,\,\,\,\,\,\,\, \mathbf{D}_{2N}={\rm diag}\left(\lambda_1^*,-\lambda_1,\cdots,\lambda_N^*,-\lambda_N\right),
\]
$\mathbf{Y}_{1}$ and $\mathbf{Y}_{2}$ are $2\times (2N)$ matrices, the matrix $\mathbf{M}_{N}$ is given in \eqref{eq:Mn}. The B\"acklund transformation between old potential functions and new ones is given by \begin{equation}\label{eq:backlund1}
\begin{split}
\mathbf{Q}[N]&=\mathbf{Q}-\mathbf{Y}_2\mathbf{M}_N^{-1}\mathbf{Y}_1^{\dag},\\
\rho[N]&=\rho-\ln_{y,s}\left(\det(\mathbf{M}_N)\right),\\
|q_1[N]|^2+|q_2[N]|^2&=|q_1|^2+|q_2|^2+2\kappa\ln_{ss}\left(\det(\mathbf{M}_N)\right).
\end{split}
\end{equation}
\end{theorem}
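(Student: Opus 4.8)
The strategy is to reduce everything to the elementary transformation of \textbf{Theorem} \ref{thm1}. First I would verify that each factor $\mathbf{T}_i(y,s;\lambda)$ in the ordered product \eqref{eq:n-fold-dt} is itself an elementary Darboux matrix of that type, now built from the dressed seed $\Phi_i[i-1]=\mathbf{T}_{i-1}\cdots\mathbf{T}_1\Phi_i$ at $\lambda=\lambda_i$ and its twisted partner $\Lambda\Phi_i[i-1]^{*}$ at $\lambda=-\lambda_i^{*}$: applying the earlier Darboux factors sends solutions of \eqref{eq:laxcd} to solutions of the already-dressed Lax pair, so $\Phi_i[i-1]$ is again an admissible seed, and the pairing $\Phi\leftrightarrow\Lambda\Phi^{*}$ is preserved because the symmetries \eqref{eq:sym1} propagate step by step. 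Hence each $\mathbf{T}_i$, and therefore the product $\mathbf{T}[N]$, satisfies both relations in \eqref{dt-sym}. Since the only poles of $\mathbf{T}[N]$ in $\lambda$ sit at $\{\lambda_i^{*},-\lambda_i\}$ and are cancelled by the kernel conditions, the dressed wave function $\Phi^{[N]}=\mathbf{T}[N]\Phi\,\mathbf{T}[N]^{-1}(0,0;\lambda)$ again solves a linear system of the form \eqref{eq:laxcd}: the first symmetry in \eqref{dt-sym} keeps the new coefficient matrices in the right reality class, and the twisted one enforces the quaternionic shape of the potential, so there exist $\rho[N]$ and $\mathbf{Q}[N]$ as claimed. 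What remains is (i) to collapse the product into the closed formula \eqref{eq:n-fold-dt-1} with precisely the matrix $\mathbf{M}_N$ of \eqref{eq:Mn}, and (ii) to read off \eqref{eq:backlund1}.

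For (i) I would not multiply out the product but use the analytic-structure argument already sketched before the statement: $\mathbf{T}[N](y,s;\lambda)$ is rational in $\lambda$, tends to $\mathbb{I}_4$ at $\infty$, and has only simple poles at the $2N$ points $\{\lambda_i^{*},-\lambda_i\}$ with rank-one residues, whence $\mathbf{T}[N]=\mathbb{I}_4-\sum_{i=1}^{N}\bigl(|x_{2i-1}\rangle\langle y_{2i-1}|(\lambda-\lambda_i^{*})^{-1}+|x_{2i}\rangle\langle y_{2i}|(\lambda+\lambda_i)^{-1}\bigr)\Sigma$. The kernel conditions \eqref{eq:kernel}, together with $\mathrm{Ker}\,\mathbf{T}[N](\lambda_i)=\mathrm{span}\{\Phi_i\}$ and $\mathrm{Ker}\,\mathbf{T}[N](-\lambda_i^{*})=\mathrm{span}\{\Lambda\Phi_i^{*}\}$, identify $|y_{2i-1}\rangle=\Phi_i$ and $|y_{2i}\rangle=\Lambda\Phi_i^{*}$ up to scalars absorbed into the $|x_k\rangle$. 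Imposing $\mathbf{T}[N](\lambda_j)\Phi_j=0$ and $\mathbf{T}[N](-\lambda_j^{*})\Lambda\Phi_j^{*}=0$ for $j=1,\dots,N$ then yields the linear system $[\,|x_1\rangle,\dots,|x_{2N}\rangle\,]\,\mathbf{M}_N=\mathbf{Y}$; checking that the coefficient matrix is exactly \eqref{eq:Mn} uses $\Sigma\Lambda=\Lambda\Sigma$, $\Lambda^{\dag}=-\Lambda$, $\Lambda\Sigma\Lambda=-\Sigma$ and the vector identity $v^{\T}\sigma_2 v=0$, which together produce the right conjugations and the Cauchy denominators $\lambda_j-\lambda_i^{*},\ -\lambda_j^{*}-\lambda_i^{*},\ \lambda_j+\lambda_i,\ -\lambda_j^{*}+\lambda_i$. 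Solving gives \eqref{eq:kernel1}, and absorbing the poles into $\mathbf{D}_{2N}$ rewrites the sum as \eqref{eq:n-fold-dt-1}. (An induction on $N$ via a Schur-complement / Weinstein--Aronszajn identity between $\mathbf{M}_{N-1}$ and $\mathbf{M}_N$ would be an alternative, but I expect the residue route to be shorter.)

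For (ii), expand \eqref{eq:n-fold-dt-1} at $\lambda=\infty$: since $(\lambda\mathbb{I}_{2N}-\mathbf{D}_{2N})^{-1}=\lambda^{-1}\mathbb{I}_{2N}+O(\lambda^{-2})$, one gets $\mathbf{T}[N]=\mathbb{I}_4+\mathbf{T}_1\lambda^{-1}+O(\lambda^{-2})$ with $\mathbf{T}_1=-\mathbf{Y}\mathbf{M}_N^{-1}\mathbf{Y}^{\dag}\Sigma$. Substituting $\Phi^{[N]}$ into the $s$-equation of \eqref{eq:laxcd} (whose coefficient is a degree-one polynomial in $\lambda$) and matching powers of $\lambda$ gives $\mathbf{V}_0[N]=\mathbf{V}_0+\tfrac12[\mathbf{T}_1,\Sigma_3]$, whose lower-left $2\times2$ block equals $\mathbf{Q}[N]=\mathbf{Q}-\mathbf{Y}_2\mathbf{M}_N^{-1}\mathbf{Y}_1^{\dag}$ once one uses $\mathbf{Y}^{\dag}\Sigma=[\,\mathbf{Y}_1^{\dag}\ \ \kappa\mathbf{Y}_2^{\dag}\,]$. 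For $\rho[N]$ I would match the $\lambda^{-1}$-pole of the transformed $y$-equation, which forces the diagonal block of $\mathbf{T}_{1,y}$ to equal $-\ii(\rho[N]-\rho)\Sigma_3$, and then invoke the logarithmic-derivative identities $\partial_y\ln\det\mathbf{M}_N=\mathrm{tr}(\mathbf{M}_N^{-1}\partial_y\mathbf{M}_N)$ and $\partial_s\ln\det\mathbf{M}_N=\mathrm{tr}(\mathbf{M}_N^{-1}\partial_s\mathbf{M}_N)$, whose right-hand sides are computed from \eqref{eq:laxcd} exactly as in the $N=1$ case behind \eqref{eq:backlund}; this yields $\rho[N]=\rho-\ln_{y,s}\det\mathbf{M}_N$ (for $N=1$ one has $\mathbf{M}_1=m\,\mathbb{I}_2$ with $m=\Phi_1^{\dag}\Sigma\Phi_1/(\lambda_1-\lambda_1^{*})$, the mixed blocks vanishing by $v^{\T}\sigma_2 v=0$, so $\ln_{y,s}\det\mathbf{M}_1=2\ln_{y,s}m$ reproduces the first line of \eqref{eq:backlund}). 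Finally the last line of \eqref{eq:backlund1} is not independent: feeding $\rho[N]$ and $\mathbf{Q}[N]$ into the conservation law $\rho_s=-\tfrac{\kappa}{2}(|q_1|^2+|q_2|^2)_y$ of \eqref{eq:cd}, subtracting the same relation for the old fields, and integrating once in $y$ gives $|q_1[N]|^2+|q_2[N]|^2=|q_1|^2+|q_2|^2+2\kappa\,\ln_{ss}\det\mathbf{M}_N$, using $\kappa^{-1}=\kappa$.

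The step I expect to be the genuine obstacle is (i): showing the iterated dressing closes into \eqref{eq:n-fold-dt-1} with \emph{precisely} the $\mathbf{M}_N$ of \eqref{eq:Mn}. The pole/kernel data fix the shape of $\mathbf{T}[N]$ at once, but pinning down the exact entries of $\mathbf{M}_N$ --- in particular that the mixed blocks carry the denominators $-\lambda_j^{*}-\lambda_i^{*}$ and $-\lambda_j^{*}+\lambda_i$ with the right conjugations and signs --- requires carefully tracking the interplay of the reality symmetry $\mathbf{T}[N]^{-1}(\lambda)=\Sigma\mathbf{T}[N]^{\dag}(\lambda^{*})\Sigma$ with the twisted pairing $\Phi_i\leftrightarrow\Lambda\Phi_i^{*}$. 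A secondary point is genericity: one must add, and use, the hypotheses $\det\mathbf{M}_N\neq0$ and $\rho[N]>0$ so that the Darboux transformation and the accompanying hodograph change of variables are well defined --- the analogue of the parameter $\tau$ used in \textbf{Remark} \ref{rem1} for the dark-soliton branch.
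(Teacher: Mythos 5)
Your proposal is correct and, in its essentials, follows the paper's own route: the paper's proof likewise takes the closed form \eqref{eq:n-fold-dt-1} as already established (it is derived in the text preceding the theorem by exactly the residue/kernel argument you describe in step (i)), expands $\mathbf{T}[N]$ at $\lambda=\infty$ to read off $\mathbf{Q}[N]=\mathbf{Q}-\mathbf{Y}_2\mathbf{M}_N^{-1}\mathbf{Y}_1^{\dag}$ and $\rho[N]=\rho+\ii\,(\mathbf{T}[N]^{[1]}_{1,1})_y$, uses the symmetries $\sigma_3(\cdot)\sigma_3=(\cdot)$ and $-\sigma_2(\cdot)^*\sigma_2=(\cdot)$ to reduce the diagonal blocks to scalars, and then converts $(\mathbf{T}[N]^{[1]}_{1,1})_{y}$ and $(\mathbf{T}[N]^{[1]}_{1,1})_{s}$ into logarithmic derivatives of $\det\mathbf{M}_N$ via the trace identity of \textbf{Proposition}~\ref{prop1} together with $-4\ii\bigl(\Phi_i\Sigma\Phi_j^{\dag}/(\lambda_i-\lambda_j^*)\bigr)_s=\Phi_i\Sigma\Sigma_3\Phi_j^{\dag}$ — the same mechanism you invoke. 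The one genuine divergence is the third line of \eqref{eq:backlund1}: you obtain it by feeding $\rho[N]$ into the conservation law $\rho_s=-\tfrac{\kappa}{2}(|q_1|^2+|q_2|^2)_y$ and integrating in $y$, whereas the paper derives it independently by expanding a wave function $\Theta$ near $\lambda=\infty$ and showing $\tfrac{\ii\kappa}{2}(|q_1|^2+|q_2|^2)\mathbb{I}_2=[\Theta^{[1]}_{1,1}\mathbb{I}_2]_s$ shifts by $[\mathbf{T}[N]^{[1]}_{1,1}]_s$ under dressing. Your shortcut is more economical but leaves an undetermined function of $s$ from the $y$-integration, which must be killed by boundary conditions (decay of $\ln_{ss}\det\mathbf{M}_N$ and of $\Delta(|q|^2)$ as $y\to\pm\infty$); the paper's route avoids this at the cost of a formal asymptotic expansion. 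Your closing remarks on the nondegeneracy hypotheses $\det\mathbf{M}_N\neq 0$ and $\rho[N]>0$ are apt — the paper also leaves these implicit in the theorem statement and only imposes $\rho[N]>0$ later in \eqref{eq:backlund1-csp}.
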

\begin{proof}
The multi-fold Darboux matrix $\mathbf{T}[N](y,s;\lambda)$ can be derived from the theory of standard Darboux transformation. Here we just prove the B\"acklund transformation \eqref{eq:backlund1}. By the theory of the Darboux matrix, we know that
\[
\mathbf{T}[N]_{y}+\mathbf{T}[N]\mathbf{U}=\mathbf{U}[N]\mathbf{T}[N],\,\,\,\,\,\,\, \mathbf{T}[N]_{s}+\mathbf{T}[N]\mathbf{V}=\mathbf{V}[N]\mathbf{T}[N].
\]
If we expand the multi-fold Darboux matrix $\mathbf{T}[N](y,s;\lambda)$ in the neighborhood of $\infty$
\[
\mathbf{T}[N]=\mathbb{I}_4+\mathbf{T}[N]^{[1]}(y,s)\lambda^{-1}+O(\lambda^{-2}),\,\,\,\,\,\,\, \mathbf{T}[N]^{[1]}(y,s)=-\mathbf{Y}
\mathbf{M}_{N}^{-1}\mathbf{Y}^{\dag}\Sigma,
\]
we then have
\begin{equation}
\begin{split}
\mathbf{Q}[N]&=\mathbf{Q}-\mathbf{Y}_2\mathbf{M}_N^{-1}\mathbf{Y}_1^{\dag},\\
\rho[N]&=\rho+\ii \left(\mathbf{T}[N]^{[1]}_{1,1}\right)_y,
\end{split}
\end{equation}
where $\mathbf{T}[N]^{[1]}_{1,1}$ represents the $(1,1)$ element of the matrix $\mathbf{T}[N]^{[1]}$. Actually, from the symmetry relation $\sigma_3\mathbf{Y}_1
\mathbf{M}_{N}^{-1}\mathbf{Y}_1^{\dag}\sigma_3=\mathbf{Y}_1
\mathbf{M}_{N}^{-1}\mathbf{Y}_1^{\dag}$ and $-\sigma_2\mathbf{Y}_1
\mathbf{M}_{N}^{-1}\mathbf{Y}_1^{\dag}\sigma_2=\mathbf{Y}_1
\mathbf{M}_{N}^{-1}\mathbf{Y}_1^{\dag}$, where $\sigma_3={\rm diag}(1,-1)$, it follows that $-\mathbf{Y}_1
\mathbf{M}_{N}^{-1}\mathbf{Y}_1^{\dag}=\mathbf{T}[N]^{[1]}_{1,1}\mathbb{I}_2$. Suppose there exists a matrix solution, which can be expanded in the deleted neighborhood of $\infty$ on $\mathbb{S}^2$:
\begin{equation}\label{eq:theta-psi}
\begin{bmatrix}
\Theta\\
\Psi\\
\end{bmatrix}=\left(\begin{bmatrix}
                      \mathbb{I}_2 \\
                      0 \\
                    \end{bmatrix}+\sum_{i=1}^{\infty}\Phi_i
\right)\ee^{\frac{\ii}{4}\lambda s\mathbb{I}_2}.
\end{equation}
{Substituting the solution \eqref{eq:theta-psi} into the Lax pair \eqref{eq:laxcd}, one obtains}
\[
\left(\mathbf{Q}^{\dag}\Psi\Theta^{-1}\right)_s=\mathbf{Q}_s^{\dag}\Psi\Theta^{-1}+\frac{\ii}{2}(|q_1|^2+|q_2|^2)\mathbb{I}_2
-\frac{\ii\lambda}{2}\mathbf{Q}^{\dag}\Psi\Theta^{-1}-\frac{\ii\kappa}{2}(\mathbf{Q}^{\dag}\Psi\Theta^{-1})^2,
\]
which yields the following expansion
\[
\mathbf{Q}^{\dag}\Psi\Theta^{-1}=(|q_1|^2+|q_2|^2)\mathbb{I}_2 \lambda^{-1}+O(\lambda^{-2}).
\]
From the Lax pair \eqref{eq:laxcd}, we have
\begin{equation}\label{eq:thetas}
\Theta_s=\left(\frac{\ii}{4}\lambda\mathbb{I}_2+\frac{\ii\kappa}{2}(|q_1|^2+|q_2|^2)\mathbb{I}_2 \lambda^{-1}+O(\lambda^{-2})\right)\Theta.
\end{equation}
Inserting the asymptotic expansion
\[
\Theta=\left(\mathbb{I}_2+\sum_{i=1}^{\infty}\Theta^{[i]}\lambda^{-i}\right)\ee^{\frac{\ii}{4}\lambda s\mathbb{I}_2}
\]
{in \eqref{eq:theta-psi} into the above equation \eqref{eq:thetas}, one arrives at}
\[
\frac{\ii\kappa}{2}(|q_1|^2+|q_2|^2)\mathbb{I}_2=\left[\Theta^{[1]}_{1,1}\mathbb{I}_2\right]_s
\]
where $\Theta^{[1]}_{1,1}$ represents the $(1,1)$ element of matrix $\Theta^{[1]}$.
If we apply the multi-fold Darboux matrix on the matrix solution \eqref{eq:theta-psi}, a similar result can be obtained:
\[
\frac{\ii\kappa}{2}(|q_1[N]|^2+|q_2[N]|^2)\mathbb{I}_2=\left[\left(\mathbf{T}[N]^{[1]}_{1,1}+\Theta^{[1]}_{1,1}\right)\mathbb{I}_2\right]_s.
\]
Thus we have
\[
\frac{\ii\kappa}{2}(|q_1[N]|^2+|q_2[N]|^2)\mathbb{I}_2=\frac{\ii\kappa}{2}(|q_1|^2+|q_2|^2)\mathbb{I}_2+\left[\mathbf{T}[N]^{[1]}_{1,1}\mathbb{I}_2\right]_s.
\]
{What is left} is to prove that the expression $\mathbf{T}[N]^{[1]}_{1,1}$ can be rewritten in a compact form. {Due to the fact that}
\[
\Phi_{i,s}=\mathbf{V}(y,s;\lambda_i)\Phi_i,\,\,\,\,\, -\Phi_{j,s}^{\dag}\Sigma=\Phi_j^{\dag}\Sigma\mathbf{V}(y,s;\lambda_j^*),
\]
we have
\[
-4\ii\left(\frac{\Phi_i\Sigma\Phi_j^{\dag}}{\lambda_i-\lambda_j^*}\right)_s=\Phi_i\Sigma\Sigma_3\Phi_j^{\dag}.
\]
Together with {\bf Proposition} \ref{prop1}, we have
\[
\left[\mathbf{T}[N]^{[1]}_{1,1}\right]_s=\frac{1}{4}\left[2\mathbf{T}[N]^{[1]}_{1,1}-2\mathbf{T}[N]^{[1]}_{3,3}\right]_s
=\frac{1}{4}\left[-\mathbf{Y}_1
\mathbf{M}_{N}^{-1}\mathbf{Y}_1^{\dag}+\kappa \mathbf{Y}_2
\mathbf{M}_{N}^{-1}\mathbf{Y}_2^{\dag}\right]_s=\ii\ln_{ss}\det(\mathbf{M}_N).
\]
{Similarly, we can derive}
\[
\left[\mathbf{T}[N]^{[1]}_{1,1}\right]_y=\ii\ln_{ys}\det(\mathbf{M}_N),
\]
which {completes the proof}.
\end{proof}

Before considering the higher order Darboux matrix and the general one, we rewrite the matrix $\mathbf{M}_N$ \eqref{eq:Mn} in the following form
\begin{equation}
\mathbf{M}_N=\mathbf{K}_N\mathbf{S}_{N}\mathbf{K}_N^{\dag}
\end{equation}
where
\[
\mathbf{K}_N=\begin{bmatrix}
\Phi_1^{\dag} &0&0&0&\cdots&0&0 \\
0&-\Phi_1^{\T}\Lambda &0&0&\cdots&0&0 \\
\vdots &\vdots &\vdots&\vdots&\cdots&\vdots&\vdots \\
0&0 &0&0&\cdots&\Phi_N^{\dag}&0 \\
0&0 &0&0&\cdots&0&-\Phi_N^{\T}\Lambda \\
\end{bmatrix},\,\,\,\,\,\,\, \mathbf{S}_{N}=\left(\begin{bmatrix}
                        \frac{\Sigma}{\lambda_j-\lambda_i^*} & \frac{\Sigma}{-\lambda_j^*-\lambda_i^*} \\[8pt]
                        \frac{\Sigma}{\lambda_j+\lambda_i} & \frac{\Sigma}{-\lambda_j^*+\lambda_i} \\
                     \end{bmatrix}
\right)_{1\leq i,j\leq N}.
\]
Suppose we have the following expansions:
\begin{equation}
\Phi_k=\sum_{i=0}^{\infty}\Phi_k^{[i]}(\lambda-\lambda_k)^{i},\,\,\,\,\,\,\, k=1,2,\cdots, l,
\end{equation}
through the standard limit technique for the multi-fold Darboux matrix \cite{guo12}, the general Darboux matrix can be obtained:
\begin{equation}\label{eq:general-dt}
\mathbf{T}[N](y,s;\lambda)=\mathbb{I}_4-\mathbf{Y}
\mathbf{M}_{N}^{-1}\mathbf{D}_{2N}\mathbf{Y}^{\dag}\Sigma,
\end{equation}
where
\begin{equation}\label{eq:general-y12}
\mathbf{Y}=\left[\mathbf{Y}^{[1]},\mathbf{Y}^{[2]},\cdots, \mathbf{Y}^{[l]}\right]=\begin{bmatrix}
\mathbf{Y}_1\\
\mathbf{Y}_2\\
\end{bmatrix},\,\,\,\,\,\,\,\,
\mathbf{Y}^{[k]}=\left[\Phi_k^{[0]},\Lambda\Phi_k^{[0]*},\cdots, \Phi_k^{[n_k]},\Lambda\Phi_k^{[n_k]*}\right],
\end{equation}
and
\[
\mathbf{D}_{2N}={\rm diag}\left(\mathbf{D}_{1},\mathbf{D}_{2},\cdots,\mathbf{D}_{l}\right),\,\,\,\,\,\,
\mathbf{D}_{k}=\begin{bmatrix}
\frac{1}{\lambda-\lambda_k^*} &0& \cdots &0&0 \\
0&\frac{1}{\lambda+\lambda_k}&\cdots &0&0 \\
\cdots \\
\frac{1}{(\lambda-\lambda_k^*)^{s_k+1}}&0&\cdots&\frac{1}{(\lambda-\lambda_k^*)}&0 \\
0&\frac{(-1)^{s_k}}{(\lambda+\lambda_k)^{s_k+1}} &\cdots &0&\frac{1}{\lambda+\lambda_k}\\
\end{bmatrix},
\]
\begin{equation}\label{eq:general-mn}
\mathbf{M}_N=\hat{\mathbf{K}}_N\mathbf{S}_{N}\hat{\mathbf{K}}_N^{\dag}.
\end{equation}
{Here $\sum_{k=1}^{l}s_k=N$, and}
\[
\hat{\mathbf{K}}_N=\begin{bmatrix}
\mathbf{K}_1 &0&\cdots&0 \\
0& \mathbf{K}_2 &\cdots&0 \\
\vdots &\vdots &\cdots&\vdots \\
0&0&\cdots&\mathbf{K}_l \\
\end{bmatrix},\,\,\,\,\,\,\,\,\, \mathbf{S}_{N}=\left(\mathbf{S}_{i,j}\right)_{1\leq i,j\leq l},
\]
\[
\mathbf{K}_k=\begin{bmatrix}
\Phi_k^{[0]\dag} &0&\cdots&0&0 \\
0&-\Phi_k^{[0]\T}\Lambda &\cdots&0&0 \\
\vdots &\vdots &\cdots&\vdots&\vdots \\
\Phi_k^{[s_k-1]\dag}&0&\cdots&\Phi_k^{[0]\dag}&0 \\
0&-\Phi_k^{[s_k-1]\T}\Lambda&\cdots&0&-\Phi_k^{[0]\T}\Lambda \\
\end{bmatrix},\]
\[ \mathbf{S}_{i,j}=\begin{bmatrix}
\binom{0}{0}\frac{\Sigma}{\lambda_j-\lambda_i^*} &\binom{0}{0}\frac{\Sigma}{-\lambda_j^*-\lambda_i^*}&\cdots&\binom{s_j-1}{0}\frac{(-1)^{s_j-1}\Sigma}{(\lambda_j-\lambda_i^*)^{s_j}}
&\binom{s_j-1}{0}\frac{\Sigma}{(-\lambda_j^*-\lambda_i^*)^{s_j}} \\
\binom{0}{0}\frac{\Sigma}{\lambda_j+\lambda_i} & \binom{0}{0}\frac{\Sigma}{-\lambda_j^*+\lambda_i} &\cdots&\binom{s_j-1}{0}\frac{(-1)^{s_j-1}\Sigma}{(\lambda_j+\lambda_i)^{s_j}} & \binom{s_j-1}{0}\frac{\Sigma}{(-\lambda_j^*+\lambda_i)^{s_j}} \\
\vdots &\vdots &\ddots&\vdots&\vdots \\
\binom{s_i-1}{s_i-1}\frac{\Sigma}{(\lambda_j-\lambda_i^*)^{s_i}} &\binom{s_i-1}{s_i-1}\frac{\Sigma}{(-\lambda_j^*-\lambda_i^*)^{s_i}}&\cdots&\binom{s_i+s_j-2}{s_i-1}\frac{(-1)^{s_j-1}\Sigma}{(\lambda_j-\lambda_i^*)^{s_i+s_j-1}}&\binom{s_i+s_j-2}{s_i-1}\frac{\Sigma}{(-\lambda_j^*-\lambda_i^*)^{s_i+s_j-1}} \\
\binom{s_i-1}{s_i-1}\frac{(-1)^{s_i-1}\Sigma}{(\lambda_j+\lambda_i)^{s_i}} & \binom{s_i-1}{s_i-1}\frac{(-1)^{s_i-1}\Sigma}{(-\lambda_j^*+\lambda_i)^{s_i}} &\cdots&\binom{s_i+s_j-2}{s_i-1}\frac{(-1)^{s_i+s_j-2}\Sigma}{(\lambda_j+\lambda_i)^{s_i+s_j-1}} & \binom{s_i+s_j-2}{s_i-1}\frac{(-1)^{s_i-1}\Sigma}{(-\lambda_j^*+\lambda_i)^{s_i+s_j-1}} \\
\end{bmatrix}.
\]
The B\"acklund transformation has the same form as in \eqref{eq:backlund1} with new matrices $\mathbf{Y}_1$, $\mathbf{Y}_2$ and $\mathbf{M}_N$ given in \eqref{eq:general-y12} and \eqref{eq:general-mn}.

For the nonsingular finite gap solution \cite{Belokos94}, there exists an analytic matrix solution $\Phi(y,s;\lambda)$ normalised $\Phi(0,0;\lambda)=\mathbb{I}_4$  by the {\bf Proposition} 2.1 \cite{BilmanM17}. So we can construct the localized wave solutions on the background of finite gap solutions. In this work, we merely consider the background of zero and plane wave solutions.

Therefore, through the above theorems and analysis, the regular solutions of the CCSP equation \eqref{eq:ccsp} can be represented as
\begin{equation}\label{eq:backlund1-csp}
\begin{split}
\mathbf{Q}[N]&=\mathbf{Q}-\mathbf{Y}_2\mathbf{M}_N^{-1}\mathbf{Y}_1^{\dag},\\
\rho[N]&=\rho-\ln_{y,s}\left(\det(\mathbf{M}_N)\right)>0,\\
x&=\int_{(y_0,s_0)}^{(y,s)}\rho(y',s')\dd y'-\frac{\kappa}{2}(|q_1(y',s')|^2+|q_2(y',s')|^2)\dd s'-\ln_{s}\left(\det(\mathbf{M}_N)\right),\,\,\,\, t=-s,
\end{split}
\end{equation}
where $\mathbf{Y}_i$ ($i=1,2$) and $\mathbf{M}_N$ are given in \eqref{eq:general-y12} and \eqref{eq:general-mn}, respectively.

\section{Soliton solution with VBC}\label{sec3}
In this section, we consider how to derive the soliton and multi-soliton solutions with VBC. Since the exact solution is singular in the defocusing case, we only consider the focusing case $\kappa=1$. Exact solutions can be constructed directly by the Darboux transformation, and the soliton interaction can be analyzed directly from the multi-soliton solution. In order to get a better understanding for the solutions from a spectral viewpoint, we give a simple inverse scattering analysis for this problem \cite{AblowitzC91}.

\subsection{Scattering and inverse scattering analysis}
Consider the spectral problem
\begin{equation}\label{eq:spec}
\Phi_y=\frac{1}{\lambda}\begin{bmatrix}
-\ii\rho\mathbb{I}_2& -\mathbf{Q}_y^{\dag}\\
\mathbf{Q}_y&\ii \rho\mathbb{I}_2\\
\end{bmatrix}\Phi,
\end{equation}
with the boundary condition $\rho\to \frac{\delta}{2}>0$, $\mathbf{Q}\to \mathbf{0}$ as $y\to \pm\infty.$ Suppose the potential functions satisfy the following conditions
\begin{equation}\label{eq:bc-zero}
\int_{\mathbb{R}}\left|\rho-\frac{\delta}{2}\right|{\rm d}y< \infty,\,\,\,\,\,\, \int_{\mathbb{R}}\left(\left|q_{1,y}\right|+\left|q_{2,y}\right|\right){\rm d}y<\infty.
\end{equation}
Performing a simple transformation $\Phi=\mathbf{m}\exp\left[-\ii\frac{\delta}{2\lambda}\Sigma_3y\right]$ into the spectral problem \eqref{eq:spec}, we obtain
\begin{equation}\label{eq:spec-1}
\mathbf{m}_y=-\frac{\ii\delta}{2\lambda}\left(\Sigma_3\mathbf{m}-\mathbf{m}\Sigma_3\right)+\frac{1}{\lambda}\Delta \mathbf{U}\mathbf{m},\,\,\,\,\,\,\, \Delta\mathbf{U}(y)=\begin{bmatrix}
-\ii(\rho-\frac{\delta}{2})\mathbb{I}_2& -\mathbf{Q}_y^{\dag}\\[5pt]
\mathbf{Q}_y&\ii (\rho-\frac{\delta}{2})\mathbb{I}_2\\
\end{bmatrix},
\end{equation}
which yields the following integral equations
\begin{equation}\label{eq:int-eq}
\mathbf{m}_{\pm}(y;\lambda)=\mathbb{I}_4+\frac{1}{\lambda}\int_{\pm\infty}^{y}\ee^{\frac{\ii\delta}{2\lambda}(y'-y)\Sigma_3}\Delta\mathbf{U}(y')\mathbf{m}_{\pm}(y';\lambda)\ee^{-\frac{\ii\delta}{2\lambda}(y'-y)\Sigma_3}\dd y'.
\end{equation}
By the standard Neumann series method,  $\mathbf{m}^{up}=(\mathbf{m}_{1,-},\mathbf{m}_{2,+})$ is analytic in the upper half plane $\mathbb{C}^{+}$; $\mathbf{m}^{lw}=(\mathbf{m}_{1,+},\mathbf{m}_{2,-})$ is analytic in the lower half plane $\mathbb{C}^{-}$, where $\mathbf{m}_{\pm,1}$, $\mathbf{m}_{\pm,2}$ are the first two and last two columns of the matrices $\mathbf{m}_{\pm}$ respectively. $\Phi_+$ and $\Phi_-$ are fundamental solutions of the linear system of differential equations \eqref{eq:spec}, and therefore we can define the scattering matrix $\mathbf{S}(\lambda)$:
\begin{equation}\label{eq:scattering}
\Phi_+(y,s;\lambda)=\Phi_-(y,s;\lambda)\mathbf{S}(\lambda),\,\,\,\,\,\,\, \Phi_{\pm}(y,s;\lambda)=
\mathbf{m}_{\pm}(y;\lambda)\exp\left[-\ii\frac{\delta}{2\lambda}\Sigma_3y\right],
\end{equation}
which yields $\det(\mathbf{S}(\lambda))=1.$

The symmetry relations for the solution $\Phi_{\pm}(x,t;\lambda)$ can be obtained by the symmetries of the solutions $\Phi_\pm$. Since
\begin{equation}\label{eq:sym-phi1}
\Phi_{\pm}(y,s;\lambda)\Phi_{\pm}^{\dag}(y,s;\lambda^*)=\mathbb{I}_4,
\end{equation}
and
\begin{equation}\label{eq:sym-phi2}
\Lambda\Phi_{\pm}(y,s;\lambda)\Lambda^{-1}=\Phi_{\pm}^{*}(y,s;-\lambda^*)
\end{equation}
the scattering matrix satisfies
\begin{equation}\label{eq:sym-conj}
\mathbf{S}(\lambda)\mathbf{S}^{\dag}(\lambda^*)=\mathbb{I}_4,
\end{equation}
and
\begin{equation}\label{eq:sym-inver}
\Lambda\mathbf{S}(\lambda)\Lambda^{-1}=\mathbf{S}^*(-\lambda^*).
\end{equation}
The above symmetry relations \eqref{eq:sym-conj} and \eqref{eq:sym-inver} allow to write
\begin{equation}
\mathbf{S}(\lambda)=\begin{bmatrix}
\mathbf{a}(\lambda) & \mathbf{c}(\lambda)  \\
\mathbf{b}(\lambda)& \mathbf{d}(\lambda) \\
\end{bmatrix},\,\,\,\,\,\, \mathbf{S}^{-1}(\lambda)=\mathbf{S}^{\dag}(\lambda^*)=\begin{bmatrix}
\mathbf{a}^{\dag}(\lambda^*) & \mathbf{b}^{\dag}(\lambda^*)  \\
\mathbf{c}^{\dag}(\lambda^*)& \mathbf{d}^{\dag}(\lambda^*) \\
\end{bmatrix}
\end{equation}
where
\begin{equation}
\begin{split}
 \mathbf{a}(\lambda)&=\begin{bmatrix}
a_{1}(\lambda) & -a_{2}^*(-\lambda^*) \\
a_{2}(\lambda) & a_{1}^*(-\lambda^*) \\
\end{bmatrix},\,\,\,\,\,\,\, \mathbf{b}(\lambda)=\begin{bmatrix}
b_{1}(\lambda) & -b_{2}^*(-\lambda^*) \\
b_{2}(\lambda) & b_{1}^*(-\lambda^*) \\
\end{bmatrix},\\
\mathbf{c}(\lambda)&=\begin{bmatrix}
c_{1}(\lambda) & -c_{2}^*(-\lambda^*) \\
c_{2}(\lambda) & c_{1}^*(-\lambda^*) \\
\end{bmatrix},\,\,\,\,\,\,\,\mathbf{d}(\lambda)=\begin{bmatrix}
d_{1}(\lambda) & -d_{2}^*(-\lambda^*) \\
d_{2}(\lambda) & d_{1}^*(-\lambda^*) \\
\end{bmatrix}.
\end{split}
\end{equation}
Next, we derive the jump condition and the scattering data. {To this end}, we assume that the determinant $\det(\mathbf{a}(\lambda))$ on the real line is nonzero. {If the determinant $\det(\mathbf{a}(\lambda))$ is zero somewhere on the real line, the corresponding point is called a spectral singularity \cite{Zhou89}. Thus, the purpose of this assumption is to avoid spectral singularities}.

Taking the limit $x\to -\infty$ in \eqref{eq:scattering}, we obtain that
\begin{equation}\label{eq:a-d}
\mathbf{a}(\lambda)=\lim_{y\to -\infty} \mathbf{m}_{1,1,+}(y,s;\lambda),\,\,\,\,\,\,\,\, \mathbf{d}(\lambda)=\lim_{y\to -\infty} \mathbf{m}_{2,2,+}(y,s;\lambda)
\end{equation}
where $\mathbf{m}_{1,1,+}(y,s;\lambda)$ denotes the first two rows of $\mathbf{m}_{1,+}(y,s;\lambda)$, $\mathbf{m}_{2,2,+}(y,s;\lambda)$ denotes the last two rows of $\mathbf{m}_{2,+}(y,s;\lambda)$. Then equation \eqref{eq:a-d} shows that $\mathbf{a}(\lambda)$ is analytic in the lower half plane, and $\mathbf{d}(\lambda)$ is analytic in the upper half plane. {We then} normalize the analytic matrices $\mathbf{m}^{\rm up}$ or $\mathbf{m}^{\rm lw}$ with unimodular:
\begin{equation}\label{eq:m+}
\mathbf{m}\equiv\mathbf{m}^+=\mathbf{m}^{up}{\rm diag}\left(\mathbb{I}_2,[\mathbf{d}(\lambda)]^{-1}\right)=\left(
\mathbf{m}_{1,-},\mathbf{m}_{2,-}\right)\begin{bmatrix}
\mathbb{I}_2& \mathbf{c}(\lambda)[\mathbf{d}(\lambda)]^{-1}\ee^{-\frac{\delta\ii}{\lambda}y\mathbb{I}_2}\\
0 & \mathbb{I}_2 \\
\end{bmatrix},\,\,\,\,\,\, \lambda\in \mathbb{C}^+
\end{equation}
and
\begin{equation}\label{eq:m-}
\mathbf{m}\equiv\mathbf{m}^-=\mathbf{m}^{lw}{\rm diag}\left([\mathbf{a}(\lambda)]^{-1},\mathbb{I}_2\right)=\left(
\mathbf{m}_{1,-},\mathbf{m}_{2,-}\right)\begin{bmatrix}
\mathbb{I}_2&0\\
\mathbf{b}(\lambda)[\mathbf{a}(\lambda)]^{-1} \ee^{\frac{\delta\ii}{\lambda}y\mathbb{I}_2} & \mathbb{I}_2 \\
\end{bmatrix},\,\,\,\,\,\, \lambda\in \mathbb{C}^-.
\end{equation}
Thus, the jump condition on the real line is
\begin{equation}
\mathbf{m}^+=\mathbf{m}^-\begin{bmatrix}
\mathbb{I}_2 &\mathbf{c}(\lambda)[\mathbf{d}(\lambda)]^{-1}\ee^{-\frac{\delta\ii}{\lambda}y\mathbb{I}_2} \\
-\mathbf{b}(\lambda)[\mathbf{a}(\lambda)]^{-1}\ee^{\frac{\delta\ii}{\lambda}y\mathbb{I}_2} &\mathbb{I}_2-\mathbf{b}(\lambda)[\mathbf{a}(\lambda)]^{-1}\mathbf{c}(\lambda)[\mathbf{d}(\lambda)]^{-1} \\
\end{bmatrix},\,\,\,\,\,\,\,\,\, \lambda\in\mathbb{R}.
\end{equation}
Moreover, due to the symmetry relation $\mathbf{S}^{\dag}(\lambda^*)\mathbf{S}(\lambda)=\mathbb{I}_4$, we obtain $\mathbf{a}^{\dag}(\lambda^*)\mathbf{c}(\lambda)+\mathbf{b}^{\dag}(\lambda^*)\mathbf{d}(\lambda)=0$, which implies $\mathbf{c}(\lambda)\mathbf{d}^{-1}(\lambda)=-(\mathbf{b}(\lambda^*)\mathbf{a}^{-1}(\lambda^*))^{\dag}$. Thus the jump condition can be simplified as
\begin{equation}
\mathbf{m}^+=\mathbf{m}^-\begin{bmatrix}
\mathbb{I}_2 &\mathbf{r}^{\dag}(\lambda)\ee^{-\frac{\delta\ii}{\lambda}y\mathbb{I}_2} \\
\mathbf{r}(\lambda)\ee^{\frac{\delta\ii}{\lambda}y\mathbb{I}_2} &\mathbb{I}_2+\mathbf{r}(\lambda)\mathbf{r}^{\dag}(\lambda) \\
\end{bmatrix},\,\,\,\,\,\,\,\,\, \lambda\in\mathbb{R},
\end{equation}
where \[\mathbf{r}(\lambda)=-\mathbf{b}(\lambda)\mathbf{a}^{-1}(\lambda)=\begin{bmatrix}
r_1(\lambda) & -r_2(-\lambda) \\
r_2(\lambda) & r_1(-\lambda) \\
\end{bmatrix}.
 \]
Here $\mathbf{r}(\lambda)$ is called the matrix reflection coefficient.

\subsection{Scattering data and scattering map}

To proceed to {the case of} the discrete spectrum, we assume the elements of scattering matrix $\mathbf{a}(\lambda)$ and $\mathbf{d}(\lambda)$ can be analytically extended to the corresponding complex plane.
{Since} \[\det(\mathbf{S}(\lambda))=\det(\mathbf{a}\mathbf{d})\det
\left(\mathbb{I}_2-\mathbf{b}\mathbf{a}^{-1}\mathbf{c}\mathbf{d}^{-1}\right)=\det(\mathbf{a}(\lambda)\mathbf{a}
^{\dag}(\lambda^*))\det
\left(\mathbb{I}_2+\mathbf{r}(\lambda)\mathbf{r}^{\dag}(\lambda^*)\right),\]
from the  fact of  $\mathbf{S}(\lambda)=1$, one has $|\alpha(\lambda)|^2=[\det\left(\mathbb{I}_2+\mathbf{r}(\lambda)\mathbf{r}^{\dag}(\lambda^*)\right)]^{-1}$, where $\alpha(\lambda)=\det(\mathbf{a}(\lambda)).$

In what follows, we consider the discrete scattering data following the ideas in \cite{FaddeevT87}. Since the scattering matrix $\mathbf{a}(\lambda)$ is $2\times 2$, there are two different cases: one is single zeros of $\alpha(\lambda)$; the otherone is double zeros of $\alpha(\lambda)$. Firstly we consider the single zeros of $\det(\mathbf{a}(\lambda))$. Assume that the determinant of $\det(\mathbf{a}(\lambda))$ has the simple zeros $\lambda_1$ and $-\lambda_1^*$ located in the lower half plane, since \[\det\left(\left[\Phi_{1,+}(y,s;\lambda),\Phi_{2,-}(y,s;\lambda)\right]\right)=
\alpha(\lambda)\equiv\frac{(\lambda-\lambda_1)(\lambda+\lambda_1^*)}
{(\lambda-\lambda_1^*)(\lambda+\lambda_1)}\widehat{\alpha(\lambda)},
\] the eigenfunctions satisfy the relation
\begin{equation}\label{eq:proportion}
\begin{split}
\Phi_{1,+}(y,s;\lambda_1)\begin{bmatrix}
\alpha_1(s;\lambda_1)\\
\alpha_2(s;\lambda_1)\\
\end{bmatrix}&=\Phi_{2,-}(y,s;\lambda_1)\begin{bmatrix}
\alpha_3(s;\lambda_1)\\
\alpha_4(s;\lambda_1)\\
\end{bmatrix},\\
\Phi_{1,+}(y,s;-\lambda_1^*)\begin{bmatrix}
-\alpha_2^*(s;\lambda_1)\\
\alpha_1^*(s;\lambda_1)\\
\end{bmatrix}&=\Phi_{2,-}(y,s;-\lambda_1^*)\begin{bmatrix}
-\alpha_4^*(s;\lambda_1)\\
\alpha_3^*(s;\lambda_1)\\
\end{bmatrix},
\end{split}
\end{equation}
where $\alpha_i$, $i=1,2,3,4$ are the coefficients of proportionality.  In order to get a closed form, we need another linear relation which comes from the degenerate property of the matrix $\mathbf{a}(\lambda)$. We consider the meromorphic function
\[
\Phi^-(y,s;\lambda)=\left[\frac{\Phi_{1,+}(y,s;\lambda){\rm adj}(\mathbf{a}(\lambda))}{\det(\mathbf{a}(\lambda))},\Phi_{2,-}(y,s;\lambda)\right]
\]
which can be expanded by
\[
\Phi^-(y,s;\lambda)=\frac{\left[\mathbf{K}_1^{[-1]}(y,s),0\right]}{\lambda-\lambda_1}
+\left[\mathbf{K}_1^{[0]}(y,s),\mathbf{K}_2^{[0]}(y,s)\right]+\left[\mathbf{K}_1^{[1]}(y,s),\mathbf{K}_2^{[1]}(y,s)\right](\lambda-\lambda_1)+\cdots.
\]
Since $\det(\Phi^-(y,s;\lambda))=1$, we have
\begin{equation}
\begin{split}
\det\left(\left[\mathbf{K}_1^{[-1]}(y,s),\mathbf{K}_2^{[0]}(y,s)\right]\right)&=\underset{\lambda=\lambda_1}{\rm Res}\left(\frac{1}{\alpha(\lambda)}\right)
\det\left(\left[\Phi_{1,+}(y,s;\lambda_1),\Phi_{2,-}(y,s;\lambda_1)\right]\right)
\det\left({\rm adj}(\mathbf{a}(\lambda_1))\right)\\
&=\underset{\lambda=\lambda_1}{\rm Res}\left(\frac{1}{\alpha(\lambda)}\right)\det\left(\mathbf{a}(\lambda_1)\right)
\det\left({\rm adj}(\mathbf{a}(\lambda_1))\right)=0,
\end{split}
\end{equation}
which implies that $\left[\mathbf{K}_1^{[-1]}(y,s),\mathbf{K}_2^{[0]}(y,s)\right]$ has a second order zero at the point $\lambda=\lambda_1$.  The second order zero is a consequence of the matrix $\mathbf{a}(\lambda)$ being degenerate. The corresponding linear relations are \[
\mathbf{K}_1^{[-1]}(y,s)\begin{bmatrix}
\beta_1(\lambda_1)\\
\beta_2(\lambda_1)\\
\end{bmatrix}
=0,\]
and
\[
\mathbf{K}_1^{[-1]}(y,s)\left(\mathbf{a}(\lambda_1)\begin{bmatrix}
\alpha_1(\lambda_1)\\
\alpha_2(\lambda_1)\\
\end{bmatrix}\right)=\mathbf{K}_2^{[0]}(y,s)\begin{bmatrix}
\alpha_3(s;\lambda_1)\\
\alpha_4(s;\lambda_1)\\
\end{bmatrix}
\]
where the nonzero property of vector $\mathbf{a}(\lambda_1)\begin{bmatrix}
\alpha_1(\lambda_1)\\
\alpha_2(\lambda_1)\\
\end{bmatrix}\ne 0$ can be derived from the asymptotic behavior of $\Phi_{1,+}(y,s;\lambda)$.
We have a similar relation at the point $\lambda=-\lambda_1^*$.

Secondly, we consider the double zeros of $\alpha(\lambda),$ i.e. \[\alpha(\lambda)=\frac{(\lambda-\lambda_1)^2(\lambda+\lambda_1^*)^2}{
(\lambda+\lambda_1)^2(\lambda-\lambda_1^*)^2}\widehat{\alpha(\lambda)}.\]
The matrix function $\mathbf{a}(\lambda)$ can be decomposed in the following form
\[
\mathbf{a}(\lambda)=\frac{(\lambda-\lambda_1)(\lambda+\lambda_1^*)}{(\lambda+\lambda_1)(\lambda-\lambda_1^*)}\widehat{\mathbf{a}(\lambda)},
\]
where $\widehat{\mathbf{a}(\lambda)}$ is a non-degenerate matrix in the lower half plane.
In this case, there are two eigenfunctions at $\lambda=\lambda_1$, {which satisfy the following relations}
\begin{equation}\label{eq:proportion1}
\begin{split}
\Phi_{1,+}(y,s;\lambda_1)\begin{bmatrix}
\alpha_1(s;\lambda_1)\\
\alpha_2(s;\lambda_1)\\
\end{bmatrix}&=\Phi_{2,-}(y,s;\lambda_1)\begin{bmatrix}
\alpha_3(s;\lambda_1)\\
\alpha_4(s;\lambda_1)\\
\end{bmatrix},\\
\Phi_{1,+}(y,s;\lambda_1)\begin{bmatrix}
\beta_1(s;\lambda_1)\\
\beta_2(s;\lambda_1)\\
\end{bmatrix}&=\Phi_{2,-}(y,s;\lambda_1)\begin{bmatrix}
\beta_3(s;\lambda_1)\\
\beta_4(s;\lambda_1)\\
\end{bmatrix}\,,\\
\Phi_{1,+}(y,s;-\lambda_1^*)\begin{bmatrix}
-\alpha_2^*(s;\lambda_1)\\
\alpha_1^*(s;\lambda_1)\\
\end{bmatrix}&=\Phi_{2,-}(y,s;-\lambda_1^*)\begin{bmatrix}
-\alpha_4^*(s;\lambda_1)\\
\alpha_3^*(s;\lambda_1)\\
\end{bmatrix},\\
\Phi_{1,+}(y,s;-\lambda_1^*)\begin{bmatrix}
-\beta_2^*(s;\lambda_1)\\
\beta_1^*(s;\lambda_1)\\
\end{bmatrix}&=\Phi_{2,-}(y,s;-\lambda_1^*)\begin{bmatrix}
-\beta_4^*(s;\lambda_1)\\
\beta_3^*(s;\lambda_1)\\
\end{bmatrix},
\end{split}
\end{equation}
where $\alpha_i$, $\beta_i$, $i=1,2,3,4$ are the coefficients of proportionality, and the vectors $(\alpha_1,\alpha_2,\alpha_3,\alpha_4)$ and $(\beta_1,\beta_2,\beta_3,\beta_4)$ are linear independent. {To find a closed form}, we need another linear relation which comes from the degeneracy of the matrix $\mathbf{a}(\lambda)$. In this case, we consider the meromorphic function
\[
\Phi^-(y,s;\lambda)=\left[\Phi_{1,+}(y,s;\lambda)\widehat{\mathbf{a}(\lambda)}^{-1}\frac{
(\lambda+\lambda_1)(\lambda-\lambda_1^*)}{(\lambda-\lambda_1)(\lambda+\lambda_1^*)},
\Phi_{2,-}(y,s;\lambda)\right]
\]
which can be expanded by
\[
\Phi^-(y,s;\lambda)=\frac{\left[\mathbf{K}_1^{[-1]}(y,s),0\right]}{\lambda-\lambda_1}
+\left[\mathbf{K}_1^{[0]}(y,s),\mathbf{K}_2^{[0]}(y,s)\right]+\left[\mathbf{K}_1^{[1]}(y,s),\mathbf{K}_2^{[1]}(y,s)\right](\lambda-\lambda_1)+\cdots.
\]
Since $\det(\Phi^-(y,s;\lambda))=1$,  one has
\begin{equation}
\begin{split}
\det\left(\left[\mathbf{K}_1^{[-1]}(y,s),\mathbf{K}_2^{[0]}(y,s)\right]\right)&=\frac{
2\lambda_1(\lambda_1-\lambda_1^*)}{\lambda_1+\lambda_1^*}\left(\det\left(\widehat{\mathbf{a}(\lambda_1)}\right)\right)^{-1}
\det\left(\left[\Phi_{1,+}(y,s;\lambda_1),\Phi_{2,-}(y,s;\lambda_1)\right]\right)\\
&=\frac{
2\lambda_1(\lambda_1-\lambda_1^*)}{\lambda_1+\lambda_1^*}\left(\det\left(\widehat{\mathbf{a}(\lambda_1)}\right)\right)^{-1}\det\left(\mathbf{a}(\lambda_1)\right)
=0,
\end{split}
\end{equation}
which implies that $\left[\mathbf{K}_1^{[-1]}(y,s),\mathbf{K}_2^{[0]}(y,s)\right]$ still possesses a second order zero at the point $\lambda=\lambda_1$. The double zero point is  a consequence of the degeneracy of the matrix $\mathbf{a}(\lambda).$ The corresponding linear relations between them are
\[
\frac{\lambda_1+\lambda_1^*}{2\lambda_1(\lambda_1-\lambda_1^*)}\mathbf{K}_1^{[-1]}(y,s) \left(\widehat{\mathbf{a}(\lambda_1)}\begin{bmatrix}
\alpha_1(\lambda_1)\\
\alpha_2(\lambda_1)\\
\end{bmatrix}\right)=\mathbf{K}_2^{[0]}(y,s)\begin{bmatrix}
\alpha_3(s;\lambda_1)\\
\alpha_4(s;\lambda_1)\\
\end{bmatrix}
\]
and
\[
\frac{\lambda_1+\lambda_1^*}{2\lambda_1(\lambda_1-\lambda_1^*)}\mathbf{K}_1^{[-1]}(y,s) \left(\widehat{\mathbf{a}(\lambda_1)}\begin{bmatrix}
\beta_1(\lambda_1)\\
\beta_2(\lambda_1)\\
\end{bmatrix}\right)=\mathbf{K}_2^{[0]}(y,s)\begin{bmatrix}
\beta_3(s;\lambda_1)\\
\beta_4(s;\lambda_1)\\
\end{bmatrix}\,.
\]
{Similar relations exist at the point $\lambda=-\lambda_1^*$, which are omitted here}.

To establish the symmetry relation, we rewrite the relation of scattering matrix in the following form to find the discrete spectrum:
\begin{equation}
\left[\Phi^{\pm}(y,s;\lambda^*)\right]^{\dag}\Phi^{\mp}(y,s;\lambda)=\mathbb{I}_4, \,\,\,\,\,\,\, \Phi^{\pm}(y,s;\lambda)=\mathbf{m}^{\pm}(y,s;\lambda)\exp\left[-\ii\frac{\delta}{2\lambda}\Sigma_3y \right].
\end{equation}
By the way, we obtain the determinant relation $\det(\mathbf{d}(\lambda))=\det(\mathbf{a}^{\dag}(\lambda^*)).$
At the points $\lambda=\lambda_1^*,-\lambda_1$, there are  coefficients of proportionality between $\Phi_{1,-}$ and $\Phi_{2,+}$. It is easy to see that the meromorphic solution $\Phi^+(y,s;\lambda)$ in the upper half plane can be determined from the symmetry relation $\left[\Phi^-(y,s;\lambda^*)\right]^{\dag}\Phi^+(y,s;\lambda)=\Phi^+(y,s;\lambda)\left[\Phi^-(y,s;\lambda^*)\right]^{\dag}=\mathbb{I}_4.$
The matrix functions $\Phi^{+}(y,s;\lambda)$ and $\left[\Phi^-(y,s;\lambda^*)\right]^{\dag}$ can be expanded at the singular point $\lambda=\lambda_1$ by
\begin{equation}
\begin{split}
 \Phi^{+}(y,s;\lambda)=&\left[\Phi_0^{+}(y,s;\lambda_1^*)+\Phi_1^{+}(y,s;\lambda_1^*)(\lambda-\lambda_1^*)+\cdots\right] {\rm diag}\left(1,\frac{1}{\lambda-\lambda_1^*}\right) \\
 \left[\Phi^-(y,s;\lambda^*)\right]^{\dag}=&{\rm diag}\left(\frac{1}{\lambda-\lambda_1^*},1\right)\left[\left[\Phi_0^{-}(y,s;\lambda_1)\right]^{\dag}+\left[\Phi_1^{-}(y,s;\lambda_1)\right]^{\dag}(\lambda-\lambda_1^*)+\cdots\right] \\
\end{split}
\end{equation}
which implies that
\begin{equation}\label{eq:im-ker-eq}
\begin{split}
\Phi_0^{+}(y,s;\lambda_1^*)\left[\Phi_0^{-}(y,s;\lambda_1)\right]^{\dag}=\left[\Phi_0^{-}(y,s;\lambda_1)\right]^{\dag}\Phi_0^{+}(y,s;\lambda_1^*)&=0,\\
\Phi_0^{+}(y,s;\lambda_1^*)\left[\Phi_1^{-}(y,s;\lambda_1)\right]^{\dag}+\Phi_1^{+}(y,s;\lambda_1^*)\left[\Phi_0^{-}(y,s;\lambda_1)\right]^{\dag}&\\
=\left[\Phi_0^{-}(y,s;\lambda_1)\right]^{\dag}\Phi_1^{+}(y,s;\lambda_1^*)+\left[\Phi_1^{-}(y,s;\lambda_1)\right]^{\dag}\Phi_0^{+}(y,s;\lambda_1^*)&=\mathbb{I}_4.
\end{split}
\end{equation}
{From \eqref{eq:im-ker-eq}, we have}
\begin{equation}\label{eq:im-ker-1}
{\rm Im}\left(\left[\Phi_0^{-}(y,s;\lambda_1)\right]^{\dag}\right)={\rm Ker}\left(\Phi_0^{+}(y,s;\lambda_1^*)\right),\,\,\,\,\,\,\,\, {\rm Ker}\left(\left[\Phi_0^{-}(y,s;\lambda_1)\right]^{\dag}\right)={\rm Im}\left(\Phi_0^{+}(y,s;\lambda_1^*)\right).
\end{equation}
{Similarly, we obtain}
\begin{equation}
{\rm Im}\left(\Phi_0^{-}(y,s;\lambda_1^*)\right)={\rm Ker}\left(\left[\Phi_0^{+}(y,s;\lambda_1)\right]^{\dag}\right),\,\,\,\,\,\,\,\, {\rm Ker}\left(\Phi_0^{-}(y,s;\lambda_1)\right)={\rm Im}\left(\left[\Phi_0^{+}(y,s;\lambda_1^*)\right]^{\dag}\right).
\end{equation}
We conclude that ${\rm Ker}\left(\Phi_0^{-}(y,s;\lambda_1)\right)\perp {\rm Ker}\left(\Phi_0^{+}(y,s;\lambda_1^*)\right)$. Actually, for arbitrary vectors $\mathbf{u}\in {\rm Ker}\left(\Phi_0^{-}(y,s;\lambda_1)\right)$ and $\mathbf{v}\in {\rm Ker}\left(\Phi_0^{+}(y,s;\lambda_1^*)\right)$, on account of equation \eqref{eq:im-ker-1}, we obtain $\mathbf{v}^{\dag}\mathbf{u}=\mathbf{w}^{\dag}\Phi_0^{-}(y,s;\lambda_1)\mathbf{u}=0.$ On the other hand
 \begin{equation}
\begin{split}
&{\rm dim}\left({\rm Ker}\left(\Phi_0^-(y,s;\lambda_1)\right)\right)+{\rm dim}\left({\rm Im}\left(\Phi_0^-(y,s;\lambda_1)\right)\right)\\
=&{\rm dim}\left({\rm Ker}\left(\Phi_0^-(y,s;\lambda_1)\right)\right)+{\rm dim}\left({\rm Im}\left(\left[\Phi_0^-(y,s;\lambda_1)\right]^{\dag}\right)\right) \\
=&{\rm dim}\left({\rm Ker}\left(\Phi_0^-(y,s;\lambda_1)\right)\right)+{\rm dim}\left({\rm Ker}\left(\Phi_0^-(y,s;\lambda_1^*)\right)\right)=4
\end{split}
\end{equation}
which implies
\begin{equation}
{\rm Ker}\left(\Phi_0^{-}(y,s;\lambda_1)\right)\oplus {\rm Ker}\left(\Phi_0^{+}(y,s;\lambda_1^*)\right)=\mathbb{C}^4.
\end{equation}
The spaces ${\rm Ker}\left(\Phi_0^{-}(y,s;\lambda_1)\right)$ and ${\rm Ker}\left(\Phi_0^{+}(y,s;\lambda_1^*)\right)$ represent the discrete scattering data.

Before the discussion of multiple zeros case, we introduce the generalized eigenfunctions. 
For $i=1,\dots,n$, functions $\phi_i\in L^2(\mathbb{R})$ satisfying the equations
\begin{equation}
\begin{split}
(\mathbf{L}-\lambda_1)\phi&=0, \\
(\mathbf{L}-\lambda_1)\phi_{1}&=\phi\\
\cdots \\
(\mathbf{L}-\lambda_1)\phi_n&=\phi_{n-1}\\
\end{split}
\end{equation}
are called generalized eigenfunctions. Actually, it is obvious to see that generalized eigenfunctions can be obtained by taking derivatives of an eigenfunction $\phi$ with respect to $\lambda$, i.e.
\begin{equation}\label{eq:generalized-function}
\phi_i=\frac{{\rm d}^i\phi}{{\rm d}\lambda^i}|_{\lambda=\lambda_1}\in L^2(\mathbb{R}).
\end{equation}

In general, {under the assumption}
\[\alpha(\lambda)=\prod_{i=1}^{n}\frac{(\lambda-\lambda_i)^{k_i}(\lambda+\lambda_i^*)^{k_i}}{(\lambda+\lambda_i)^{k_i}(\lambda-\lambda_i^*)^{k_i}}\widehat{\alpha(\lambda)},
\,\,\,\,\,\,\, k_i=\nu_i+\tau_i\] the generalized eigenfunctions can be determined by the following relations:
\begin{equation}\label{eq:gene-eigen}
\begin{split}
\frac{{\rm d}^{j_i}\Phi_{1,+}(y,s;\lambda)}{{\rm d}\lambda^{j_i}}\begin{bmatrix}
\alpha_1(s;\lambda)\\
\alpha_2(s;\lambda)\\
\end{bmatrix}\big|_{\lambda=\lambda_i}&=\frac{{\rm d}^{j_i}\Phi_{2,-}(y,s;\lambda)}{{\rm d}\lambda^{j_i}}\begin{bmatrix}
\alpha_3(s;\lambda)\\
\alpha_4(s;\lambda)\\
\end{bmatrix}\big|_{\lambda=\lambda_i},\\
\frac{{\rm d}^{l_i}\Phi_{1,+}(y,s;\lambda)}{{\rm d}\lambda^{l_i}}\begin{bmatrix}
\beta_1(s;\lambda)\\
\beta_2(s;\lambda)\\
\end{bmatrix}\big|_{\lambda=\lambda_i}&=\frac{{\rm d}^{l_i}\Phi_{2,-}(y,s;\lambda)}{{\rm d}\lambda^{l_i}}\begin{bmatrix}
\beta_3(s;\lambda)\\
\beta_4(s;\lambda)\\
\end{bmatrix}\big|_{\lambda=\lambda_i},
\end{split}
\end{equation}
where $j_i=1,2,\cdots, \nu_i-1$; $l_i=1,2,\cdots, \tau_i-1$; $\alpha_{\nu}(s;\lambda_i)$, $\beta_{\nu}(s;\lambda_i)$ and their derivative with respect to $\lambda$, $\nu=1,2,3,4$ are the coefficients of proportionality. The relations \eqref{eq:gene-eigen} imply that the functions $$\frac{{\rm d}^k\Phi_{1,+}(y,s;\lambda)}{{\rm d}\lambda^k}\big|_{\lambda=\lambda_i},\,\,\,\,\,\,\,\frac{{\rm d}^k\Phi_{2,-}(y,s;\lambda)}{{\rm d}\lambda^k}\big|_{\lambda=\lambda_i} $$
tend to zero as $y\to\pm\infty$. So they are generalized eigenfunctions. The other generalized eigenfunctions can be defined by the symmetry relationships \eqref{eq:sym-phi1} and \eqref{eq:sym-phi2}.

Thus, in the absence of the spectral singularities, the scattering map can be represented as
\begin{equation}
(q_1(y,0),q_2(y,0))\to \left\{\mathbf{r}(\lambda),\,\,\lambda\in\mathbb{R}; \pm\lambda_i,\pm\lambda_i^*, \frac{\dd^{j_i}}{\dd \lambda^{j_i}}\alpha_l(0;\lambda)|_{\lambda=\lambda_i},\, \frac{\dd^{j_i}}{\dd \lambda^{l_i}}\beta_l(0;\lambda)|_{\lambda=\lambda_i}\right\},
\end{equation} $i=1,2,\cdots,n$, $l=1,2,3,4$, $j_i=0,1,\cdots, \nu_i-1$, $l_i=0,1,\cdots, \tau_i-1.$

\subsection{Evolution of scattering data}
 We consider the evolution of scattering data. Suppose we have the Jost functions $\Phi_{\pm}(y,s;\lambda)$ for each $s$ and the potential function $\mathbf{Q}$ decaying to zero when $y\to \pm\infty$, then $\mathbf{W}_{\pm}(y,s;\lambda)=\Phi_{\pm}(y,s;\lambda)\exp\left(\frac{\ii}{4}\lambda s\Sigma_3\right)$
satisfies the Lax pair \eqref{eq:laxcd}. Actually, since $\Phi_{\pm}(y,s;\lambda)$ satisfies the spectral problem \eqref{eq:spec}, we can write $\mathbf{W}_{\pm}(y,s;\lambda)=\Phi_{\pm}(y,s;\lambda)\mathbf{C}^{\pm}(s;\lambda)$. Inserting $\mathbf{W}_{\pm}(y,s;\lambda)$ into evolution part of the Lax pair \eqref{eq:laxcd}, we obtain $\mathbf{C}_s^{\pm}(s;\lambda)=\frac{\ii}{4}\lambda\Sigma_3\mathbf{C}^{\pm}(s;\lambda)$ by the decay properties of the potential functions $\mathbf{Q}$, which implies that $\mathbf{C}^{\pm}(s;\lambda)=\exp\left(\frac{\ii}{4}\lambda\Sigma_3s\right)$ to keep the normalization at $\pm\infty$. Thus \[\frac{{\rm d}}{{\rm d}s}\Phi_{\pm}(y,s;\lambda)=-\frac{\ii}{4}\lambda\Phi_{\pm}(y,s;\lambda)\Sigma_3+\mathbf{V}(y,s;\lambda)\Phi_{\pm}(y,s;\lambda),\] which yields the evolution of the scattering matrix \[\frac{{\rm d}}{{\rm d}s}\mathbf{S}(s;\lambda)=\frac{{\rm d}}{{\rm d}s}\left(\Phi_{-}^{-1}(y,s;\lambda)\Phi_{+}(y,s;\lambda)\right)=\frac{\ii}{4}\lambda\left[\Sigma_3,\mathbf{S}(s;\lambda)\right]\]
i.e. $\mathbf{a}(s;\lambda)=\mathbf{a}(0;\lambda)$, $\mathbf{d}(s;\lambda)=\mathbf{d}(0;\lambda)$, $\mathbf{b}(s;\lambda)=\exp(-\frac{\ii}{2}\lambda s\mathbb{I}_2)\mathbf{b}(0;\lambda)$ and $\mathbf{c}(s;\lambda)=\exp(\frac{\ii}{2}\lambda s\mathbb{I}_2)\mathbf{c}(s;\lambda).$ So the reflection coefficients $\mathbf{r}(s;\lambda)=\exp(-\frac{\ii}{2}\lambda s\mathbb{I}_2)\mathbf{r}(0;\lambda)$ and the zeros of determinants $\mathbf{a}(\lambda)$ and $\mathbf{d}(\lambda)$ are invariant since the matrix $\mathbf{a}(\lambda)$ and $\mathbf{d}(\lambda)$ can be analytically extended in the lower/upper half plane respectively.

Now we consider the evolution of the coefficients of proportionality. Firstly, we have
\[
\frac{{\rm d}}{{\rm d}s}\Phi_{1,+}(y,s;\lambda)=-\frac{\ii}{4}\lambda \Phi_{1,+}(y,s;\lambda)+\mathbf{V}(y,s;\lambda)\Phi_{1,+}(y,s;\lambda),
\]
and
\[
\frac{{\rm d}}{{\rm d}s}\Phi_{2,-}(y,s;\lambda)=\frac{\ii}{4}\lambda \Phi_{2,-}(y,s;\lambda)+\mathbf{V}(y,s;\lambda)\Phi_{2,-}(y,s;\lambda).
\]
Taking the derivative of \eqref{eq:proportion} with respect to $s$
\begin{equation}\frac{{\rm d}}{{\rm d}s}\begin{bmatrix}
\alpha_1(s;\lambda_i)\\
\alpha_2(s;\lambda_i)\\
\end{bmatrix}=0,\,\,\,\,\,\,\frac{{\rm d}}{{\rm d}s}\begin{bmatrix}
\alpha_3(s;\lambda_i)\\
\alpha_4(s;\lambda_i)\\
\end{bmatrix}=-\frac{\ii}{2}\lambda_i\begin{bmatrix}
\alpha_3(s;\lambda_i)\\
\alpha_4(s;\lambda_i)\\
\end{bmatrix},
\end{equation}
i.e.
\begin{equation}
\begin{bmatrix}
\alpha_1(s;\lambda_i)\\
\alpha_2(s;\lambda_i)\\
\end{bmatrix}=\begin{bmatrix}
\alpha_1(0;\lambda_i)\\
\alpha_2(0;\lambda_i)\\
\end{bmatrix},\,\,\,\,\,\,\begin{bmatrix}
\alpha_3(s;\lambda_i)\\
\alpha_4(s;\lambda_i)\\
\end{bmatrix}=\ee^{-\frac{\ii}{2}\lambda_is}\begin{bmatrix}
\alpha_3(0;\lambda_i)\\
\alpha_4(0;\lambda_i)\\
\end{bmatrix}.
\end{equation}
In general, we obtain the evolution of generalized scattering data:
\begin{equation}\label{eq:gene-eigen-t}
\begin{split}
\frac{{\rm d}^{j_i}\Phi_{1,+}(y,s;\lambda)}{{\rm d}\lambda^{j_i}}\begin{bmatrix}
\alpha_1(0;\lambda)\\
\alpha_2(0;\lambda)\\
\end{bmatrix}\big|_{\lambda=\lambda_i}&=\frac{{\rm d}^{j_i}\Phi_{2,-}(y,s;\lambda)\ee^{-\frac{\ii}{2}\lambda s}}{{\rm d}\lambda^{j_i}}\begin{bmatrix}
\alpha_3(0;\lambda)\\
\alpha_4(0;\lambda)\\
\end{bmatrix}\big|_{\lambda=\lambda_i},\\
\frac{{\rm d}^{l_i}\Phi_{1,+}(y,s;\lambda)}{{\rm d}\lambda^{l_i}}\begin{bmatrix}
\beta_1(0;\lambda)\\
\beta_2(0;\lambda)\\
\end{bmatrix}\big|_{\lambda=\lambda_i}&=\frac{{\rm d}^{l_i}\Phi_{2,-}(y,s;\lambda)\ee^{-\frac{\ii}{2}\lambda s}}{{\rm d}\lambda^{l_i}}\begin{bmatrix}
\beta_3(0;\lambda)\\
\beta_4(0;\lambda)\\
\end{bmatrix}\big|_{\lambda=\lambda_i},
\end{split}
\end{equation}
where $j_i=1,2,\cdots, \nu_i-1$; $l_i=1,2,\cdots, \tau_i-1$; $\alpha_{\nu}(0;\lambda_i)$, $\beta_{\nu}(0;\lambda_i)$ and their derivative with respect to $\lambda$, $\nu=1,2,3,4$ are the coefficients of proportionality.

\subsection{Inverse scattering}
Now we consider the inverse scattering transform to obtain the solutions of equations \eqref{eq:cd}. Firstly, we represent the above well-defined meromorphic function $\mathbf{m}(y,s;\lambda)$ in equations \eqref{eq:m+} and \eqref{eq:m-} together with the evolved scattering data as a Riemann-Hilbert problem \cite{Deift99}:
\begin{rhp}\label{eq:rhp1} Let $(y,s)\in\mathbb{R}^2$ be arbitrary parameters. The meromorphic function $\mathbf{m}(y,s;\lambda)$ possesses the following properties:
\begin{description}
               \item[Analyticity] The meromorphic function $\mathbf{m}(y,s;\lambda)=(\mathbf{m}_1(y,s;\lambda),\mathbf{m}_2(y,s;\lambda))$ is analytic in $\lambda$ for $\lambda\in \mathbb{C}\backslash \left(\mathbb{R}\cup\{\pm \lambda_i,\pm\lambda_i^*| i=1,2,\cdots,n\}\right)$;
               \item[Poles] The principal part of $\mathbf{m}_1(y,s;\lambda)$ in the lower half plane can be represented as
                   \begin{equation}\label{eq:prin1}
                   \mathbf{m}_1(y,s;\lambda)=\sum_{i=1}^{n}\sum_{k=0}^{j_i}\left(
                   \frac{\mathbf{m}_1^{[k]}(y,s;\lambda_i)}{(\lambda-\lambda_i)^{k+1}}+\frac{\mathbf{m}_1^{[k]}(y,s;-\lambda_i^*)}{(\lambda+\lambda_i^*)^{k+1}}\right)+\mathbf{m}_{1a}(y,s;\lambda);
                   \end{equation}
                   and the principal part of $\mathbf{m}_2(y,s;\lambda)$ in the upper half plane can be represented as
                   \begin{equation}\label{eq:prin2}
                   \mathbf{m}_2(y,s;\lambda)=\sum_{i=1}^{n}\sum_{k=0}^{j_i}\left(
                   \frac{\mathbf{m}_2^{[k]}(y,s;\lambda_i^*)}{(\lambda-\lambda_i^*)^{k+1}}+\frac{\mathbf{m}_2^{[k]}(y,s;-\lambda_i)}{(\lambda+\lambda_i)^{k+1}}\right)+\mathbf{m}_{2a}(y,s;\lambda);
                   \end{equation}
                   where $j_i=\max\{\nu_i,\tau_i\}$, $\mathbf{m}_{1a}$ and $\mathbf{m}_{2a}$ are the analytic part of $\mathbf{m}_{1}$ and $\mathbf{m}_{2}$ respectively; the principal part of $\mathbf{m}_{1}$ and $\mathbf{m}_{2}$ are linked by equations \eqref{eq:gene-eigen-t} and their symmetric equations with the aid of equations \eqref{eq:sym-phi1} and \eqref{eq:sym-phi2}.
               \item[Jump condition] The jump conditions on the real line \begin{equation}
\mathbf{m}^+(y,s;\lambda)=\mathbf{m}^-(y,s;\lambda)\mathbf{v}(y,s;\lambda),\,\,\,\,\,\,\, \mathbf{v}(y,s;\lambda)=\begin{bmatrix}
\mathbb{I}_2 &\mathbf{r}^{\dag}(\lambda)\ee^{-\ii(\frac{\delta}{\lambda}y-\frac{\lambda s}{2})\mathbb{I}_2} \\[5pt]
\mathbf{r}(\lambda)\ee^{\ii(\frac{\delta}{\lambda}y-\frac{\lambda s}{2})\mathbb{I}_2} &\mathbb{I}_2+\mathbf{r}(\lambda)\mathbf{r}^{\dag}(\lambda) \\
\end{bmatrix},\,\,\,\,\,\,\,\,\, \lambda\in\mathbb{R},
\end{equation}
               \item[Normalization] The normalization condition $\mathbf{m}(y,s;\lambda)\to \mathbb{I}_4,$ as $\lambda\rightarrow\infty.$
             \end{description}
\end{rhp}
The above Riemann-Hilbert problem can be solved by the following algebraic-integral equations
\begin{multline}\label{eq:rhp1-int}
\mathbf{m}(y,s;\lambda)=\mathbb{I}_4\\+\sum_{i=1}^{n}\sum_{k=0}^{j_i}\left(
                   \frac{\left(\mathbf{m}_1^{[k]}(y,s;\lambda_i),0\right)}{(\lambda-\lambda_i)^{k+1}}+\frac{\left(\mathbf{m}_1^{[k]}
                   (y,s;-\lambda_i^*),0\right)}{(\lambda+\lambda_i^*)^{k+1}}+\frac{\left(0,\mathbf{m}_2^{[k]}(y,s;\lambda_i^*)\right)}{(\lambda-\lambda_i^*)^{k+1}}
                   +\frac{\left(0,\mathbf{m}_2^{[k]}(y,s;-\lambda_i)\right)}{(\lambda+\lambda_i)^{k+1}}\right)\\
                   +\frac{1}{2\pi \ii}\int_{\mathbb{R}}\frac{\mathbf{m}^-(y,s;\zeta)(\mathbf{v}(y,s;\zeta)-\mathbb{I}_4){\rm d}\zeta}{\zeta-\lambda},
\end{multline}
the algebraic equations are given in equations \eqref{eq:gene-eigen-t} and their symmetry equations as given in the {\bf Riemann-Hilbert Problem} \ref{eq:rhp1}. Insert the expansion
\begin{equation}\label{eq:expan-m}
\mathbf{m}(y,s;\lambda)=\mathbb{I}_4+\mathbf{M}_1(y,s)\lambda^{-1}+O(\lambda^{-2})
\end{equation}
into equation \eqref{eq:spec-1}, the potential function can be recovered as
\begin{equation}\label{eq:pot-rec}
\begin{bmatrix}
-\ii(\rho-\frac{\delta}{2})\mathbb{I}_2& -\mathbf{Q}_y^{\dag}\\[5pt]
\mathbf{Q}_y&\ii (\rho-\frac{\delta}{2})\mathbb{I}_2\\
\end{bmatrix}=\frac{\dd}{\dd y}\mathbf{M}_1(y,s).
\end{equation}

In general, the algebraic-integral equations \eqref{eq:rhp1-int} do not have a closed form. {However, in a special case, i.e. the reflectionless case $\mathbf{r}(\lambda)=0$, the potential functions can be obtained explicitly from equations \eqref{eq:gene-eigen-t} and their symmetric equations}.

To solve the linear system, we use the ansatz from the Darboux transformation given in {\bf Section} \ref{sec2}. Since the reflection coefficient $\mathbf{r}(\lambda)=0$, the jump condition on the line is the identity matrix. The meromorphic function $\mathbf{m}(y,s;\lambda)$ is an analytic function with poles located at the points $\pm\lambda_i$ and $\pm\lambda_i^*$, $i=1,2,\cdots, n.$ From equations \eqref{eq:rhp1-int}, we see that the first two columns of the matrix $\mathbf{m}(y,s;\lambda)$ are different form of the last two columns of matrix $\mathbf{m}(y,s;\lambda)$. Thus, we redefine a new analytic matrix
\begin{equation}
\widetilde{\mathbf{m}}(y,s;\lambda)=\mathbf{m}(y,s;\lambda){\rm diag}\left(\mathbf{a}(\lambda),\mathbb{I}_2\right)
\end{equation}
which is analytic in the lower half plane and has the poles in the upper half plane. Actually, the new analytic matrix $\widetilde{\mathbf{m}}(y,s;\lambda)$ is consistent with the general Darboux matrix $\mathbf{T}[n](y,s;\lambda)$ \eqref{eq:general-dt} with the seed solution $q_1=q_2=0.$ In the following, we give the exact solutions and their dynamics.

\subsection{Single soliton solutions}
As shown in previous analysis, there are two different types of eigenvalues: simple zero of $\det(\mathbf{a}(\lambda))$; and double zeros of $\det(\mathbf{a}(\lambda))$. Indeed, each simple zero corresponds to a ${\rm su}(2)$-type soliton,
{and each double zero corresponds to a breather solution. }

To find the single soliton solution through the formula \eqref{eq:backlund}, we need to give the exact special solution $\Phi_1$ of the Lax pair at $\lambda=\lambda_1=a_1+\ii b_1$, $b_1<0$ and $q_1=q_2=0$, $\rho=\delta/2$:\begin{equation}\label{eq:phi1-sin}
\Phi_1=\begin{bmatrix}
\theta_1 \\
\chi_1 \\
\phi_1 \\
\psi_1 \\
\end{bmatrix}=\begin{bmatrix}
c_1\ee^{\ii\left(\frac{\lambda_1}{4}s-\frac{\delta}{2\lambda_1}y+\vartheta_1\right)} \\
c_2\ee^{\ii\left(\frac{\lambda_1}{4}s-\frac{\delta}{2\lambda_1}y+\vartheta_2\right)} \\
d_1\ee^{-\ii\left(\frac{\lambda_1}{4}s-\frac{\delta}{2\lambda_1}y+\phi_1\right)} \\
d_2\ee^{-\ii\left(\frac{\lambda_1}{4}s-\frac{\delta}{2\lambda_1}y+\phi_2\right)} \\
\end{bmatrix}.
\end{equation}
{For the sake of convenience}, we rewrite the B\"acklund transformation in the following form:
\begin{equation}\label{eq:bt-single}
\begin{split}
q_1[1]&=q_1-(\lambda_1-\lambda_1^*)\frac{\phi_1\theta_1^*+\psi_1^*\chi_1}{|\theta_1|^2|+\chi_1|^2+|\phi_1|^2+|\psi_1|^2},  \\
q_2[1]&=q_2-(\lambda_1-\lambda_1^*)\frac{\phi_1\chi_1^*-\psi_1^*\theta_1}{|\theta_1|^2|+\chi_1|^2+|\phi_1|^2+|\psi_1|^2},  \\
\rho[1]&=\rho-2 \ln_{y,s}\left(\frac{|\theta_1|^2+|\chi_1|^2+|\phi_1|^2+|\psi_1|^2}{\lambda_1-\lambda_1^*}\right). \\
\end{split}
\end{equation}
Inserting \eqref{eq:phi1-sin} into \eqref{eq:bt-single}, we obtain the single soliton solution
\begin{equation}\label{eq:single-soliton}
\begin{split}
\begin{bmatrix}
q_1[1]\\[8pt]
q_2[1]\\
\end{bmatrix}&=-\ii b_1 \,{\rm sech}(A_1) \,\mathbf{C}\begin{bmatrix}
\frac{d_2}{|\mathbf{d}|}\ee^{\ii(B_1-\phi_2)} \\[8pt]
\frac{d_1}{|\mathbf{d}|}\ee^{-\ii(B_1-\phi_1)} \\
\end{bmatrix},\,\,\, \mathbf{C}=\begin{bmatrix}
\frac{c_2\ee^{\ii\vartheta_2}}{|\mathbf{c}|} & \frac{c_1\ee^{-\ii\vartheta_1}}{|\mathbf{c}|} \\[8pt]
-\frac{c_1\ee^{\ii\vartheta_1}}{|\mathbf{c}|} & \frac{c_2\ee^{-\ii\vartheta_2}}{|\mathbf{c}|} \\
\end{bmatrix},\\
\rho[1]&=\delta\left(\frac{1}{2}-\frac{b_1^2}{a_1^2+b_1^2}{\rm sech}^2(A_1)\right),\\
x&=\frac{\delta}{2}y-b_1\tanh(A_1),\,\,\,\, t=-s,
\end{split}
\end{equation}
where $A_1=b_1\left(\frac{s}{2}+\frac{\delta y}{a_1^2+b_1^2}\right)+\ln|\mathbf{d}|-\ln|\mathbf{c}|$, $B_1=a_1\left(\frac{s}{2}-\frac{\delta y}{a_1^2+b_1^2}\right)$, $|\mathbf{c}|=\sqrt{c_1^2+c_2^2}$, $|\mathbf{d}|=\sqrt{d_1^2+d_2^2}$.
{The dynamics of a single soliton in Fig. \ref{su2} exhibits a beating effect \cite{Zhao18}.} The beating effect comes from the multi-component system with Hermitian symmetry, i.e. if $(q_1,q_2)$ is the solution, then $(q_1,q_2)\mathbf{H}$ is also solution, for any Hermitian matrix $\mathbf{H}$. It is easy to see that the matrix $\mathbf{C}$ is Hermitian. If $|a_1|>|b_1|$, the soliton is  smooth;
if $|a_1|=|b_1|$, the soliton is a cuspon-type; if $|a_1|<|b_1|$, the soliton is a loop-type. If $t\to\pm\infty$, then $x\pm b_1=\frac{\delta}{2}y$. The center of soliton is located on the curve $A_1=0$ which approaches the line $\frac{(x\pm b_1)}{a_1^2+b_1^2}-\frac{t}{4}+\frac{\ln|\mathbf{d}|-\ln|\mathbf{c}|}{2b_1}=0$ as $t\to \pm\infty$. Meanwhile, the velocity of the soliton tends to $\frac{a_1^2+b_1^2}{4}$.

\begin{figure}[tbh]
\centering
\includegraphics[height=50mm,width=140mm]{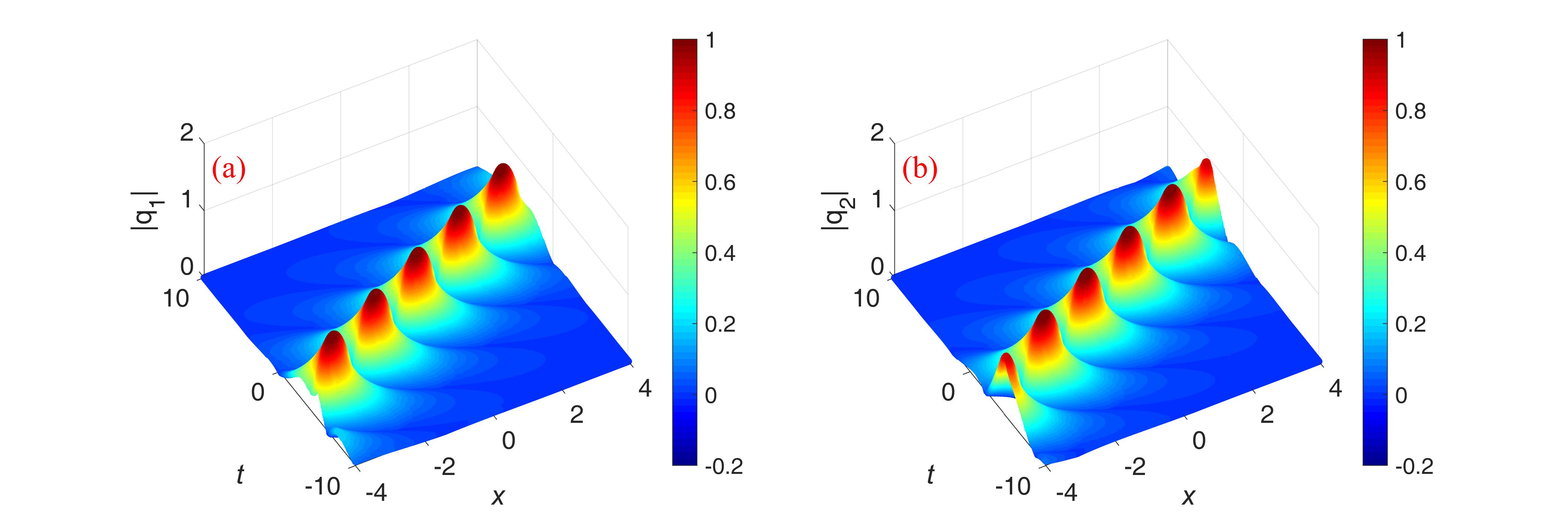}
\caption{(color on line) ${\rm SU}(2)$ solitons. Parameters: $c_1=c_2=1$, $d_1=d_2=1$, $a_1=\frac{3}{2}$, $b_1=-1$, $\delta=1$, $\vartheta_1=\vartheta_2=0$, $\phi_1=\phi_2=0$}
\label{su2}
\end{figure}

Now we consider the scattering data of above single soliton. As $y\to +\infty$, then the Darboux matrix \eqref{eq:dt} is 
\[
\lim_{y\to+\infty}\mathbf{T}(y,s;\lambda)= \mathbf{T}_{[+\infty]}=\mathbb{I}_4-\frac{\lambda_1-\lambda_1^*}{|\mathbf{c}|^2}
\begin{bmatrix}
\frac{\mathbf{c}\mathbf{c}^{\dag}}{\lambda-\lambda_1^*}+\frac{\sigma_2\mathbf{c}^*\mathbf{c}^{\T}\sigma_2^{-1}}{\lambda-\lambda_1^*}& \mathbf{0}\\[8pt]
\mathbf{0}&\mathbf{0} \\
\end{bmatrix},\,\,\,\,\, \mathbf{c}=\begin{bmatrix}
c_1\ee^{\ii\vartheta_1} \\
c_2\ee^{\ii\vartheta_2} \\
\end{bmatrix};
\]
Correspondingly, as $y\to -\infty$, the Darboux matrix \eqref{eq:dt} is 
\[
\lim_{y\to-\infty}\mathbf{T}(y,s;\lambda)= \mathbf{T}_{[-\infty]}=\mathbb{I}_4-\frac{\lambda_1-\lambda_1^*}{|\mathbf{d}|^2}
\begin{bmatrix}
\mathbf{0}& \mathbf{0}\\[8pt]
\mathbf{0}&\frac{\mathbf{d}\mathbf{d}^{\dag}}{\lambda-\lambda_1^*}+\frac{\sigma_2\mathbf{d}^*\mathbf{d}^{\T}\sigma_2^{-1}}{\lambda-\lambda_1^*} \\
\end{bmatrix},\,\,\,\,\, \mathbf{d}=\begin{bmatrix}d_1\ee^{-\ii\phi_1} \\
d_2\ee^{-\ii\phi_2} \\\end{bmatrix}.
\]
By the linear property of the Darboux matrix, we can rescale the Darboux matrix
$\mathbf{T}^{[{\rm N}]}=\mathbf{T}(y,s;\lambda)\mathbf{T}_{[-\infty]}^{-1}$. Taking the limit $y\to+\infty$, we have \[
\mathbf{a}(\lambda)=\mathbb{I}_2-\frac{\lambda_1-\lambda_1^*}{|\mathbf{c}|^2}\left(\frac{\mathbf{c}\mathbf{c}^{\dag}}{\lambda-\lambda_1^*}+\frac{\sigma_2\mathbf{c}^*\mathbf{c}^{\T}\sigma_2^{-1}}{\lambda-\lambda_1^*}\right),\,\,\,\,\,\,
\mathbf{d}(\lambda)^{-1}=\mathbb{I}_2-\frac{\lambda_1-\lambda_1^*}{|\mathbf{d}|^2}\left(\frac{\mathbf{d}\mathbf{d}^{\dag}}{\lambda-\lambda_1^*}+\frac{\sigma_2\mathbf{d}^*\mathbf{d}^{\T}\sigma_2^{-1}}{\lambda-\lambda_1^*}\right).
\]
The determinant of $\mathbf{a}(\lambda)$ equals to $\frac{\lambda-\lambda_1}{\lambda-\lambda_1^*}\frac{\lambda+\lambda_1^*}{\lambda+\lambda_1}$. The eigenfunction at the point $\lambda=\lambda_1$ can be easily constructed through the Darboux matrix:
\begin{equation}
\mathbf{T}(y,s;\lambda_1)\begin{bmatrix}
c_1\ee^{\ii\left(\frac{\lambda_1}{4}s-\frac{\delta}{2\lambda_1}y+\vartheta_1\right)} \\
c_2\ee^{\ii\left(\frac{\lambda_1}{4}s-\frac{\delta}{2\lambda_1}y+\vartheta_2\right)} \\
0 \\
0 \\
\end{bmatrix}=\mathbf{T}(y,s;\lambda_1)\begin{bmatrix}
0 \\
0 \\
-d_1\ee^{-\ii\left(\frac{\lambda_1}{4}s-\frac{\delta}{2\lambda_1}y+\phi_1\right)} \\
-d_2\ee^{-\ii\left(\frac{\lambda_1}{4}s-\frac{\delta}{2\lambda_1}y+\phi_2\right)} \\
\end{bmatrix}\in [\mathbf{L}^2(\mathbb{R})]^4.
\end{equation}
{Similarly, we can construct other eigenfunctions at points $\lambda=\pm\lambda_1^*$ and $\lambda=-\lambda_1$}.

We now turn to study the double geometric zeros of $\mathbf{a}(\lambda)$. In this case, the Darboux matrix can be constructed with the following form:
\begin{equation}\label{eq:dt-2}
\mathbf{T}_2(y,s;\lambda)=\mathbb{I}_4-\mathbf{Y}\mathbf{M}_2^{-1}(\lambda\mathbb{I}_4-\mathbf{D}_4)^{-1}\mathbf{Y}^{\dag}
\end{equation}
where
\[
\mathbf{Y}=\left[\Phi_1,\Lambda\Phi_1^*,\Phi_2,\Lambda\Phi_2^*\right],\,\,\,\,\,\, \mathbf{D}_4={\rm diag}\left(\lambda_1^*,-\lambda_1,\lambda_1^*,-\lambda_1\right),\,\,\,\,\, \mathbf{M}_2=\begin{bmatrix}
\mathbf{M}_{11} & \mathbf{M}_{12} \\
\mathbf{M}_{21} & \mathbf{M}_{22} \\
\end{bmatrix}
\]
and
\[
\mathbf{M}_{11}=\frac{\Phi_1^{\dag}\Phi_1}{\lambda_1-\lambda_1^*}\mathbb{I}_2,\,\,\, \mathbf{M}_{22}=\frac{\Phi_2^{\dag}\Phi_2}{\lambda_1-\lambda_1^*}\mathbb{I}_2,\,\,\,
\mathbf{M}_{12}=\begin{bmatrix}
\frac{\Phi_1^{\dag}\Phi_2}{\lambda_1-\lambda_1^*} & \frac{\Phi_1^{\dag}\Lambda\Phi_2^*}{-\lambda_1^*-\lambda_1^*} \\
-\frac{\Phi_1^{\T}\Lambda\Phi_2}{\lambda_1+\lambda_1} &\frac{\Phi_1^{\T}\Phi_2^*}{-\lambda_1^*+\lambda_1} \\
\end{bmatrix},\,\,\, \mathbf{M}_{21}=\begin{bmatrix}
\frac{\Phi_2^{\dag}\Phi_1}{\lambda_1-\lambda_1^*} & \frac{\Phi_2^{\dag}\Lambda\Phi_1^*}{-\lambda_1^*-\lambda_1^*} \\
-\frac{\Phi_2^{\T}\Lambda\Phi_1}{\lambda_1+\lambda_1} &\frac{\Phi_2^{\T}\Phi_1^*}{-\lambda_1^*+\lambda_1} \\
\end{bmatrix}
\]
Actually, this Darboux matrix can be {viewed} as the iteration of Darboux matrix \eqref{eq:dt}.
To obtain the solutions {which are different from}  above single soliton solution, we choose
\[
\Phi_1=\begin{bmatrix}
\ee^{\ii\left(\frac{\lambda_1}{4}s-\frac{\delta}{2\lambda_1}y\right)} \\
0 \\
d_{1,1}\ee^{-\ii\left(\frac{\lambda_1}{4}s-\frac{\delta}{2\lambda_1}y+\phi_{1,1}\right)} \\
d_{2,1}\ee^{-\ii\left(\frac{\lambda_1}{4}s-\frac{\delta}{2\lambda_1}y+\phi_{2,1}\right)} \\
\end{bmatrix},\,\,\,\,\, \Phi_2=\begin{bmatrix}
0 \\
\ee^{\ii\left(\frac{\lambda_1}{4}s-\frac{\delta}{2\lambda_1}y\right)}  \\
d_{1,2}\ee^{-\ii\left(\frac{\lambda_1}{4}s-\frac{\delta}{2\lambda_1}y+\phi_{1,2}\right)} \\
d_{2,2}\ee^{-\ii\left(\frac{\lambda_1}{4}s-\frac{\delta}{2\lambda_1}y+\phi_{2,2}\right)} \\
\end{bmatrix},
\]
where $d_{i,j}$, $\phi_{i,j}$, $i,j=1,2$, are real parameters. The corresponding solution from the formula \eqref{eq:backlund1} is a breather
\begin{equation}\label{eq:zero-breather}
\begin{split}
q_1[2]=&\frac{F_1}{D_1}, \,\,\,\,\,
q_2[2]=\frac{F_2}{D_1}, \,\,\,\,\,
\rho[2]=\frac{\delta}{2}-2\ln_{ys}(D_1) \\
x=&\frac{\delta}{2}y-2\ln_{s}(D_1),\,\,\,\, t=-s,
\end{split}
\end{equation}
where
\[
D_1=\frac{\Phi_1^{\dag}\Phi_1\Phi_2^{\dag}\Phi_2-\Phi_2^{\dag}\Phi_1\Phi_1^{\dag}\Phi_2}{(\lambda_1-\lambda_1^*)^2}
-\frac{\Phi_2^{\dag}\Lambda\Phi_1^*\Phi_1^{\T}\Lambda\Phi_2}{4|\lambda_1|^2},
\]
and
\[
F_j=\mathbf{Y}_{3,1}\mathbf{M}_{12}\mathbf{Y}_{j,2}^{\dag}+\mathbf{Y}_{3,2}\mathbf{M}_{21}\mathbf{Y}_{j,1}^{\dag}-(\mathbf{Y}_{3,1}\mathbf{M}_{22}\mathbf{Y}_{j,1}^{\dag}+\mathbf{Y}_{3,2}\mathbf{M}_{11}\mathbf{Y}_{j,2}^{\dag}),\,\,\,\, j=1,2,
\]
where $\mathbf{Y}_{j,1}$ represents the $j-$th row and first two column of $\mathbf{Y}$, $\mathbf{Y}_{j,2}$ represents the $j-$th row and last two column of $\mathbf{Y}$. If $t\to\pm\infty$, then $x\pm b_1=\frac{\delta}{2}y$. The breather propagates along the curve $\frac{s}{2}-\frac{\delta y}{a_1^2+b_1^2}={\rm const}$ which approaches the line $\frac{(x\pm 2b_1)}{a_1^2+b_1^2}-\frac{t}{4}={\rm const}$. The velocity of the breather tends to $\frac{a_1^2+b_1^2}{4}$.
{Fig. \ref{zero-breather} displays the dynamics of one breather for two sets of parameters.}

\begin{figure}[tbh]
\centering
\includegraphics[height=50mm,width=140mm]{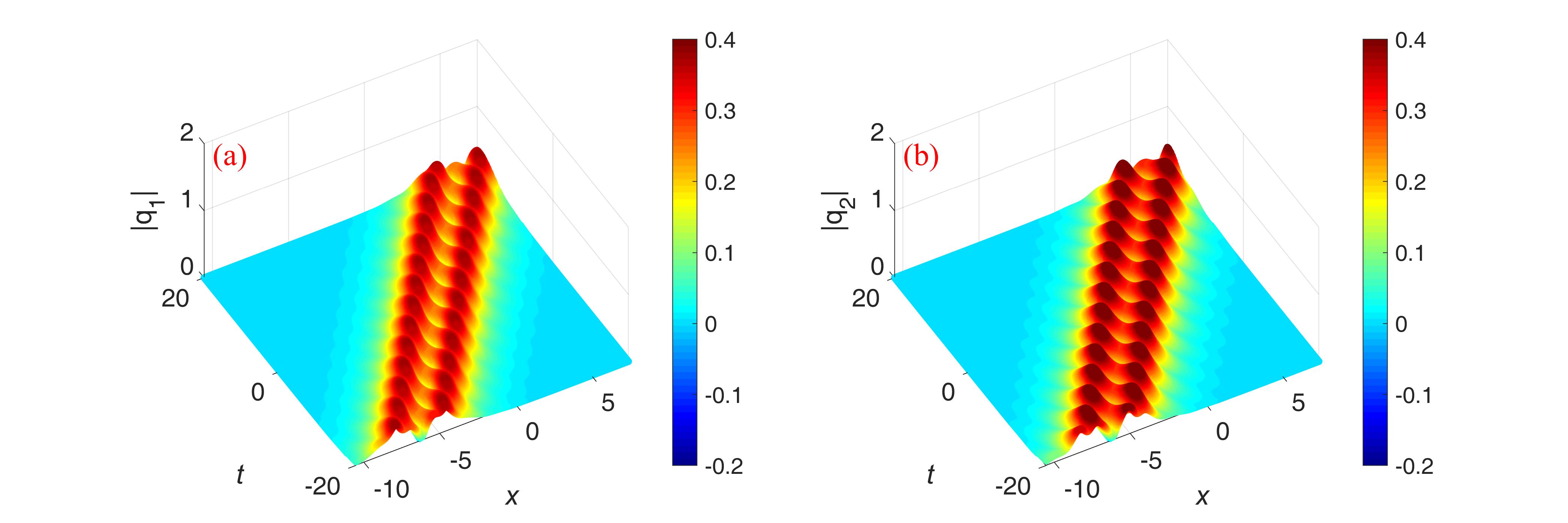}
\caption{(color on line) Breather on a zero background. Parameters: $a_1=1$, $b_1=-\frac{1}{2}$, $\delta=1$, $d_{1,1}=10$, $d_{1,2}=2$, $d_{2,1}=-10$, $d_{2,2}=-1$, $\phi_{i,j}=0$, $i,j=1,2.$}
\label{zero-breather}
\end{figure}

As shown in the simple zero case, the matrix functions $\mathbf{a}(\lambda)$, $\mathbf{d}(\lambda)$ can be obtained by taking the limit of the Darboux matrix as $y\to+\infty$ \[
\mathbf{a}(\lambda)=\frac{\lambda-\lambda_1}{\lambda-\lambda_1^*}\frac{\lambda+\lambda_1^*}{\lambda+\lambda_1}\mathbb{I}_2,\,\,\,\,\,\,\, \mathbf{d}(\lambda)=\frac{\lambda-\lambda_1^*}{\lambda-\lambda_1}\frac{\lambda+\lambda_1}{\lambda+\lambda_1^*}\mathbb{I}_2.
\]
The corresponding two eigenfunctions at $\lambda=\lambda_1$ are
\[
\mathbf{T}_2(y,s;\lambda_1)\begin{bmatrix}
\ee^{\ii\left(\frac{\lambda_1}{4}s-\frac{\delta}{2\lambda_1}y\right)} \\
0 \\
0 \\
0 \\
\end{bmatrix},\,\,\,\,\, \mathbf{T}_2(y,s;\lambda_1)\begin{bmatrix}
0 \\
\ee^{\ii\left(\frac{\lambda_1}{4}s-\frac{\delta}{2\lambda_1}y\right)} \\
0 \\
0 \\
\end{bmatrix}.
\]
\subsection{Multi-soliton solutions}
In what follows, we consider the multi-soliton solution. Among them, there is an interesting case--the double hump soliton---which corresponds to two solitons of the same speed. In the scalar case, two solitons with the same speed will resonate and thus form a breather. However, for the two component system, the nonlinear superposition of $q_1={\text{single\,\, soliton}}$, $q_2=0$ and $q_1=0$, $q_2={\text{single\,\,soliton}}$ with the same speed, generates a double hump soliton In the following, we {discuss} these three types of fundamental soliton solutions in detail.

Suppose we have the following linear independent vectors:
\begin{equation}\label{eq:vectors}
\Phi_{i,j_i}=\begin{bmatrix}
\mathbf{c}_{i,j_i}\ee^{\ii\left(\frac{\lambda_{i,j_i}}{4}s-\frac{\delta}{2\lambda_{i,j_i}}y\right)}\\
\mathbf{d}_{i,j_i}\ee^{-\ii\left(\frac{\lambda_{i,j_i}}{4}s-\frac{\delta}{2\lambda_{i,j_i}}y\right)}\\
\end{bmatrix}
\end{equation}
where $\mathbf{c}_{i,j}$ and $\mathbf{d}_{i,j}$ are constant column vectors, $|\lambda_{i,j_1}|=|\lambda_{i,j_2}|$, $j_1\neq j_2$, with at most two equal parameters among $\lambda_{i,j_i}$, $j_i=1,2,\cdots, k_i$. Inserting the vectors \eqref{eq:vectors} into the formula \eqref{eq:n-fold-dt-1}, the multi-soliton solution can be derived through the formula \eqref{eq:backlund1}.

The norm  $|\lambda_{i,j_i}|$ will determine the velocity of soliton. {If two solitons propagate with the same speed, they will form a breather. When the distance of the two solitons is large enough, the ``breath" effect will be feeble.}
{The choice of a set of linear independent vector with $|\lambda_{i,1}|=|\lambda_{i,j_i}|$, $j_i=1,2,\cdots, k_i$ will form a bound state}. As mentioned above, in the multi-component system, in addition to the breathers, there is an interesting double-hump soliton that can be obtained for a special choice of parameters. In what follows, we give its exact formula. Firstly, we choose the vectors
\begin{equation}\label{eq:vector2}
\Phi_{j}=\begin{bmatrix}
\mathbf{c}_{j}\\[5pt]
\mathbf{d}_{j}\ee^{\omega_j}\\
\end{bmatrix},\,\,\,\omega_j=\ii\left(\frac{\delta}{\lambda_{j}}y-\frac{\lambda_{j}}{2}s\right), \,\,\, j=1,2,\,\,\,\, |\lambda_{1}|=|\lambda_{2}|,\,\,\,\, \lambda_{1}\neq\lambda_{2},
\end{equation}
where
\[\mathbf{c}_{1}=\begin{bmatrix}
0\\
1\\
\end{bmatrix},\,\,\,\mathbf{c}_{2}=\begin{bmatrix}
0\\
1\\
\end{bmatrix},\,\,\, \mathbf{d}_{1}=\begin{bmatrix}
d_{1}\\
0\\
\end{bmatrix},\,\,\, \mathbf{d}_{2}=\begin{bmatrix}
0 \\
d_{2}\\
\end{bmatrix}
\]
Plugging the vectors \eqref{eq:vector2} into \eqref{eq:backlund1} and simplifying it, we can obtain the double-hump soliton:
\begin{equation}\label{eq:double-hump}
\begin{split}
q_1[2]=&\frac{F_1}{D_2}, \,\,\,\,\,
q_2[2]=\frac{F_2}{D_2}, \,\,\,\,\,
\rho[2]=\frac{\delta}{2}-2\ln_{ys}(D_2) \\
x=&\frac{\delta}{2}y-2\ln_{s}(D_2),\,\,\,\, t=-s,
\end{split}
\end{equation}
where
\[
D_2=\frac{(1+|d_1|^2\ee^{\omega_1+\omega_1^*})(1+|d_2|^2\ee^{\omega_2+\omega_2^*})}{(\lambda_1-\lambda_1^*)(\lambda_2-\lambda_2^*)}
+\frac{|d_1|^2|d_2|^2\ee^{\omega_1+\omega_2+\omega_1^*+\omega_2^*}}{|\lambda_1+\lambda_2|^2}+\frac{1}{|\lambda_2-\lambda_1^*|^2},
\]
and
\begin{equation*}
\begin{split}
F_1&=d_2^*\ee^{\omega_2^*}\left(\frac{|d_1|^2\ee^{\omega_1+\omega_1^*}}{-\lambda_2^*-\lambda_1^*}-
\frac{(1+|d_1|^2\ee^{\omega_1+\omega_1^*})}{\lambda_1-\lambda_1^*}+\frac{1}{\lambda_2-\lambda_1^*}\right),
\\
F_2&=d_1\ee^{\omega_1}\left(\frac{|d_2|^2\ee^{\omega_2^*+\omega_2}}{\lambda_1+\lambda_2}-
\frac{(1+|d_2|^2\ee^{\omega_2+\omega_2^*})}{\lambda_2-\lambda_2^*}+\frac{1}{\lambda_2-\lambda_1^*}\right).
\end{split}
\end{equation*}
The shape of the double-hump soliton (Fig. \ref{double-hump}) is {determined} by the parameters $\lambda_1$, $\lambda_2$ and $|d_1|$, $|d_2|$. The soliton velocity approaches $\frac{|\lambda_1|^2}{4}$.

\begin{figure}[tbh]
\centering
\includegraphics[height=50mm,width=140mm]{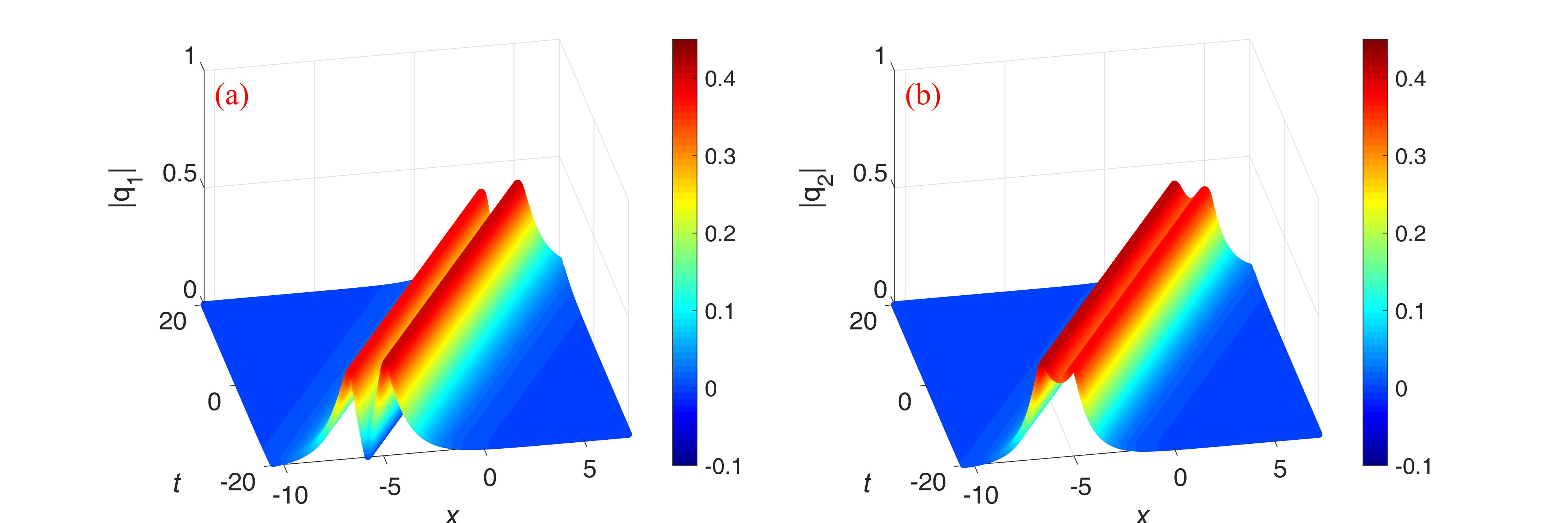}
\caption{(color on line) Double hump soliton on the zero background. Parameters: $\delta=1$, $\lambda_1=\frac{4}{5}-\ii \frac{3}{5}$, $\lambda_2=\frac{\sqrt{3}}{2}-\frac{\ii}{2}$, $d_1=1$, $d_2=0.91$}
\label{double-hump}
\end{figure}

{The dynamics of two-soliton  is shown in Fig. \ref{two-soliton}}.

\begin{figure}[tbh]
\centering
\includegraphics[height=50mm,width=140mm]{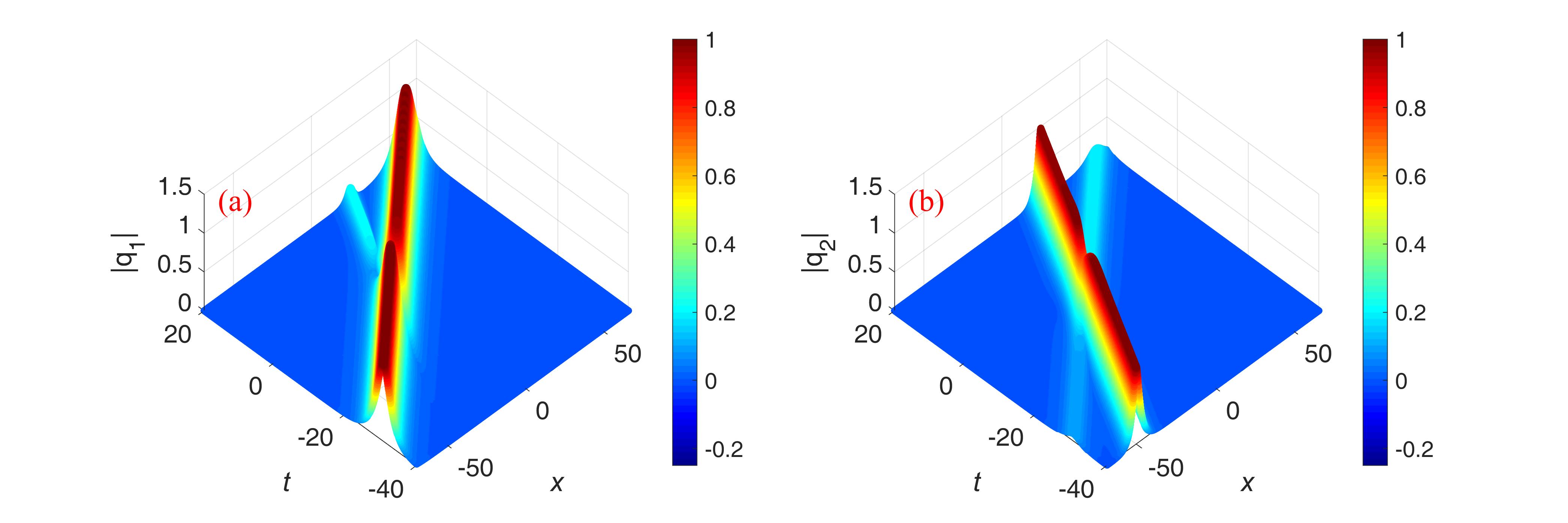}
\caption{(color on line) Two-soliton solution on a zero background. Parameters: $\delta=1$, $\lambda_1=3-\ii$, $\lambda_2=2-\ii$, $c_{1,1}=c_{1,2}=0$, $c_{2,1}=c_{2,2}=1$, $d_{1,1}=d_{1,2}=1$, $d_{2,1}=5$, $d_{2,2}=0$.}
\label{two-soliton}
\end{figure}

\subsection{Higher order soliton solutions}
To find the higher order soliton solution, the function $\ee^{\ii\left(\frac{\delta}{\lambda}y-\frac{\lambda}{2}s\right)}$ is expanded at $\lambda=\lambda_1$ by
\begin{equation}\label{eq:expansion}
 \ee^{\ii\left(\frac{\delta}{\lambda}y-\frac{\lambda}{2}s\right)}=\ee^{\ii\left(\frac{\delta}{\lambda_1}y-\frac{\lambda_1}{2}s\right)}\sum_{n=0}^{\infty}
 E_n(y,s)(\lambda-\lambda_1)^n,\,\,\,\,\, E_n(y,s)=\left(\frac{-1}{\lambda_1}\right)^n\sum_{\|\alpha\|_n=n}\frac{(\ii\delta y\lambda_1^{-1})^{\alpha_1+\alpha_2+\cdots+\alpha_n}}{\alpha_1!\alpha_2!\cdots\alpha_n!}+\frac{\left(-\frac{\ii s}{2}\right)^n}{n!},
\end{equation}
where $\|\alpha\|_{n}=\sum_{j=1}^{n}j\alpha_j$.
Correspondingly, the vector $\Phi_1$ can be expanded at $\lambda=\lambda_1$:
\begin{equation}\label{eq:high-vector}
\Phi_1(\lambda)=\begin{bmatrix}
\mathbf{c}_1 \\[5pt]
\mathbf{d}_1\ee^{\ii\left(\frac{\delta}{\lambda}y-\frac{\lambda}{2}s\right)} \\
\end{bmatrix}=\sum_{n=0}^{\infty}\Phi_1^{[n]}(\lambda-\lambda_1)^n,\,\,\,\,\, \Phi_1^{[n]}=\begin{bmatrix}
{\displaystyle \mathbf{c}_1^{[n]}} \\[5pt]
{\displaystyle \ee^{\ii\left(\frac{\delta}{\lambda_1}y-\frac{\lambda_1}{2}s\right)}\sum_{i=0}^{n}\mathbf{d}_1^{[i]}E_{n-i}(y,s)} \\
\end{bmatrix}
\end{equation}
where
\[
\mathbf{c}_1=\sum_{n=0}^{\infty}\mathbf{c}_1^{[n]}(\lambda-\lambda_1)^n,\,\,\,\,\, \mathbf{d}_1=\sum_{n=0}^{\infty}\mathbf{d}_1^{[n]}(\lambda-\lambda_1)^n.
\]
Inserting the vectors \eqref{eq:high-vector} into the formula \eqref{eq:general-dt}, the higher order soliton are obtained:
\begin{equation}\label{eq:high-order}
\begin{split}
 q_1[N]=&\frac{\det(\mathbf{M}_N^{(1)})}{\det(\mathbf{M}_N)},
 \,\,\,\,\,\,
 q_2[N]=\frac{\det(\mathbf{M}_N^{(2)})}{\det(\mathbf{M}_N)}, \\
 \rho[N]=&\frac{\delta}{2}-\ln_{ys}(\mathbf{M}_N),\\
 x=&\frac{\delta y}{2}-\ln_{s}(\mathbf{M}_N)+{\rm const},\,\,\,\, t=-s,
\end{split}
\end{equation}
where
\begin{equation}
\mathbf{M}_N=\mathbf{K}_N\mathbf{S}_{N}\mathbf{K}_N^{\dag},\,\,\,\,\, \mathbf{M}_N^{(i)}=\begin{bmatrix}
\mathbf{M}_N & \mathbf{Y}^{(i)\dag} \\[5pt]
\mathbf{Y}^{(3)} & 0\\
\end{bmatrix},
\end{equation}
and
\[
\mathbf{Y}=\left[\Phi_1^{[0]},\Lambda\Phi_1^{[0]*},\cdots,\Phi_1^{[N-1]},\Lambda\Phi_1^{[N-1]*}\right]
=\begin{bmatrix}
\mathbf{Y}^{(1)}\\
\mathbf{Y}^{(2)}\\
\mathbf{Y}^{(3)}\\
\mathbf{Y}^{(4)}\\
\end{bmatrix}
\]
\[
\mathbf{K}_N=\begin{bmatrix}
\Phi_1^{[0]\dag} &0&\cdots&0&0 \\
0&-\Phi_1^{[0]\T}\Lambda &\cdots&0&0 \\
\vdots &\vdots &\cdots&\vdots&\vdots \\
\Phi_1^{[N-1]\dag}&0&\cdots&\Phi_1^{[0]\dag}&0 \\
0&-\Phi_1^{[N-1]\T}\Lambda&\cdots&0&-\Phi_1^{[0]\T}\Lambda \\
\end{bmatrix},\]
\[ \mathbf{S}_{N}=\left(\begin{bmatrix}
 \binom{i+j-2}{i-1}\frac{(-1)^{j-1}\mathbb{I}_4}{(\lambda_1-\lambda_1^*)^{i+j-1}}& \binom{i+j-2}{i-1}\frac{\mathbb{I}_4}{(-\lambda_1^*-\lambda_1^*)^{i+j-1}} \\[8pt]
\binom{i+j-2}{i-1}\frac{(-1)^{j-1+i-1}\mathbb{I}_4}{(\lambda_1+\lambda_1)^{i+j-1}} & \binom{i+j-2}{i-1}\frac{(-1)^{i-1}\mathbb{I}_4}{(-\lambda_1^*+\lambda_1)^{i+j-1}} \\
\end{bmatrix}\right)_{1\leq i,j\leq N}.
\]
In particular, by choosing the vectors
\[
\Phi_1^{[0]}=\begin{bmatrix}
0\\
1 \\
d_1\ee^{A_1+\ii B_1} \\
d_2\ee^{A_1+\ii B_1} \\
\end{bmatrix},\,\,\,\,\,\, \Phi_1^{[1]}=\begin{bmatrix}
0\\
0 \\
d_1\left(C_1+\ii D_1\right)\ee^{A_1+\ii B_1} \\
d_2\left(C_2+\ii D_2\right)\ee^{A_1+\ii B_1} \\
\end{bmatrix}
\]
where $\lambda_1=a_1+\ii b_1$, $A_1=b_1\left(\frac{\delta y}{|\lambda_1|^2}+\frac{s}{2}\right)$, $B_1=a_1\left(\frac{\delta y}{|\lambda_1|^2}-\frac{s}{2}\right)$, $C_i=e_i-\frac{2a_1b_1\delta y}{|\lambda_1|^4}$, $D_i= f_i-\left(\frac{(a_1^2-b_1^2)\delta y}{|\lambda_1|^4}+\frac{s}{2}\right)$, $i=1,2$, $e_i$, $f_i$ are arbitrary real constants, we have the second order soliton solution:
\begin{equation}\label{eq:sec-order}
\begin{split}
 q_1[2]=&\frac{ d_2 F_2{\rm e}^{A_{1}-\ii B_{1}}}{4 {b_{1}}^{3}\left( \ii
b_{1}-a_{1} \right)F_1},
 \,\,\,\,\,\,
 q_2[2]=\frac{ d_1 F_3{\rm e}^{A_{1}+\ii B_{1}}}{4{b_{ 1}}^{3} \left( a_{ 1}+\ii b_{ 1} \right) F_1}, \\
 \rho[2]=&\frac{\delta}{2}-2\ln_{ys}(F_1),\\
 x=&\frac{\delta y}{2}-2\ln_{s}(F_1),\,\,\,\, t=-s,
\end{split}
\end{equation}
where
\begin{multline*}
F_1=\frac {
 \left[\left( -4\,{d_{1}}^2 \left(\left(C_1-C_2 \right)^2+ \left(D_1-D_2\right)^2 \right) {d_{ 2}
}^ 2{a_{ 1}}^ 2-{d}^{4} \right) {b_{ 1}}^ 2-{a_{ 1}}^ 2{d}^{4}
 \right] {{\rm e}^{4\,A_{ 1}}}}{16{b_{ 1}}^{4} \left( {a_{ 1}}^ 2+{b
_{ 1}}^ 2 \right) }\\-\frac { \left[  \left(  \left( {C_{ 1}}^ 2+{D_{ 1}}^ 2
 \right) {d_{ 1}}^ 2+{d_{ 2}}^ 2 \left( {C_{ 2}}^ 2+{D_{ 2}}^{2
} \right)  \right) {b_{ 1}}^ 2+{d}^ 2/2\right] {{\rm e}^{2\,A_{ 1}}}}{4{b_{ 1}}^{4}}-{\frac  1{16{b_{ 1}}^{4}}},
\end{multline*}
\begin{multline*}
F_2=\left[  \left( -2\,{d_1}^ 2 \left(\left(D_{1}-
\ii C_{1}\right) \left(C_{2}-\ii D_{2}\right)+\ii{C_1}^2+\ii{D_1}^ 2 \right) a_{1}+ \left( \ii C_{2}-2
\,D_{1}+D_{2} \right) {d_{1}}^2+{d_{2}}^2 \left( \ii C_{2}-
D_{2} \right)  \right) {b_1}^ 2
 \right]{{\rm e}^{2 A_1}}\\
+\left[\left(  \left(  \left( -2\ii D_
1+\ii D_2-C_2 \right) {d_ 1}^ 2-{d_ 2}^ 2 \left( C_{
2}+iD_{2} \right)  \right) a_{1}-{d}^2
 \right) b_{1}-\ii a_{1}{d}^2
 \right]{{\rm e}^{2 A_1}}
\\+\left( \ii C_{2}+D_{2} \right) {b_{1}}^2+ \left( \left( \ii D_{
 2}-C_{2} \right) a_1-1 \right) b_ 1-\ii a_1,
\end{multline*}
\begin{multline*}
F_3=-\left[  \left(2 {d_{2}}^ 2 \left(  \left( \ii C_{ 2}+D_{ 2} \right) \left(C_{1}+\ii D_{1}\right)-i{C_{2}}^ 2-\ii {D_{2}}^ 2 \right) a_{1}+ \left( \ii C_{1}-D_{1}+2D_{2} \right) {d_{2}}^ 2+{d_{ 1}}^2 \left( iC_{ 1}+D_{ 1} \right)  \right) {b_{ 1}}^2
 \right] {{\rm e}^{2\,A_
{ 1}}}\\
-\left[\left(\left(\left( \ii D_{ 1}-2\ii D_{ 2}+C_{1} \right) {d_{
 2}}^ 2-{d_{ 1}}^ 2 \left( iD_{ 1}-C_{ 1} \right)  \right)
a_{ 1}+{d}^2\right) b_{ 1}-\ii a_{1
}d^2  \right] {{\rm e}^{2\,A_
{ 1}}}
\\-\left[  \left( \ii C_{1} -D_{ 1}\right) {b_{ 1}}^ 2+ \left( \left( \ii D_{ 1}+C_{1
} \right) a_{ 1}+1\right) b_{1}-\ii a_{1} \right],
\end{multline*}
and $d^2=d_1^2+d_2^2$.
{Fig. \ref{second-order} and  Fig. \ref{third-order} display a second order soliton and a third order soliton, respectively, while Fig. \ref{second-second-order} shows a  superposition of two second order solitons}. It is seen that the two solitons in distinct components have different velocities.


\begin{figure}[tbh]
\centering
\includegraphics[height=50mm,width=140mm]{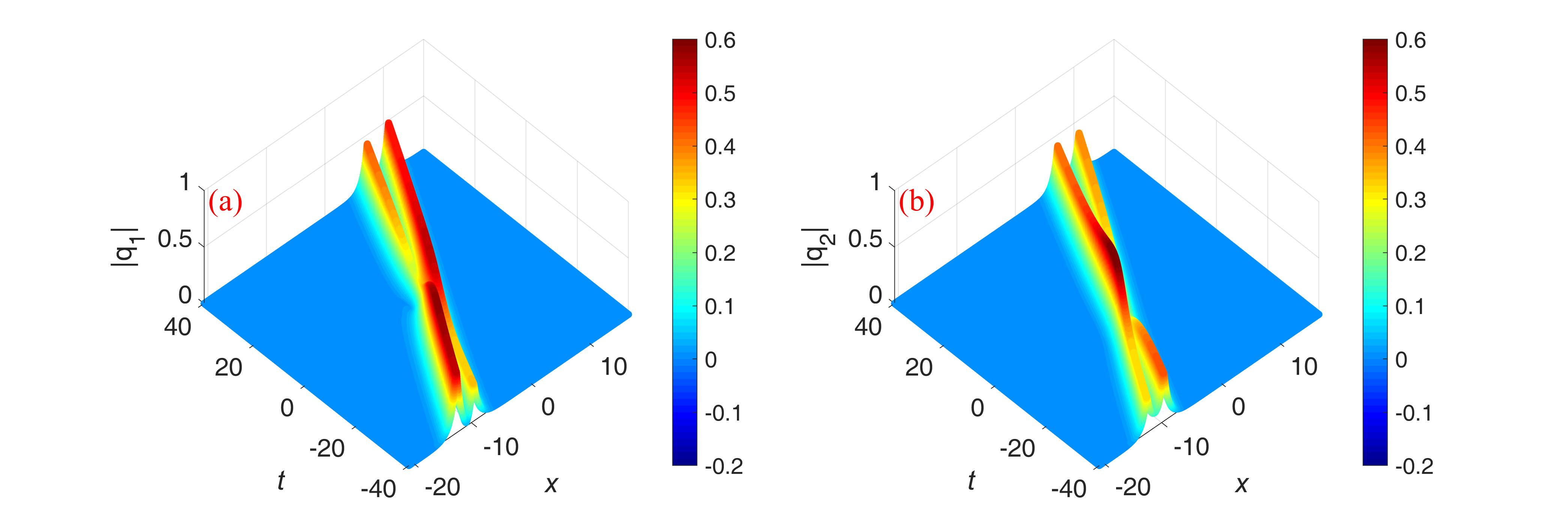}
\caption{The second order soliton. Parameters: $\delta=1$, $a_1=\frac{4}{5}$, $b_1=-\frac{3}{5}$, $d_{1}=d_2=1$, $e_{1}=2$, $e_{2}=10$, $f_1=f_2=0$}
\label{second-order}
\end{figure}
\begin{figure}[tbh]
\centering
\includegraphics[height=50mm,width=140mm]{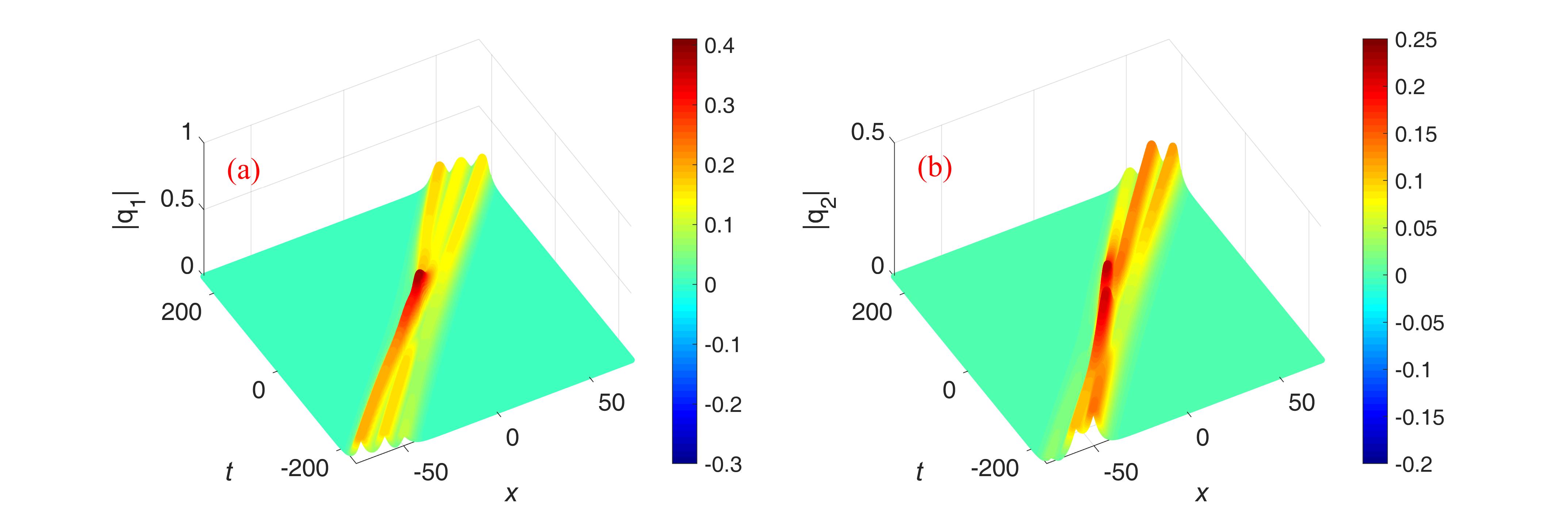}
\caption{The third order soliton. Parameters: $\delta=1$, $\lambda_1=1-\frac{\ii}{5}$, $c_{1}=0$, $c_{2}=1$, $d_1=d_2=1$, $d_1^{[1]}=2$, $d_2^{[1]}=10$,}
\label{third-order}
\end{figure}
\begin{figure}[tbh]
\centering
\includegraphics[height=50mm,width=140mm]{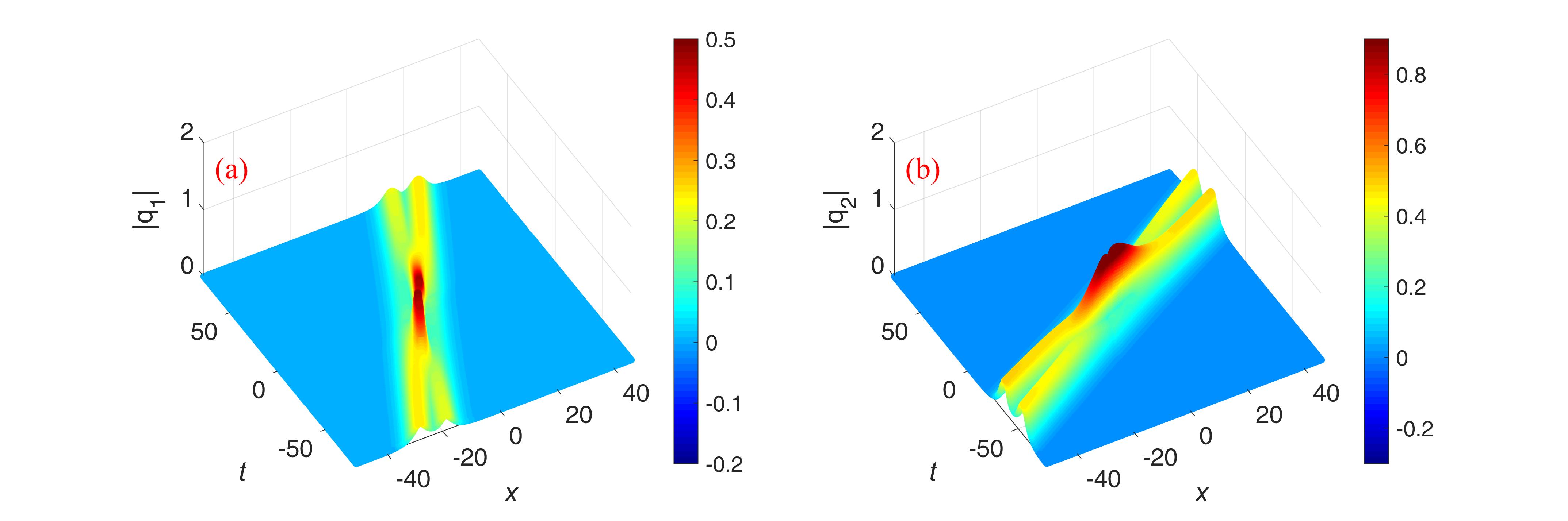}
\caption{The second-second order soliton. Parameters: $\delta=1$, $\lambda_1=1-\frac{\ii}{4}$, $\lambda_2=2-\frac{\ii}{2}$, $c_{2,1}=c_{2,2}=1$, $d_{1,1}=d_{2,2}=1$, $d_{1,2}=d_{2,1}=0$, $d_{i,j}^{[1]}=0$, $i,j=1,2$.}
\label{second-second-order}
\end{figure}

\section{Modulational instability analysis and rogue waves}\label{sec4}

\subsection{Modulational instability analysis of the TCCD equation}
In this section, we use the squared eigenfunctions method to derive the linear stability equation for the TCCD equations. Suppose we have the following stationary zero curvature equation:
\begin{equation}\label{eq:station-eq}
\mathbf{L}_y(y,s;\lambda)=\left[\mathbf{U}(y,s;\lambda), \mathbf{L}(y,s;\lambda)\right],\,\,\,\,\,\, \mathbf{L}_s(y,s;\lambda)=\left[\mathbf{V}(y,s;\lambda), \mathbf{L}(y,s;\lambda)\right]
\end{equation}
where
\[
\mathbf{L}(y,s;\lambda)=\begin{bmatrix}
\mathbf{A}(y,s;\lambda) & \mathbf{B}(y,s;\lambda) \\
\mathbf{C}(y,s;\lambda) & \mathbf{D}(y,s;\lambda) \\
\end{bmatrix}
\]
which can be rewritten as
\begin{equation}\label{eq:station-s}
\begin{split}
\mathbf{A}_s=&\frac{\ii}{2}(\kappa\mathbf{Q}^{\dag}\mathbf{C}-\mathbf{B}\mathbf{Q}),\,\,\,\,\,\,\, \mathbf{D}_s=\frac{\ii}{2}(\mathbf{Q}\mathbf{B}-\kappa\mathbf{C}\mathbf{Q}^{\dag}), \\
\mathbf{B}_s=&\frac{\ii}{2}\lambda \mathbf{B}+\frac{\ii}{2}\kappa(\mathbf{Q}^{\dag}\mathbf{D}-\mathbf{A}\mathbf{Q}^{\dag}),\,\,\,\,\,\, \mathbf{C}_s=-\frac{\ii}{2}\lambda \mathbf{C}+\frac{\ii}{2}(\mathbf{Q}\mathbf{A}-\mathbf{D}\mathbf{Q}), \\
\end{split}
\end{equation}
and
\begin{equation}\label{eq:station-y}
\begin{split}
\lambda\mathbf{A}_y=&-(\kappa\mathbf{Q}_y^{\dag}\mathbf{C}+\mathbf{B}\mathbf{Q}_y),\,\,\,\,\,\,\, \lambda\mathbf{D}_y=\mathbf{Q}_y\mathbf{B}+\kappa\mathbf{C}\mathbf{Q}_y^{\dag}, \\
\lambda\mathbf{B}_y=&-2\ii\rho \mathbf{B}-\kappa(\mathbf{Q}_y^{\dag}\mathbf{D}-\mathbf{A}\mathbf{Q}_y^{\dag}),\,\,\,\,\,\, \lambda\mathbf{C}_y=2\ii\rho \mathbf{C}+\mathbf{Q}_y\mathbf{A}-\mathbf{D}\mathbf{Q}_y. \\
\end{split}
\end{equation}
Taking the derivative of equations \eqref{eq:station-s} with respect to $y$, together with equations \eqref{eq:station-y}, we obtain that
\begin{equation}\label{eq:linear-stability}
\begin{split}
\mathbf{A}_{sy}=&\frac{\ii}{2}(\kappa\mathbf{Q}^{\dag}\mathbf{C}-\mathbf{B}\mathbf{Q})_y,\,\,\,\,\,\,\, \mathbf{D}_{sy}=\frac{\ii}{2}(\mathbf{Q}\mathbf{B}-\kappa\mathbf{C}\mathbf{Q}^{\dag})_y, \\
\mathbf{B}_{sy}=&\rho\mathbf{B}+\frac{\ii}{2}\kappa(\mathbf{Q}^{\dag}\mathbf{D}_y-\mathbf{A}_y\mathbf{Q}^{\dag}),\,\,\,\,\,\, \mathbf{C}_{sy}=\rho\mathbf{C}+\frac{\ii}{2}(\mathbf{Q}\mathbf{A}_y-\mathbf{D}_y\mathbf{Q}), \\
\end{split}
\end{equation}
which are linear differential equation for the matrices $\mathbf{A}$, $\mathbf{B}$, $\mathbf{C}$ and $\mathbf{D}$, which depend on $\mathbf{Q}$, $\mathbf{Q}^{\dag}$ and $\rho$ and are independent of $\lambda$. Moreover, if the matrices $\mathbf{A}$, $\mathbf{B}$, $\mathbf{C}$ and $\mathbf{D}$ satisfy the symmetry relations: \begin{equation}\label{eq:symmetry-station}
\mathbf{A}=\ii h_1(y,s)\mathbb{I}_2,\,\,\,\, \mathbf{D}=\ii h_2(y,s)\mathbb{I}_2,\,\,\,\,\, \mathbf{A}-\mathbf{D}=-\frac{\ii h(y,s)}{2}\mathbb{I}_2,\,\,\,\, \mathbf{B}=-\kappa\mathbf{C}^{\dag},\,\,\,\, \sigma_2\mathbf{C}^*\sigma_2^{-1}=-\mathbf{C},\end{equation}
where $h_1(y,s)$, $h_2(y,s)$ and $h(y,s)$ are real functions,
then  \eqref{eq:linear-stability} can be simplified into
\begin{equation}\label{eq:linear-stability-1}
\begin{split}
h_{sy}=&-\frac{\kappa}{2}(q_1^*c_1+q_1c_1^*+q_2c_2^*+q_2^*c_2)_y, \\
c_{1,sy}=&\rho c_1+q_1 h_{y},\,\,\,\,\, c_{2,sy}=\rho c_2+q_2 h_{y}, \\
\end{split}
\end{equation}
where
\[
\mathbf{C}(y,s)=\begin{bmatrix}
c_1(y,s) & c_2(y,s) \\
c_2^*(y,s) &-c_1^*(y,s)  \\
\end{bmatrix}.
\]
Obviously \eqref{eq:linear-stability-1} is the linearized version of  \eqref{eq:cd}. Therefore, the solution of  \eqref{eq:linear-stability-1} can be constructed from \eqref{eq:station-eq}. {In this way, we can avoid solving the linearized equations \eqref{eq:linear-stability-1} directly. The general solution for the linearized equations \eqref{eq:linear-stability-1}--\eqref{eq:station-eq} can be found by combining the inverse scattering analysis \cite{BilmanM17}}.

{Here, we only analyze} the spectral stability of the plane wave solution for the equations \eqref{eq:cd}:
\begin{equation}\label{eq:cd-bc}
q_i=\frac{\alpha_i}{2}\ee^{\theta_i},\,\,\,\,\, \theta_i=\ii\left(\frac{\beta_i s}{2}-\frac{\delta y}{\beta_i}\right),\,\,\,\,\, \rho=\frac{\delta}{2}.
\end{equation}
Inserting the plane wave solution \eqref{eq:cd-bc} into \eqref{eq:laxcd}, we can solve the linear system:
\begin{equation}\label{eq:fund-cd}
\Gamma_i(y,s;\lambda)\equiv\mathbf{D}\mathbf{V}_i(\lambda,\xi^{[i]})\ee^{\frac{\ii}{4}\xi^{[i]}\left(s+\frac{\delta \xi^{[i]}y}{\beta_1\beta_2\lambda}\right)-\frac{\ii\delta(\kappa|\alpha|^2+|\beta|^2+\lambda^2)y}{4\beta_1\beta_2\lambda}},\,\,\,\, |\alpha|^2=\alpha_1^2+\alpha_2^2,\,\,\,\, |\beta|^2=\beta_1^2+\beta_2^2,
\end{equation}
where $i=1,2,3,4,$
\[
\mathbf{D}={\rm diag}\left(\ee^{-\frac{\theta_1-\theta_2}{2}},\ee^{\frac{\theta_1-\theta_2}{2}},\ee^{\frac{\theta_1+\theta_2}{2}} ,\ee^{-\frac{\theta_1+\theta_2}{2}}\right),
\]
if $\alpha_1\alpha_2\neq 0$, we have
\begin{equation}\label{eq:vector}
\mathbf{V}_i(\lambda,\xi^{[i]})=\begin{bmatrix}
\kappa(\beta_1-\beta_2-\lambda+\xi^{[i]})[(\beta_1+\beta_2)^2-(\lambda+\xi^{[i]})^2]+(\alpha_1^2-\alpha_2^2)(\beta_1+\beta_2)+|\alpha|^2(\lambda+\xi^{[i]})\\[5pt]
2\alpha_1\alpha_2(\beta_1+\beta_2)\\[5pt]
\alpha_1[|\alpha|^2+\kappa(\lambda-\beta_1)^2-\kappa(\xi^{[i]}-\beta_2)^2]\\[5pt]
\alpha_2[|\alpha|^2+\kappa(\lambda+\beta_2)^2-\kappa(\xi^{[i]}+\beta_1)^2]\\
\end{bmatrix}
\end{equation}
where $\xi^{[i]}$s are the roots of the following characteristic equation:
\begin{equation}\label{eq:chara-eq}
\det\left(\begin{bmatrix}
\beta_1-\beta_2+\lambda-\xi^{[i]} &0&\kappa \alpha_1&\kappa\alpha_2 \\
0&\beta_2-\beta_1+\lambda-\xi^{[i]}&\kappa\alpha_2&-\kappa\alpha_1 \\
\alpha_1&\alpha_2& -\beta_1-\beta_2-\lambda-\xi^{[i]}&0 \\
\alpha_2&-\alpha_1&0&\beta_1+\beta_2-\lambda-\xi^{[i]} \\
\end{bmatrix}\right)=0;
\end{equation}
if $\alpha_1=0$ and $\alpha_2\neq 0$, we have
\begin{equation}\label{eq:dege-vector}
\mathbf{V}_i(\lambda,\xi^{[i]})=\begin{bmatrix}
1\\[5pt]
0 \\
0 \\[5pt]
\frac{\alpha_2}{\lambda+\xi^{[i]}-\beta_1-\beta_2}\\
\end{bmatrix},\,\,\, i=1,\,2,\,\,\,\,\, \mathbf{V}_i(\lambda,\xi^{[i]})=\begin{bmatrix}
0\\[5pt]
1 \\[5pt]
\frac{\alpha_2}{\lambda+\xi^{[i]}+\beta_1+\beta_2}\\[5pt]
0\\
\end{bmatrix},\,\,\,i=3,\,4
\end{equation}
where $\xi^{[i]}$, $i=1,2$ satisfy the quadratic equation $(\beta_1-\beta_2+\lambda-\xi^{[i]})(\beta_1+\beta_2-\lambda-\xi^{[i]})-\kappa\alpha_2^2=0$, and $\xi^{[i]}$, $i=3,4$ satisfy the quadratic equation $(\beta_2-\beta_1+\lambda-\xi^{[i]})(\beta_1+\beta_2+\lambda+\xi^{[i]})+\kappa\alpha_2^2=0$. We assume that the roots $\xi^{[i]}$ are simple roots. Vector solutions corresponding to the multiple roots can be obtained as coalescence of simple roots.

By the way, the solutions can be constructed through the B\"acklund transformation \eqref{eq:backlund1} and the vector solutions \eqref{eq:fund-cd} and \eqref{eq:dege-vector}. Note that if we choose the combination of two or more vector solutions in \eqref{eq:fund-cd}, we can obtain breathers or resonant breathers solutions. If we choose the combination of two or more vector solutions in \eqref{eq:dege-vector}, we can obtain bright-dark solutions, breather solutions or combination thereof. These solutions can be obtained by inserting the vector solutions \eqref{eq:fund-cd} into formulas \eqref{eq:backlund1}.

Next, we turn to analyze the modulational instability for the plane wave solution \eqref{eq:cd-bc}. Firstly, we construct the solutions $\mathbf{L}(y,s;\lambda)$ in equations \eqref{eq:station-eq}. If the solution $\Phi_i(y,s;\lambda)$ satisfies the Lax pair \eqref{eq:laxcd} with the plane wave solution \eqref{eq:cd-bc}, then the solution $\Phi_j^{\dag}(y,s;\lambda^*)\Sigma$ satisfies the following adjoint Lax pair
\begin{equation}\label{eq:ad-laxcd}
\begin{split}
 -\Psi_y&=\Psi\mathbf{U}(y,s;\lambda), \\
 -\Psi_s&=\Psi\mathbf{V}(y,s;\lambda). \\
\end{split}
\end{equation}
It follows that $\mathbf{L}_{i,j}(y,s;\lambda)=\Phi_i(y,s;\lambda)\Phi_j^{\dag}(y,s;\lambda^*)\Sigma$ satisfies the stationary zero curvature equation \eqref{eq:station-eq}. In general, the solution $\mathbf{L}_{i,j}(y,s;\lambda)$ does not satisfy the symmetry relation \eqref{eq:symmetry-station}. Because the linear partial differential equations \eqref{eq:linear-stability} are independent of $\lambda$, linear combinations of $\mathbf{L}_{i,j}(y,s;\lambda)$ with distinct $\lambda$ still satisfy the equation \eqref{eq:linear-stability}. Through the above analysis, we can construct the matrix function
\begin{multline}\label{eq:abcd}
\begin{bmatrix}
\mathbf{A}(y,s) & \mathbf{B}(y,s) \\
\mathbf{C}(y,s) & \mathbf{D}(y,s) \\
\end{bmatrix}=\Phi_i(y,s;\lambda)\Phi_j^{\dag}(y,s;\lambda^*)\Sigma+\Lambda\Phi_j^*(y,s;\lambda^*)\Phi_i^{\T}(y,s;\lambda)\Lambda^{-1}\Sigma\\
-\Phi_j(y,s;\lambda^*)\Phi_i^{\dag}(y,s;\lambda)\Sigma-\Lambda\Phi_i^*(y,s;\lambda)\Phi_j^{\T}(y,s;\lambda^*)\Lambda^{-1}\Sigma
\end{multline}
satisfying the symmetry relation \eqref{eq:symmetry-station}.
Suppose the functions $c_1(y,s)$, $c_2(y,s)$, $h(y,s)$ have the form:
\begin{equation}\label{eq:fourier-mode}
\begin{split}
h(y,s)&=f\ee^{\ii \eta (y+\mu s)}+f^*\ee^{-\ii \eta^* (y+\mu^* s)}, \\
c_i(y,s)&=\left(g_i\ee^{\ii \eta (y+\mu s)}-g_{-i}^*\ee^{-\ii \eta^* (y+\mu^* s)} \right)\ee^{\ii \theta_i},\,\,\,\, i=1,2.
\end{split}
\end{equation}
By choosing the solutions \eqref{eq:fund-cd}, we can determine $f$, $g_i$, $g_{-i}$, $i=1,2$ exactly. For convenience, we introduce the following notation
\[
\mathbf{V}_i(\lambda,\xi^{[i]})=\begin{bmatrix}
\mathbf{v}(\lambda,\xi^{[i]}) \\
\mathbf{w}(\lambda,\xi^{[i]}) \\
\end{bmatrix},\,\,\,\, \mathbf{v}(\lambda,\xi^{[i]})=\begin{bmatrix}
v_1(\lambda,\xi^{[i]}) \\
v_2(\lambda,\xi^{[i]}) \\
\end{bmatrix},\,\,\,\, \mathbf{w}(\lambda,\xi^{[i]})=\begin{bmatrix}
w_1(\lambda,\xi^{[i]}) \\
w_2(\lambda,\xi^{[i]}) \\
\end{bmatrix}
\]
which yields
\begin{multline}
\mathbf{A}(y,s)=\left(\mathbf{v}(\lambda,\xi^{[i]})\mathbf{v}^{\dag}(\lambda^*,\xi^{[j]*})-
\sigma_2\mathbf{v}^*(\lambda^*,\xi^{[j]*})\mathbf{v}^{\T}(\lambda,\xi^{[i]})\sigma_2\right)
\ee^{\frac{\ii}{4}(\xi^{[i]}-\xi^{[j]})\left(s+(\xi^{[i]}+\xi^{[j]})\frac{\delta y}{\beta_1\beta_2\lambda}\right)}\\
+\left(\sigma_2\mathbf{v}^*(\lambda,\xi^{[i]})\mathbf{v}^{\T}(\lambda^*,\xi^{[j]*})\sigma_2-
\mathbf{v}(\lambda^*,\xi^{[j]*})\mathbf{v}^{\dag}(\lambda,\xi^{[i]})\right)\ee^{-\frac{\ii}{4}(\xi^{[i]*}-\xi^{[j]*})\left(s+(\xi^{[i]*}+\xi^{[j]*})\frac{\delta y}{\beta_1\beta_2\lambda^*}\right)}\\
=\left(v_1(\lambda,\xi^{[i]})v_1^*(\lambda^*,\xi^{[j]*})+v_2(\lambda,\xi^{[i]})v_2^*(\lambda^*,\xi^{[j]*})\right)\mathbb{I}_2\ee^{\frac{\ii}{4}(\xi^{[i]}-\xi^{[j]})\left(s+(\xi^{[i]}+\xi^{[j]})\frac{\delta y}{\beta_1\beta_2\lambda}\right)}\\
-\left(v_1^*(\lambda,\xi^{[i]})v_1(\lambda^*,\xi^{[j]*})+v_2^*(\lambda,\xi^{[i]})v_2(\lambda^*,\xi^{[j]*})\right)\mathbb{I}_2\ee^{-\frac{\ii}{4}(\xi^{[i]*}-\xi^{[j]*})\left(s+(\xi^{[i]*}+\xi^{[j]*})\frac{\delta y}{\beta_1\beta_2\lambda^*}\right)}
\end{multline}
\begin{multline}
\mathbf{C}(y,s)=\left(\mathbf{w}(\lambda,\xi^{[i]})\mathbf{v}^{\dag}(\lambda^*,\xi^{[j]*})-
\sigma_2\mathbf{w}^*(\lambda^*,\xi^{[j]*})\mathbf{v}^{\T}(\lambda,\xi^{[i]})\sigma_2\right)
\ee^{\frac{\ii}{4}(\xi^{[i]}-\xi^{[j]})\left(s+(\xi^{[i]}+\xi^{[j]})\frac{\delta y}{\beta_1\beta_2\lambda}\right)}\\
+\left(\sigma_2\mathbf{w}^*(\lambda,\xi^{[i]})\mathbf{v}^{\T}(\lambda^*,\xi^{[j]*})\sigma_2-
\mathbf{w}(\lambda^*,\xi^{[j]*})\mathbf{v}^{\dag}(\lambda,\xi^{[i]})\right)\ee^{-\frac{\ii}{4}(\xi^{[i]*}-\xi^{[j]*})\left(s+(\xi^{[i]*}+\xi^{[j]*})\frac{\delta y}{\beta_1\beta_2\lambda^*}\right)}\\
=\begin{bmatrix}
\left[w_1(\lambda,\xi^{[i]})v_1^*(\lambda^*,\xi^{[j]*})+w_2^*(\lambda^*,\xi^{[j]*})v_2(\lambda,\xi^{[i]})\right]\ee^{\theta_1} & \left[w_1(\lambda,\xi^{[i]})v_2^*(\lambda^*,\xi^{[j]*})-w_2^*(\lambda^*,\xi^{[j]*})v_1(\lambda,\xi^{[i]})\right]\ee^{\theta_2}   \\[5pt]
\left[w_2(\lambda,\xi^{[i]})v_1^*(\lambda^*,\xi^{[j]*})-w_1^*(\lambda^*,\xi^{[j]*})v_2(\lambda,\xi^{[i]})\right]\ee^{-\theta_2} & \left[w_2(\lambda,\xi^{[i]})v_2^*(\lambda^*,\xi^{[j]*})+w_1^*(\lambda^*,\xi^{[j]*})v_1(\lambda,\xi^{[i]})\right]\ee^{-\theta_1} \\
\end{bmatrix}\\
\ee^{\frac{\ii}{4}(\xi^{[i]}-\xi^{[j]})\left(s+(\xi^{[i]}+\xi^{[j]})\frac{\delta y}{\beta_1\beta_2\lambda}\right)}\\
-\begin{bmatrix}
\left[w_2^*(\lambda,\xi^{[i]})v_2(\lambda^*,\xi^{[j]*})+w_1(\lambda^*,\xi^{[j]*})v_1^*(\lambda,\xi^{[i]})\right]\ee^{\theta_1} & \left[w_1(\lambda^*,\xi^{[j]*})v_2^*(\lambda,\xi^{[i]})-w_2^*(\lambda,\xi^{[i]})v_1(\lambda^*,\xi^{[j]*})\right]\ee^{\theta_2}   \\[5pt]
\left[w_2(\lambda^*,\xi^{[j]*})v_1^*(\lambda,\xi^{[i]})-w_1^*(\lambda,\xi^{[i]})v_2(\lambda^*,\xi^{[j]*})\right]\ee^{-\theta_2} & \left[w_1^*(\lambda,\xi^{[i]})v_1(\lambda^*,\xi^{[j]*})+w_2(\lambda^*,\xi^{[j]*})v_2^*(\lambda,\xi^{[i]})\right]\ee^{-\theta_1}   \\
\end{bmatrix}\\\ee^{-\frac{\ii}{4}(\xi^{[i]*}-\xi^{[j]*})\left(s+(\xi^{[i]*}+\xi^{[j]*})\frac{\delta y}{\beta_1\beta_2\lambda^*}\right)}
\end{multline}
\begin{multline}
\mathbf{D}(y,s)=\kappa\left(\mathbf{w}(\lambda,\xi^{[i]})\mathbf{w}^{\dag}(\lambda^*,\xi^{[j]*})-
\sigma_2\mathbf{w}^*(\lambda^*,\xi^{[j]*})\mathbf{w}^{\T}(\lambda,\xi^{[i]})\sigma_2\right)
\ee^{\frac{\ii}{4}(\xi^{[i]}-\xi^{[j]})\left(s+(\xi^{[i]}+\xi^{[j]})\frac{\delta y}{\beta_1\beta_2\lambda}\right)}\\
+\kappa\left(\sigma_2\mathbf{w}^*(\lambda,\xi^{[i]})\mathbf{w}^{\T}(\lambda^*,\xi^{[j]*})\sigma_2-
\mathbf{w}(\lambda^*,\xi^{[j]*})\mathbf{w}^{\dag}(\lambda,\xi^{[i]})\right)\ee^{-\frac{\ii}{4}(\xi^{[i]*}-\xi^{[j]*})\left(s+(\xi^{[i]*}+\xi^{[j]*})\frac{\delta y}{\beta_1\beta_2\lambda^*}\right)}\\
=\kappa\left(w_1(\lambda,\xi^{[i]})w_1^*(\lambda^*,\xi^{[j]*})+w_2(\lambda,\xi^{[i]})w_2^*(\lambda^*,\xi^{[j]*})\right)\mathbb{I}_2\ee^{\frac{\ii}{4}(\xi^{[i]}-\xi^{[j]})\left(s+(\xi^{[i]}+\xi^{[j]})\frac{\delta y}{\beta_1\beta_2\lambda}\right)}\\
-\kappa\left(w_1^*(\lambda,\xi^{[i]})w_1(\lambda^*,\xi^{[j]*})+w_2^*(\lambda,\xi^{[i]})w_2(\lambda^*,\xi^{[j]*})\right)\mathbb{I}_2\ee^{-\frac{\ii}{4}(\xi^{[i]*}-\xi^{[j]*})\left(s+(\xi^{[i]*}+\xi^{[j]*})\frac{\delta y}{\beta_1\beta_2\lambda^*}\right)}
\end{multline}
and $\mathbf{B}(y,s)=-\kappa\mathbf{C}^{\dag}(y,s)$. Furthermore, we obtain
\begin{equation}\label{eq:lin-sol}
\begin{split}
f=&2\ii\left[\left(v_1(\lambda,\xi^{[i]})v_1^*(\lambda^*,\xi^{[j]*})+v_2(\lambda,\xi^{[i]})v_2^*(\lambda^*,\xi^{[j]*})\right)\right.\\
&\left.-\kappa\left(w_1(\lambda,\xi^{[i]})w_1^*(\lambda^*,\xi^{[j]*})+w_2(\lambda,\xi^{[i]})w_2^*(\lambda^*,\xi^{[j]*})\right)\right], \\
g_1=&\left[w_1(\lambda,\xi^{[i]})v_1^*(\lambda^*,\xi^{[j]*})+w_2^*(\lambda^*,\xi^{[j]*})v_2(\lambda,\xi^{[i]})\right], \\
g_{-1}=&\left[w_2(\lambda,\xi^{[i]})v_2^*(\lambda^*,\xi^{[j]*})+w_1^*(\lambda^*,\xi^{[j]*})v_1(\lambda,\xi^{[i]})\right], \\
g_2=&\left[w_1(\lambda,\xi^{[i]})v_2^*(\lambda^*,\xi^{[j]*})-w_2^*(\lambda^*,\xi^{[j]*})v_1(\lambda,\xi^{[i]})\right],\\
g_{-2}=&\left[w_1^*(\lambda^*,\xi^{[j]*})v_2(\lambda,\xi^{[i]})-w_2(\lambda,\xi^{[i]})v_1^*(\lambda^*,\xi^{[j]*})\right],\\
\end{split}
\end{equation}
and
\[
\eta=\frac{\xi^{[i]}-\xi^{[j]}}{4},\,\,\,\,\, \mu=\frac{\delta}{\beta_1\beta_2\lambda}\left(\xi^{[i]}+\xi^{[j]}\right),
\]
where $\xi^{[i]}$ and $\xi^{[j]}$ satisfies the quartic equation \eqref{eq:chara-eq}. If we regard $y$ as the direction of evolution, then the variable $\eta$ should be real: i.e.  $\xi^{[i]}$, $\xi^{[j]}$ have the same image part. If $\frac{\delta}{\beta_1\beta_2\lambda}\left(\xi^{[i]}+\xi^{[j]}\right)$ is also real, the Fourier mode is modulationally stable; if $\frac{\delta}{\beta_1\beta_2\lambda}\left(\xi^{[i]}+\xi^{[j]}\right)$ is not real, the Fourier mode is modulational unstable. If $\eta\to 0$ and $\frac{\delta}{\beta_1\beta_2\lambda}\left(\xi^{[i]}+\xi^{[j]}\right)$ is not real, this corresponds to resonant or based-band modulational instability. For the integrable partial differential equation, one can use the exact solution to describe the dynamics of modulational instability (MI) or resonant MI \cite{BaronioCDLOW14,ZhaoL16}. For the MI, the corresponding solutions are the so called Akhmediev breathers. For the base-band MI, the corresponding solutions are rogue wave solutions.

In what follows, we show how to construct the rogue waves of CCSP equations \eqref{eq:ccsp} which correspond to the base-band MI.

Firstly, we need to consider how to find roots which possess the same image part, i.e. $\xi^{[j]}=\xi^{[i]}+a$, where $a$ is a real constant. We rewrite equation \eqref{eq:chara-eq} with a shift as
\begin{equation}\label{eq:chara-eq-1}
F(\xi+a)\equiv(\xi+a)^4-2(\kappa |\alpha|^2+|\beta|^2+\lambda^2)(\xi+a)^2+8\beta_1\beta_2\lambda(\xi+a)+\Delta(\lambda)=0,
\end{equation}
where \[
\Delta(\lambda)=\lambda^4+2(\kappa |\alpha|^2-|\beta|^2)\lambda^2+|\alpha|^4+2\kappa(\alpha_1^2-\alpha_2^2)(\beta_1^2-\beta_2^2)+(\beta_1^2-\beta_2^2)^2.
\]
So one has
\begin{equation}\label{eq:deri-chara}
G(\xi,a)\equiv\frac{F(\xi+a)-F(\xi)}{a}\equiv 4\xi^3+6a\xi^2+4a^2\xi+a^3-2(\kappa |\alpha|^2+|\beta|^2+\lambda^2)(2\xi+a)+8\beta_1\beta_2\lambda=0.
\end{equation}
The quartic equation $F(\xi)=0$ and the cubic equation $G(\xi,a)=0$ have one common root. It follows that the resultant of $F(\xi)$ and $G(\xi)$ should equal to zero:
\begin{equation}\label{eq:resultant}
{\rm Rest}(F(\xi),G(\xi,a))=0
\end{equation}
which is an eighth order equation with respect to $\lambda$. When $a=0$, the resultant ${\rm Rest}(F(\xi),G(\xi,0))$ is the discriminant of $F(\xi)=0$. Therefore, to seek the roots with $\xi^{[i]}$ and $\xi^{[i]}+a$, we firstly solve the resultant equation \eqref{eq:resultant} for a fixed $a$. Moreover, we can obtain two groups of roots of the form $\{\pm \lambda_i,\pm\lambda_i^*\}$, $i=1,2$ if $\kappa=1$; while for $\kappa=-1$ we can obtain a quadruple $\{\pm \lambda_1,\pm\lambda_1^*\}$ and four other real roots which are not related to MI. Especially, when $a=0$ and $\kappa=1$, there are two groups of roots which corresponds to two branches of MI, which are associated to two distinct rogue waves. For the defocusing case, when $a=0$ and $\kappa=-1$, there is one group of roots, corresponding to one branch of MI, which is associated with rogue wave solution. The MI spectrum is shown in the phase diagram (Fig. \ref{MI}). 

\begin{figure}[tbh]
\centering
\includegraphics[height=50mm,width=140mm]{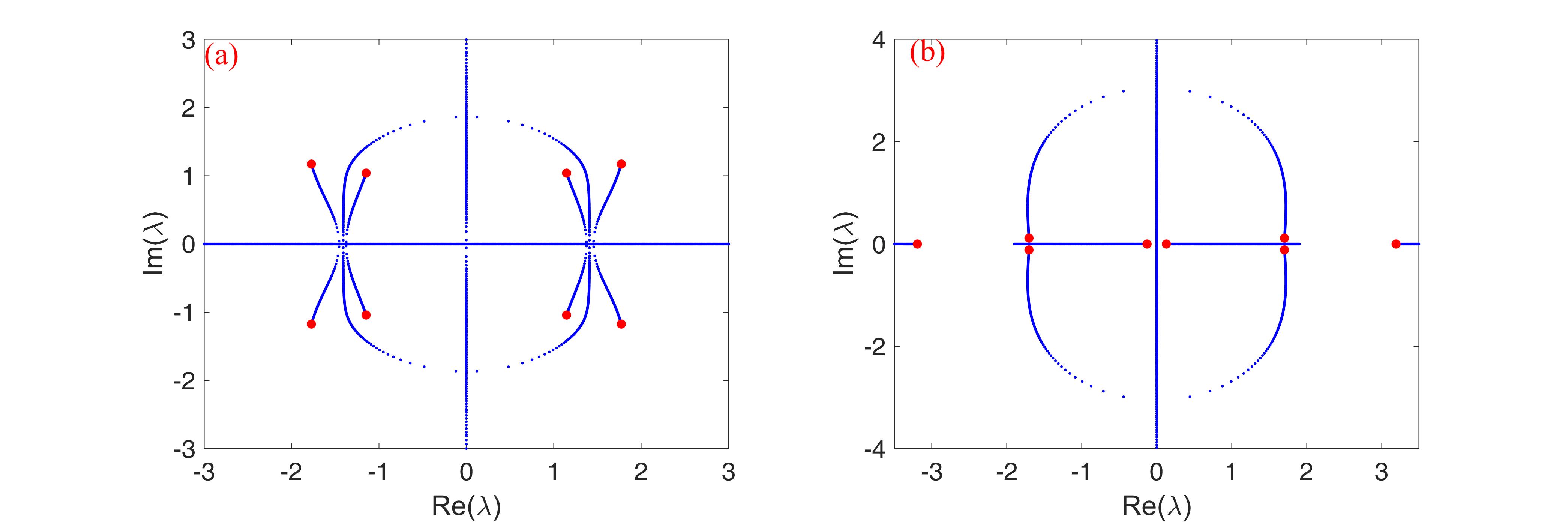}
\caption{(color on line): MI of focusing and defocusing type of CCSP equations. Parameters: $\delta=1$, $\alpha_1=\alpha_2=1$, $\beta_1=1$, $\beta_2=2$; (a) The eight red points represent the spectral points of the rogue waves. (b) The four red points located off the real axis represent the spectral points of the rogue waves.}
\label{MI}
\end{figure}

\subsection{Rogue waves of focusing and defocusing CCSP equation}
The case $a=0$ corresponds to $\xi^{[i]}=\xi^{[j]}$. In this case, the vector solutions are not involved in the expressions \eqref{eq:fund-cd}. We can seek the vector solutions by formal series expansions. Since the vector solutions \eqref{eq:fund-cd} involve the algebraic curve \eqref{eq:chara-eq},
by the Riemann-Hurwitz formula we deduce that the genus of the algebraic curve $F(\xi)=0$ is equal to $g=-n+1+B/2=1$, where $n=4$, $B=8$ is the total branching number. By the discriminant analysis, we know the discriminant equation of $F(\xi)=0$ does not possess multiple roots, so the characteristic equation \eqref{eq:chara-eq} $F(\xi)=0$ just involves double roots. In the neighborhood of a branch point $(\xi,\lambda)=(\xi_1,\lambda_1)$, the algebraic curve could have the following expansion:
\begin{equation}\label{eq:expan-xi-lambda}
\xi=\xi_1+\sum_{i=1}^{\infty} \xi_1^{[i]}\epsilon^i,\,\,\,\,\, \lambda=\lambda_1+\epsilon^2.
\end{equation}
Inserting the above formal series into the characteristic equation \eqref{eq:chara-eq} $F(\xi)=0$, we can solve for the coefficients $\xi_1^{[i]}$ recursively.

Following the steps given in \cite{guo12}, we can obtain the vector solutions with rational expression at the branch point:
\begin{equation}\label{eq:vect-rw}
\Phi_1(y,s;\lambda_1)=\mathbf{D}\begin{bmatrix}
\theta_1,&
\chi_1,&
\phi_1,&
\psi_1
\end{bmatrix}^\T \ee^{\frac{\ii}{4}\xi_1\left(s+\frac{\delta \xi_1y}{\beta_1\beta_2\lambda_1}+k_1\right)-\frac{\ii\delta(\kappa|\alpha|^2+|\beta|^2+\lambda_1^2)y}{4\beta_1\beta_2\lambda_1}}
\end{equation}
where $k_1$ is a complex parameter,
\begin{equation*}
\begin{split}
\theta_1&=\frac{\ii}{4}\left(s+\frac{2\delta \xi_1 y}{\beta_1\beta_2\lambda_1}+k_1\right)l_1+l_0, \\
\chi_1&=\frac{\ii}{4}\left(s+\frac{2\delta \xi_1 y}{\beta_1\beta_2\lambda_1}+k_1\right)\alpha_1\alpha_2l_2, \\
\phi_1&=\alpha_1\left[\frac{\ii}{4}\left(s+\frac{2\delta \xi_1 y}{\beta_1\beta_2\lambda_1}+k_1\right)l_3+2\kappa(\beta_2-\xi_1)\right], \\
\psi_1&=\alpha_2\left[\frac{\ii}{4}\left(s+\frac{2\delta \xi_1 y}{\beta_1\beta_2\lambda_1}+k_1\right)l_4-2\kappa(\beta_1+\xi_1)\right], \\
\end{split}
\end{equation*}
and
\begin{equation*}
\begin{split}
l_0(\lambda_1,\xi_1)=&\kappa\left[(\beta_2+\lambda_1)^2-(\beta_1+\xi_1)^2\right]+2\kappa(\beta_1+\xi_1)(\beta_1+\beta_2-\lambda_1-\xi_1)+|\alpha|^2, \\
l_1(\lambda_1,\xi_1)=&\kappa(\beta_1-\beta_2-\lambda_1+\xi_1)\left[(\beta_1+\beta_2)^2-(\lambda_1+\xi_1)^2\right]+(\alpha_1^2-\alpha_2^2)(\beta_1+\beta_2)+|\alpha|^2(\lambda_1+\xi_1),\\
l_2=&2(\beta_1+\beta_2),\,\,\,\,\, l_3(\lambda_1,\xi_1)=|\alpha|^2+\kappa(\lambda_1-\beta_1)^2-\kappa(\xi_1-\beta_2)^2, \\
l_4(\lambda_1,\xi_1)=&|\alpha|^2+\kappa(\lambda_1+\beta_2)^2-\kappa(\xi_1+\beta_1)^2. \\
\end{split}
\end{equation*}
Then by the Darboux transformation \eqref{eq:backlund}, the rogue wave solutions for the CCSP equations \eqref{eq:ccsp} are
\begin{equation}\label{eq:rw-sol}
\begin{split}
q_1[1]&=\left[\frac{\alpha_1}{2}-(\lambda_1-\lambda_1^*)\frac{\phi_1\theta_1^*+\psi_1^*\chi_1}{|\theta_1|^2+|\chi_1|^2+\kappa|\phi_1|^2+\kappa|\psi_1|^2}\right]\ee^{\theta_1},  \\
q_2[1]&=\left[\frac{\alpha_2}{2}-(\lambda_1-\lambda_1^*)\frac{\phi_1\chi_1^*-\psi_1^*\theta_1}{|\theta_1|^2+|\chi_1|^2+\kappa|\phi_1|^2+\kappa|\psi_1|^2}\right]\ee^{\theta_2},  \\
\rho[1]&=\frac{\delta}{2}-2 \ln_{y,s}\left(\frac{|\theta_1|^2+|\chi_1|^2+\kappa|\phi_1|^2+\kappa|\psi_1|^2}{\lambda_1-\lambda_1^*}\right),\\
x&=\frac{\delta}{2}y-\frac{\kappa}{8}(\alpha_1^2+\alpha_2^2)s-2 \ln_{s}\left(\frac{|\theta_1|^2+|\chi_1|^2+\kappa|\phi_1|^2+\kappa|\psi_1|^2}{\lambda_1-\lambda_1^*}\right),\,\,\,\, t=-s. \\
\end{split}
\end{equation}
{Fig. \ref{ccsp-rw} and Fig. \ref{dccsp-rw} display the rogue wave solutions for the focusing and defocusing CCSP equations respectively.}
\begin{figure}[tbh]
\centering
\includegraphics[height=50mm,width=140mm]{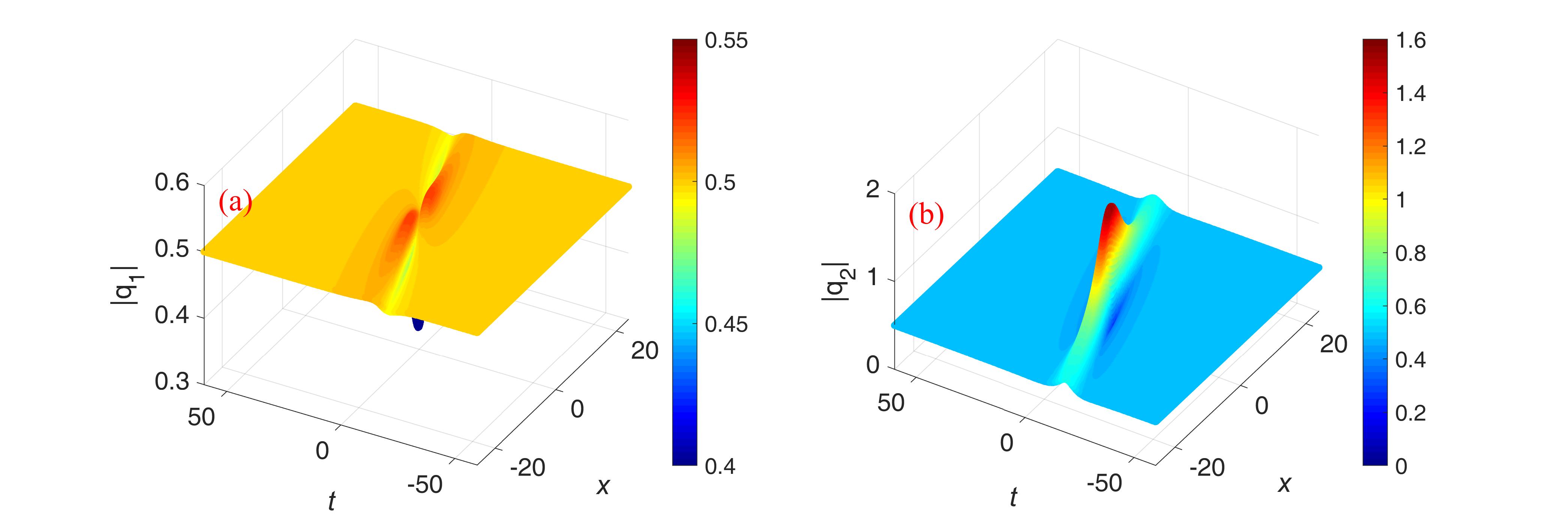}
\caption{(color on line) The rogue waves of the focusing CCSP equation. Parameters: $\delta=1$, $\alpha_1=\alpha_2=1$, $\beta_1=1$, $\beta_2=2$, $\lambda_1\approx -1.772-1.172\ii$, $\xi_1\approx-0.960-0.116\ii$}
\label{ccsp-rw}
\end{figure}
\begin{figure}[tbh]
\centering
\includegraphics[height=50mm,width=140mm]{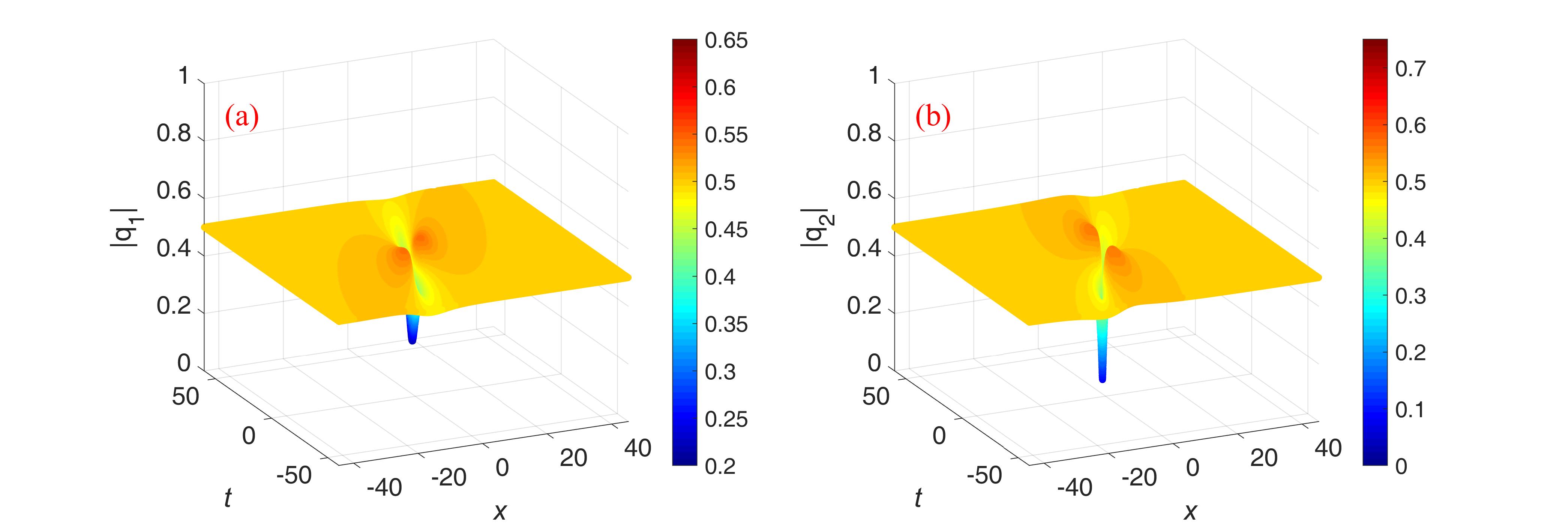}
\caption{(color on line) The rogue waves of the de-focusing CCSP equation. Parameters: Parameters: $\delta=1$, $\alpha_1=\alpha_2=1$, $\beta_1=1$, $\beta_2=2$, $\lambda_1\approx 1.705+0.115\ii$, $\xi_1\approx1.465+0.597\ii$}
\label{dccsp-rw}
\end{figure}

\subsection{Other localized wave solutions on a plane wave background}
Finally, we outline how one can construct additional exact solutions from the Darboux transformation.
\begin{description}
  \item[(1): $\alpha_1=0$, $\alpha_2\neq 0$ and $\kappa=1$]  Degenerate breather solutions, bright-dark solitons, degenerate rogue wave solutions and their combinations. For degenerate solutions we mean that solutions can be seen as the solutions of the scalar complex short-pulse equation. Combination means the nonlinear superpositions of different types of solutions. Choosing the special combinations of solutions in \eqref{eq:dege-vector} with the form:
      \begin{equation}\label{eq:ccsp-dege-b}
      \Phi_1(y,s;\lambda_1)=\Gamma_1(y,s;\lambda_1)+\gamma_1 \Gamma_2(y,s;\lambda_1),
      \end{equation}
      and inserting the vector solutions \eqref{eq:ccsp-dege-b} into formulas \eqref{eq:backlund}, we obtain that the solution $q_1$ is still a zero solution and $q_2$ is a breather solution. Especially, if $\lambda_1=\beta_2+\ii\alpha_2$ with the special limit technique as in \cite{guo12}, then $q_2$ can be shown to be the rogue wave solution.

      Choosing the special combinations of solutions in \eqref{eq:dege-vector} in the form:
      \begin{equation}\label{eq:ccsp-dege-bd}
      \Phi_1(y,s;\lambda_1)=\Gamma_1(y,s;\lambda_1)+\gamma_1 \Gamma_3(y,s;\lambda_1),
      \end{equation}
      and inserting the vector solutions \eqref{eq:ccsp-dege-b} into formulas \eqref{eq:backlund}, we obtain that the solution $q_1$ is a bright-soliton solution, and $q_2$ is a dark-soliton solution (Fig. \ref{ccsp-bd}).
      The multi-ones and higher order-ones can be constructed through the general formulas \eqref{eq:general-dt}.
\begin{figure}[tbh]
\centering
\includegraphics[height=50mm,width=140mm]{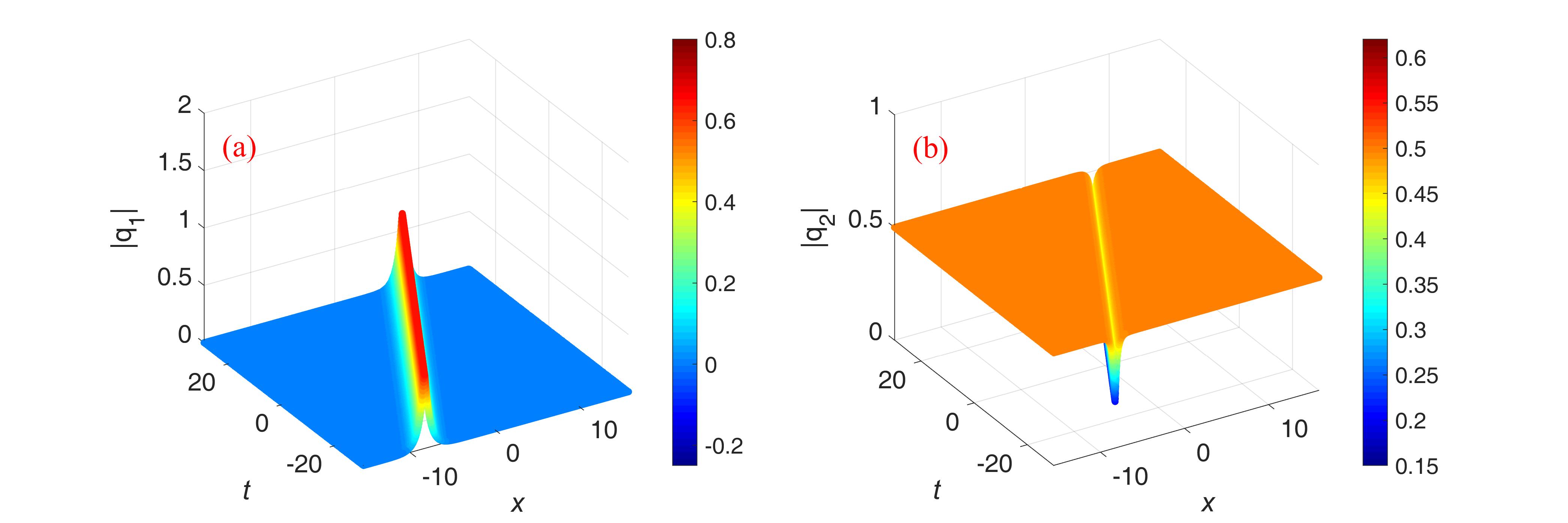}
\caption{(color on line) A bright-dark soliton of the focusing CCSP equation. Parameters: $\alpha_1=0$, $\alpha_2=1$, $\beta_1=1$, $\beta_2=-1$, $\delta=1$, $\lambda_1=-\frac{3}{4}+\frac{3\ii}{4}$, $\xi_1=\frac{1}{4}-\frac{\ii}{4}$, $\xi_3=-1-\frac{3\sqrt{7}}{4}+\frac{\ii \sqrt{7}}{4}$, $\gamma_1=1$.}
\label{ccsp-bd}
\end{figure}
  \item[(2): $\alpha_1=0$, $\alpha_2\neq 0$ and $\kappa=-1$] Degenerate dark solitons and bright-dark solitons and combinations thereof.
      Choosing the special combinations of solutions in \eqref{eq:dege-vector} with the form:
      \begin{equation}\label{eq:dcsp-dege-bd}
      \Phi_1(y,s;\lambda_1)=\Gamma_1(y,s;\lambda_1)+\gamma_1 \Gamma_3(y,s;\lambda_1),
      \end{equation}
      and inserting the vector solutions \eqref{eq:ccsp-dege-b} into formulas \eqref{eq:backlund}, we obtain that the solution $q_1$ is a bright-soliton solution, and $q_2$ is a dark-soliton solution (Fig. \ref{dccsp-bd}). The bright-dark solitons and their interactions were given by the Hirota bilinear method in \cite{GuoW-16}.
\begin{figure}[tbh]
\centering
\includegraphics[height=50mm,width=140mm]{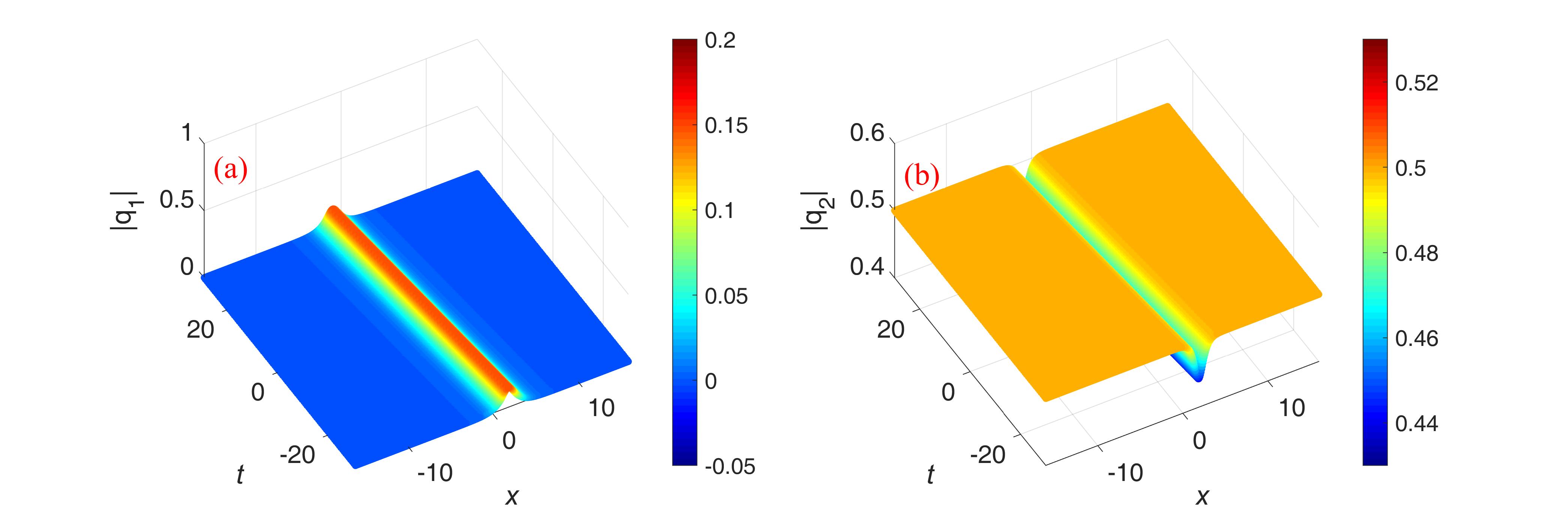}
\caption{A bright-dark soliton of the de-focusing CCSP equation. Parameters: $\alpha_1=0$, $\alpha_2=1$, $\beta_1=1$, $\beta_2=-1$, $\delta=1$, $\lambda_1=-\frac{1}{4}+\frac{\ii}{4}$, $\xi_1=\frac{3}{4}-\frac{3\ii}{4}$, $\xi_3=-1+\frac{\sqrt{\sqrt{41}+4}}{4}-\frac{\ii\sqrt{\sqrt{41}-4}}{4}$, $\gamma_1=1$.}
\label{dccsp-bd}
\end{figure}

      The degenerate dark soliton can be constructed as shown in equation \eqref{eq:dt-dark}:
      \begin{equation}\label{eq:dege-dark-soliton}
\begin{split}
q_1[1]&=0,  \,\,\,\,\,\,
q_2[1]=\frac{\alpha_2}{2}\left[1+\frac{\xi_1-\xi_1^{*}}{\lambda_1+\xi_1^{*}-\beta_1-\beta_2}\frac{E}{\Delta}\right]\ee^{\theta_2},  \\
\rho[1]&=\frac{\delta}{2}-2 \ln_{y,s}\left(\Delta\right)\geq 0,\\
x&=\frac{\delta}{2}y+\frac{\alpha_2^2}{8}s-2 \ln_{s}\left(\Delta\right),\,\,\,\, t=-s, \\
\Delta&=(E+1),\,\,\,\,\, E=\ee^{\frac{\ii}{4}\xi_1\left(s+\frac{\delta \xi_1y}{\beta_1\beta_2\lambda_1}\right)-
      \frac{\ii}{4}\xi_1^{*}\left(s+\frac{\delta \xi_1^{*}y}{\beta_1\beta_2\lambda_1}\right)+\gamma_1}\\
\end{split}
\end{equation}
where $\gamma_1$ is a real constant, $\xi_1=\beta_1+\ii \sqrt{\alpha_2^2-(\beta_2-\lambda_1)^2}$, $|\beta_2-\lambda_1|<\alpha_2.$
  \item[(3): $\alpha_1\alpha_2\neq 0$ and $\kappa=1$] Breather solutions, rogue wave solution and combinations thereof. Choosing the special solutions \eqref{eq:vector}:
      \begin{equation}\label{eq:breather}
      \Phi_1(y,s;\lambda_1)=\Gamma_k(y,s;\lambda_1)+\gamma_1\Gamma_j(y,s;\lambda_1),\,\,\,\, k\neq j,
      \end{equation}
      and inserting the vector solutions \eqref{eq:breather} into formulas \eqref{eq:backlund}, we obtain the breather solutions (Fig. \ref{ccsp-breathers}). If we choose three or more vectors, resonant breathers are obtained.
 \begin{figure}[tbh]
\centering
\includegraphics[height=50mm,width=140mm]{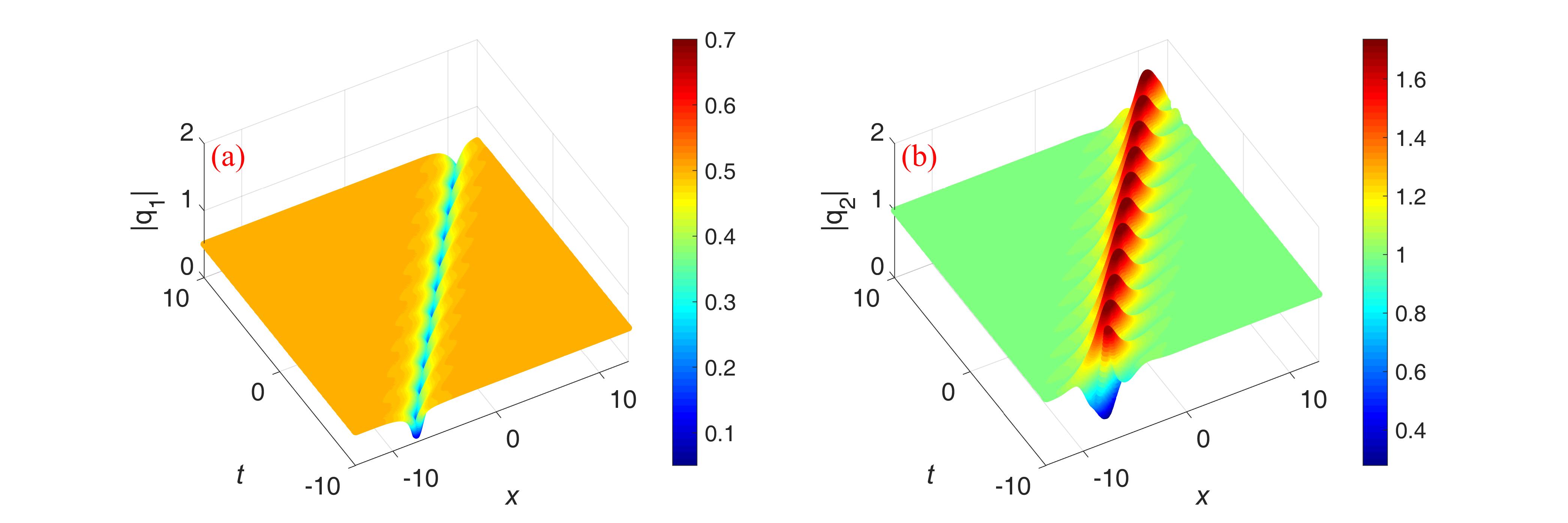}
\caption{Breathers of the focusing CCSP equation. Parameters: $\alpha_1=1$, $\alpha_2=2$, $\beta_1=1$, $\beta_2=8$, $\delta=1$, $\lambda_1=1+\ii$, $\xi_1\approx8.423-0.162\ii$, $\xi_2\approx-6.090+0.876\ii$, $\gamma_1=1$.}
\label{ccsp-breathers}
\end{figure}
      The rogue waves are given in the above subsection. The higher order rogue waves can be obtained through a similar procedure as in \cite{LingZ19}.
  \item[(4): $\alpha_1\alpha_2\neq 0$ and $\kappa=-1$] Breather solutions, dark-dark solitons \cite{LingZG15}, rogue wave solutions and combinations thereof. In the defocusing case, the breather solutions (Fig. \ref{dccsp-breathers}) can be obtained by inserting the vector solutions \eqref{eq:breather} into formulas \eqref{eq:backlund}. But the imaginary part of the roots $\xi_1^{[k]}$ and $\xi_1^{[l]}$ should have the same sign to keep the Darboux matrix positive or negative definite.
\begin{figure}[tbh]
\centering
\includegraphics[height=50mm,width=140mm]{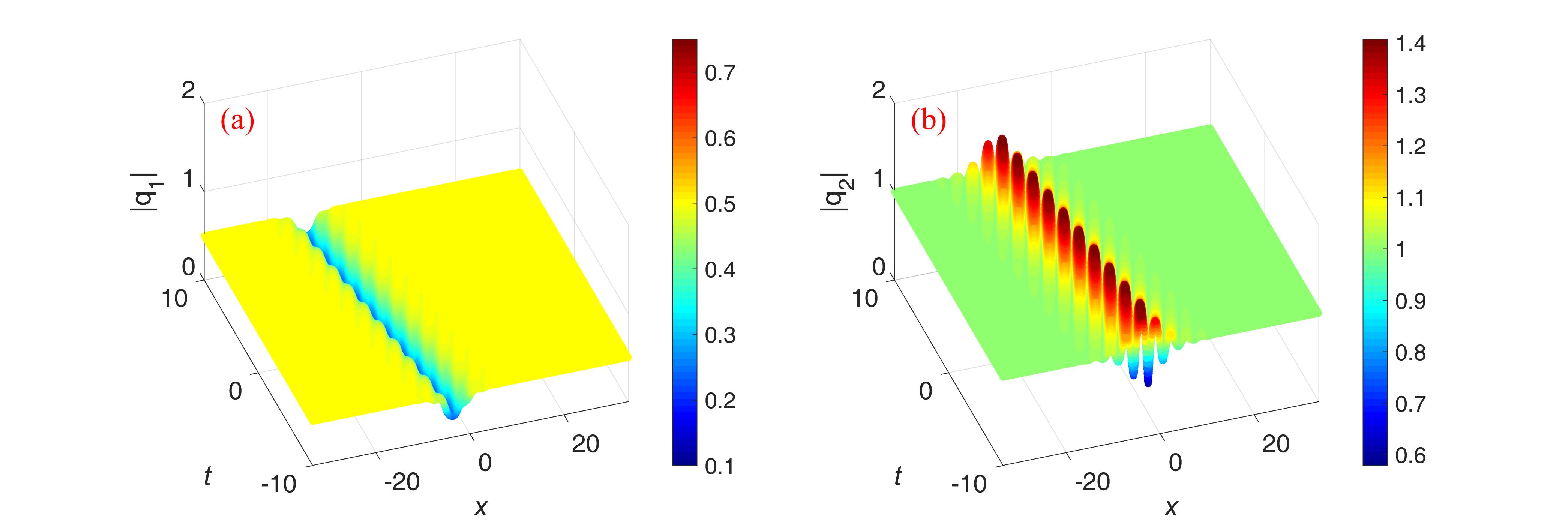}
\caption{Breathers of the defocusing CCSP equation. Parameters: $\alpha_1=1$, $\alpha_2=2$, $\beta_1=1$, $\beta_2=8$, $\delta=1$, $\lambda_1=1+\ii$, $\xi_1\approx7.772+1.452\ii$, $\xi_2\approx-5.929+1.1599\ii$, $\gamma_1=1$.}
\label{dccsp-breathers}
\end{figure}      
If $\lambda_1\in \mathbb{R}$, and there is a pair of complex roots $\xi_1^{[1]}=\xi_1^{[2]*}\equiv \xi_1$ in the characteristic equation \eqref{eq:chara-eq-1},
 \begin{equation}
      \lim_{\lambda\to\lambda_1}\frac{\Gamma_1^{\dag}(y,s;\lambda_1)\Sigma_3 \Gamma_1(y,s;\lambda)}{\lambda-\lambda_1}=\frac{2\left(|V_{1,1}(\lambda_1,\xi_1)|^2+|V_{1,2}(\lambda_1,\xi_1)|^2\right)}{\xi_1-\xi_1^*}\ee^{\frac{\ii}{4}(\xi_1-\xi_1^*)\left(s+\frac{\delta (\xi_1+\xi_1^*)y}{\beta_1\beta_2\lambda_1}\right)}
 \end{equation}
then the dark soliton can be constructed as shown in equation \eqref{eq:dt-dark}:
      \begin{equation}\label{eq:dark-soliton}
\begin{split}
q_1[1]&=\left[\frac{\alpha_1}{2}-\frac{(V_{1,3}V_{1,1}^*+V_{1,4}^*V_{1,2})E}{\Delta}\right]\ee^{\theta_1},  \,\,\,\,\,\,
q_2[1]=\left[\frac{\alpha_2}{2}-\frac{(V_{1,3}V_{1,2}^*-V_{1,4}^*V_{1,1})E}{\Delta}\right]\ee^{\theta_2},  \\
\rho[1]&=\frac{\delta}{2}-2 \ln_{y,s}\left(\Delta\right)\geq 0,\\
x&=\frac{\delta}{2}y+\frac{1}{8}(\alpha_1^2+\alpha_2^2)s-2 \ln_{s}\left(\Delta\right),\,\,\,\, t=-s, \\
\Delta&=\frac{2\left(\left|V_{1,1}\right|^2+\left|V_{1,2}\right|^2\right)(E+1)}{\xi_1-\xi_1^{*}},\,\,\,\,\, E=\ee^{\frac{\ii}{4}(\xi_1-\xi_1^*)\left(s+\frac{\delta (\xi_1+\xi_1^*)y}{\beta_1\beta_2\lambda_1}\right)+\gamma_1}\\
\end{split}
\end{equation}
where $\gamma_1$ is a real constant, $V_{1,i}$ represents the $i-$th component of $\mathbf{V}_1$ \eqref{eq:vector}. We plot the solution in Fig. \ref{dccsp-dd} for a special choice of the parameters.

\begin{figure}[tbh]
\centering
\includegraphics[height=50mm,width=140mm]{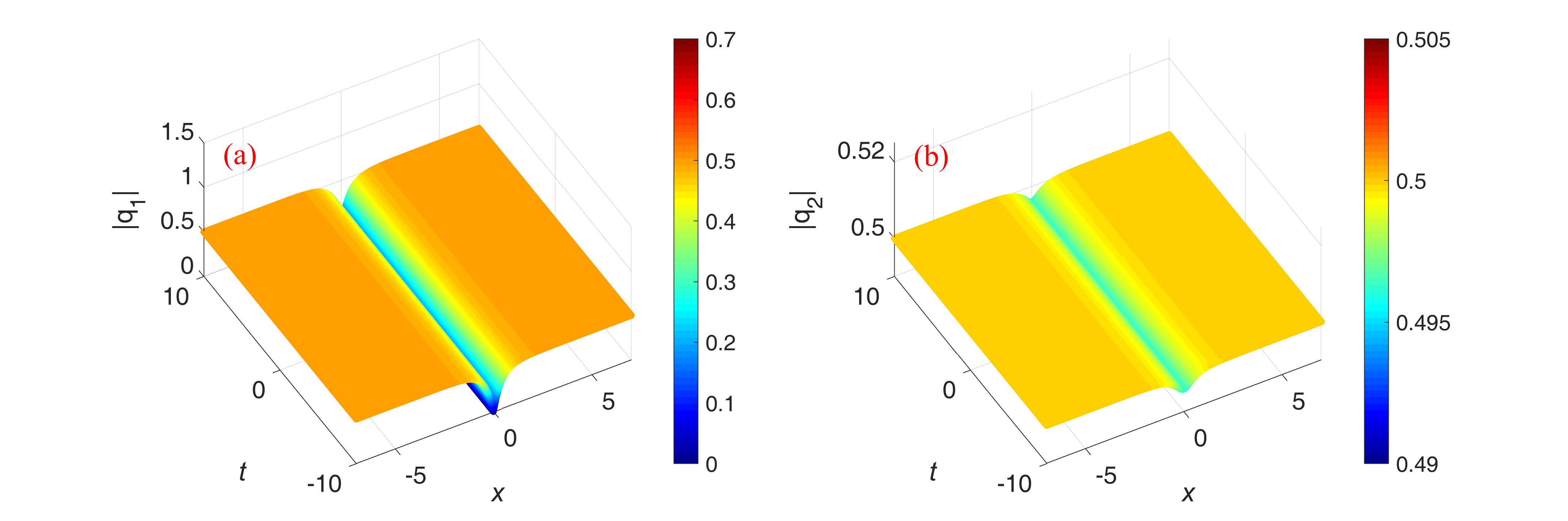}
\caption{A dark-dark soliton of the defocusing CCSP equation. Parameters: $\alpha_1=1$, $\alpha_2=2$, $\beta_1=1$, $\beta_2=8$, $\delta=1$, $\lambda_1=1$, $\xi_1\approx7.937270056+1.007999195\ii$, $\gamma_1=0$.}
\label{dccsp-dd}
\end{figure}
\end{description}

\section{Discussions and conclusions}\label{sec5}
We have constructed various localized wave solutions to the CCSP equation \eqref{eq:ccsp} by the Darboux transformation. For the focusing CCSP equation with VBC, by the scattering and inverse scattering analysis, the soliton solutions are characterized from the viewpoint of spectrum. The elementary solitons are classified into  three types: bright soliton, ${SU}(2)$ soliton, the breather and double-hump soliton.  Moreover, higher order soliton solutions are constructed.
For NVBC, we have developed a new method to analyze the modulational instability. In this way, we firstly explain the mechanism for the formation of rogue waves of two-component complex coupled dispersionless (TCCD) equation. We find that for the focusing TCCD equation, there exist two types of rogue waves which correspond to two distinct branches of MI; for the defocusing TCCD equation, there exists a type of rogue wave. This is similar to the cases of focusing and defocusing NLS equation. Since the rogue waves of the TCCD equation tend to a plane wave solution when $y,s\to \pm\infty$, rogue waves of the CCSP equation tend to a plane wave solution when $x,t\to \pm\infty$ by virtue of the inverse hodograph transformation. Then the CCSP equations exhibits three types of rogue waves: the regular, the cuspon and the loop ones. Since the parameters of the rogue waves depend on the roots of an eighth order and fourth order algebraic equation, the exact conditions for distinguishing three types of rogue waves can be only be obtained numerically.
For the mixed CCSP equation, the soliton and rogue wave solutions were constructed by the Hirota bilinear method \cite{YangZ-18}. The long time asymptotics for CSP equation can be performed by the Deift-Zhou method \cite{LiTY-21,LiuG-21,XuF-20}, so it is hopeful to consider the long time asymptotics for the CCSP equation.

\section*{Acknowledgments}
BF's work is
partially supported by National Science Foundation (NSF) under Grant No. DMS-1715991 and U.S. Department of Defense (DoD), Air Force for Scientific
Research (AFOSR) under grant No. W911NF 2010276.
LL's work is supported by National Natural Science Foundation of China (Contact Nos. 11771151, 12122105), Guangzhou Science and Technology Program (No. 201904010362). The authors appreciate Professor P.D. Miller for the useful discussion on the formal inverse scattering analysis.  The authors also thank Professor Prinari for her careful reading on the original manuscript and useful suggestions.

\end{document}